\newcommand{\ml}{\left(}
\newcommand{\mr}{\right)}
\newcommand\ii{\mathrm{i}}
\DeclareMathOperator\re{Re}
\DeclareMathOperator\im{Im}
\DeclareMathOperator*{\argmin}{arg\,min}
\newcommand{\SU}[1]{\operatorname{SU}(#1)}
\newcommand{\SL}[1]{\operatorname{SL}(#1,\mathbb{C})}
\newcommand{\tr}{\operatorname{tr}}
\newtheorem{theorem}{Theorem}[section]
\newtheorem{lemma}[theorem]{Lemma}
\newtheorem{proposition}[theorem]{Proposition}
\theoremstyle{remark}
\newtheorem*{remark}{Remark}
\title{Regularization of Complex Langevin Method}
\author{Zhenning Cai}
\address[Zhenning Cai]{Department of Mathematics, National University of Singapore,
  Level 4, Block S17, 10 Lower Kent Ridge Road, Singapore 119076}
\email{matcz@nus.edu.sg}
\author{Yang Kuang}
\address[Yang Kuang]{School of Mathematics and Statistics, Guangdong University of Technology,
  Guangzhou 510006, Guangdong, China}
\email{ykuang@gdut.edu.cn}
\author{Hong Kiat Tan}
\address[Hong Kiat Tan]{Department of Mathematics, University of California, Los Angeles, 520 Portola Plaza, Los Angeles, CA 90095, USA}
\email{maxtanhk@math.ucla.edu}
\thanks{Zhenning Cai was supported by the Academic Research Fund of the Ministry of Education of
Singapore under grant No. R-146-000-291-114.}
\keywords{Complex Langevin method, numerical sign problem, lattice field theory}
\begin{document}

\maketitle

\begin{abstract}
The complex Langevin method, a numerical method used to compute the ensemble average with a complex partition function, often suffers from runaway instability.
We study the regularization of the complex Langevin method via augmenting the action with a stabilization term.
Since the regularization introduces biases to the numerical result, two approaches, named $2R$ and $3R$ methods, are introduced to recover the unbiased result.
The $2R$ method supplements the regularization with regression to estimate the unregularized ensemble average, and the $3R$ method reduces the computational cost by coupling the regularization with a reweighting strategy before regression.
Both methods can be generalized to the $\SU{n}$ theory and are assessed from several perspectives.
Several numerical experiments in the lattice field theory are carried out to show the effectiveness of our approaches.

\smallskip
\noindent \textbf{Keywords.} Complex Langevin method, regularization, lattice field theory
\end{abstract}

\section{Introduction}
The complex Langevin method is a numerical approach used to circumvent the numerical sign problem arising in the computation of ensemble averages with complex Boltzmann weights.
Such issues may appear in the real-time quantum field theories \cite{berges2007lattice, alexandru2016monte}, coupled quantum systems with chemical potentials  such as the Hubbard model \cite{white1989numerical, wynen2021machine} and the quantum chromodynamics at finite density \cite{muroya2003lattice, aarts2008stochastic}, and also the superstring theory \cite{anagnostopoulos2018complex, nishimura2019complex}.
In these applications, one usually encounters strong oscillations in high-dimensional functions, leading to significant cancellations when integrating the functions.
As a result, the classical Monte Carlo
method fails to work as the variance is large compared to the mean value, and such difficulty is known
as the numerical sign problem \cite{loh1990sign}.

The complex Langevin method, introduced in \cite{klauder1983langevin,parisi1983complex}, tries to tame the numerical sign problem by using a straightforward extension of a classical sampling method called the Langevin method.
The extension allows the samples to take complex values due to the complex Boltzmann weights.
Unfortunately, the application of the complex Langevin method had been severely limited for a long time due to one of its major drawbacks: the method often diverges or converges to incorrect solutions \cite{ambjorn1986complex}.
The justification of the method and the understanding of its failure were explored in several works \cite{gausterer1998complex,aarts2010complex, nagata2016argument, salcedo2016does}, but the precise reason for the biased results remained unclear until recently \cite{scherzer2019clandboundary, seiler2020complex}.
In general, the failure of the complex Langevin method is due to the lack of control of excursions away from the real axis. Many efforts have been made in the past decade to strict such excursions.
For instance, the use of adaptive time steps is studied in \cite{aarts2010adaptive} to avoid runaway trajectories; the method gauge cooling, which utilizes the gauge invariance to minimize the distances to the real axis, is proposed in \cite{seiler2013gauge} and has achieved many applications \cite{aarts2014simulating, Kogut2019applying, hirasawa2020complex}; the dynamical stabilization is introduced in \cite{Attanasio2019dynamical} and tested in \cite{attanasio2017}.
In the case where the method converges, the work \cite{scherzer2020controlling} proposes an approximation technique to quantify the bias.
Other attempts to improve the complex Langevin method include the coupling with Lefschetz thimbles \cite{nishimura2017combining}, the deformation technique \cite{nagata2018complex}, etc. We invite the readers to refer to \cite{berger2020complex, attanasio2020complex} for a comprehensive review of the recent advances.

In general, the complex Langevin method is still a numerical tool under construction. In this paper, we are going to carry out a deeper study of the aforementioned  dynamical stabilization.
The idea of dynamical stabilization is to add converging velocity fields to the complex Langevin equation to restrict the excursion of samples.
Instead of working on the original approach introduced in \cite{Attanasio2019dynamical}, we will investigate a slightly improved version considered in \cite{loheac2017third}, called the method of modified action.
Here we will follow \cite{scherzer2019clandboundary} and name this approach as the  ``regularization'' of the complex Langevin method.
Compared with the original approach,  this method is easier to be justified theoretically.
Our focus will be on possible modifications upon regularization. These include
coupling regularization with the reweighted complex Langevin method \cite{jacques2017reweight, bloch2017reweighted} and attempts to recover the unbiased result using regression.
Our study is to be carried out via a deep look into a motivating example in the $\operatorname{U}(1)$ one-link case, after which the method will be generalized to the $\SU{n}$ theories and applied to several lattice field theories, including the 3D XY model \cite{aarts2010convergence}, the Polyakov model \cite{Polyakov1978thermal}, and the heavy dense QCD (quantum chromodynamics) \cite{seiler2013gauge}.

The rest of the paper is organized as follows. In \Cref{sec:CL}, we briefly review the complex Langevin method and its general theory. In \Cref{sec:regU1}, the $2R$ method and $3R$ method are presented for a motivating one-dimensional example in the $\operatorname{U}(1)$ one-link case. Then these methods are generalized to multi-dimensional integrals in the $\operatorname{U}(1)$ theory in \Cref{sec:GenU1} and to the $\SU{n}$ theories in \Cref{sec:RegSUN}. In \Cref{sec:QCD}, the applications of these methods to several lattice filed theories are discussed.  Finally, the paper ends with some concluding remarks in \Cref{sec:con}.

\section{A review of the complex Langevin method} \label{sec:CL}

Following \cite{seiler2013gauge}, we use the notation $\{\cdot\}$ to denote the discrete field defined on a lattice.
For instance, suppose $\{\phi\}$ is a three-dimensional real scalar lattice field.
Then, $\{\phi\}$ contains a set of variables $\phi_x \in \mathbb{R}$ with $x$ being a three-dimensional multi-index representing the lattice point.
If the lattice has $N$ points, $\{\phi\}$ is essentially a vector in $\mathbb{R}^N$.
For simplicity, we will also use $\phi_k$, $k=1,\ldots,N$ to denote the components of $\{\phi\}$.
In this section, we will provide a brief introduction to the complex Langevin method and its regularization.
For introductory purposes, we will temporarily restrict ourselves to the scalar fields where $\phi_k \in \mathbb{R}$ or $\mathbb{T}$, where $\mathbb{T}$ stands for the torus $\mathbb{T} = \mathbb{R} \bmod 2\pi$.

With the notations defined above, we are interested in computing the following ensemble average:
\begin{equation}\label{eq:integral}
  \left\langle O \right\rangle = \frac{1}{Z} \int_{\Omega}
  O(\{\phi\})e^{-S(\{\phi\})} \,\mathrm{d}\{\phi\}, \quad Z = \int_{\Omega} e^{-S(\{\phi\})} \,\mathrm{d}\{\phi\},
\end{equation}
where $\Omega = \mathbb{R}^N$ or $\mathbb{T}^N$.
When $S(\{\phi\})$ is real, we can regard $Z$ as the partition function, so that the integral can be evaluated by the Langevin method \cite{parisi1981perturbation}. Specifically, the Langevin equation associated with \eqref{eq:integral} is given by
\begin{equation} \label{eq:real-Langevin}
\mathrm{d}\phi_k = K_k(\{\phi\}) \,\mathrm{d}t + \mathrm{d}w_k, \quad K_k = -\frac{\partial S}{\partial \phi_k}, \qquad k = 1,\ldots,N.
\end{equation}
Here $w_k$, $k = 1,\ldots,N$ are independent Wiener processes satisfying $\mathrm{d}w_k^2 = 2\mathrm{d}t$ for each $k$.
The Fokker-Planck equation of this stochastic process is
\begin{equation}\label{eq:real-FP}
\frac{\partial P}{\partial t} + \sum_{k=1}^N \frac{\partial}{\partial \phi_k} ( K_k P) =
  \sum_{k=1}^N \frac{\partial^2 P}{\partial \phi_k^2},
\end{equation}
where $P(\{\phi\},t)$ represents the probability distribution of the field $\{\phi\}$ at time $t$.
If $e^{-S({\phi})}$ is integrable and the stochastic process \eqref{eq:real-Langevin} is ergodic, then $P(\{\phi\},t)$ will converge to the equilibrium distribution $\frac{1}{Z} e^{-S(\{\phi\})}$ as $t \rightarrow \infty$. As a result, we can approximate \eqref{eq:integral} by
\begin{equation} \label{eq:approx_O}
\langle O \rangle \approx \frac{1}{N_{\mathrm{sample}}} \sum_{m=1}^{N_{\mathrm{sample}}} O(\{\Phi^{(m)}\}),
\end{equation}
where $\{\Phi^{(m)}\}$, $m = 1,\ldots, N_{\mathrm{sample}}$ are the samples generated by simulating the Langevin equation \eqref{eq:real-Langevin} and choosing $\Phi_k^{(m)} = \phi_k(T + m \Delta T)$ for a sufficiently large $T$ and sufficiently long time difference $\Delta T$.

However, when $S(\{\phi\})$ is complex, the Langevin method is no longer valid since $Z$ is not a partition function. To handle such complex actions, the complex Langevin method \cite{parisi1983complex, klauder1983stochastic} postulates stochastic equations of the same form as \eqref{eq:real-Langevin} with the trajectories of the process wandering in the complexified space $\Omega_{\mathbb{C}}$.
Here
\begin{equation}
\Omega_{\mathbb{C}} = \begin{cases}
  \mathbb{C}^N, & \text{if } \Omega = \mathbb{R}^N, \\
  (\mathbb{T} + \ii \mathbb{R})^N, & \text{if } \Omega = \mathbb{T}^N.
\end{cases}
\end{equation}
Now we assume that both $O(\cdot)$ and $S(\cdot)$ can be extended to $\Omega_{\mathbb{C}}$ holomorphically.
Thus, the stochastic process \eqref{eq:real-Langevin} is again well defined, and the complex Langevin method again approximates $\langle O \rangle$ using \eqref{eq:approx_O}. Since $K_k(\cdot)$ can take complex values, the field $\{\phi\}$ becomes a complex field, which can also be represented by two real fields $\{\phi^R\}$ and $\{\phi^I\}$ with $\phi_k = \phi_k^R + \ii \phi_k^I$.
The evolution of these two fields follows the \emph{complex Langevin equation}:
\begin{equation}\label{eq:complex-Langevin}
  \left\{
    \begin{array}{@{}lll}
      \mathrm{d}\phi_k^R = K_k^R(\{\phi^R\}, \{\phi^I\}) \,\mathrm{d}t + \mathrm{d}w_k,& K_k = -\re K_k, \\[6pt]
      \mathrm{d}\phi_k^I = K_k^I(\{\phi^R\},\{\phi^I\}) \,\mathrm{d}t,& K_k^I = -\im K_k,
    \end{array}
  \right.
\end{equation}
where $K_k$ is again the partial derivative of $S$ as defined in \eqref{eq:real-Langevin}.
The corresponding Fokker-Planck equation for the probability density function has the form $P(\{\phi^R\}, \{\phi^I\}, t)$, which evolves according to  
\begin{equation}\label{eq:complex-FP}
    \frac{\partial P}{\partial t} + \sum_{k=1}^N \left( \frac{\partial}{\partial \phi_k^R} (K_k^R P) +
    \frac{\partial}{\partial \phi_k^I} (K_k^I P) \right)
    = \sum_{k=1}^N \frac{\partial^2 P}{\partial (\phi_k^R)^2}.
\end{equation}

The correctness of the complex Langevin method requires the following two conditions:
\begin{itemize}
\item The stochastic process \eqref{eq:complex-Langevin} is ergodic.
\item The following equation holds:
\begin{equation} \label{eq:1d-justification}
\lim_{t \rightarrow \infty} \int_{\Omega_{\mathbb{C}}} O(\{\phi\}) P(\{\phi^R\}, \{\phi^I\}, t) \mathrm{d}\{\phi\} = \frac{1}{Z} \int_{\Omega} O(\{\phi\}) e^{-S(\{\phi\})} \mathrm{d}\{\phi\}.
\end{equation}
\end{itemize}
As mentioned in the introduction, observations indicate that when the stochastic process is ergodic, the complex Langevin method still produces wrong results, meaning that \eqref{eq:1d-justification} fails to hold. This occurs especially when $P(\cdot, \cdot, \infty)$ decays slowly, and the details have been studied in \cite{aarts2013localised, scherzer2019clandboundary, cai2021validity}. As a remedy, the method of \emph{dynamical stabilization} proposed in \cite{Attanasio2019dynamical} adds an artificial term to the imaginary part of the drift velocity $K_k$ to suppress the tail of $P$. Specifically, in \eqref{eq:complex-Langevin}, $K_k$ is chosen as
\begin{equation} \label{eq:sun-ds}
K_k(\{\phi\}) = -\partial_{\phi_k} S(\{\phi\}) - \ii \alpha_{DS} (\im \phi_k)^r, \qquad \forall k = 1,\cdots,N.
\end{equation}
where $r$ is an odd positive integer and $\alpha_{DS}$ is a positive parameter balancing the stabilizing effect and the bias introduced by this regularization. Since \eqref{eq:sun-ds} is no longer the derivative of an analytic function, the justification of this approach remains open.

In this paper, we will focus on another type of regularization introduced in \cite{scherzer2019clandboundary}, in which a specific problem is studied.
In the next section, we will conduct a deeper study of the regularization technique based on this motivating example and consider its possible extensions.

\section{Motivating Example - Regularization of complex Langevin for the   \texorpdfstring{$\operatorname{U}(1)$}{U(1)} one-link model}\label{sec:regU1}

We will motivate the use of regularization and dynamic stabilization by applying our proposed method on the one-dimensional $\operatorname{U}(1)$ one-link model studied in \cite{berges2008real, scherzer2019clandboundary}, where $\Omega = \mathbb{T}$.
Following \cite{scherzer2019clandboundary}, we use $x$ to denote the integral variable. The action is a $2\pi$-periodic function:
\begin{displaymath}
S(x) = \ii \beta \cos x,
\end{displaymath}
with $\beta \in \mathbb{R}^+$, and $x \in \mathbb{T}$.

\subsection{Regularization of complex Langevin}
Inspired from \cite{scherzer2019clandboundary}, the regularized action for this model is given by
\begin{equation}\label{U(1)3}
S_s(x) = S(x) + \frac{sx^2}{2} = \ii \beta \cos x + \frac{sx^2}{2}
\end{equation}
for any given $s > 0$. As discussed in \cite{scherzer2019clandboundary}, for large values of $s$, the value of $\langle O \rangle$ agrees with its corresponding true value with a modified action. However, a divergence is observed for values of $s$ close to $0.4$ as we attempt to set $s$ close to $0$ to retrieve the true value of $\langle O \rangle$ with unregularized action. The results are also reproduced in \Cref{fig:U(1)NumFailFig}. The authors have commented that the use of appropriate regression functions might have by extrapolating the results from $s > 0.4$ to obtain a decent estimate at $s = 0$. We will thus be following a similar argument while supplementing it with relevant regression functions with the appropriate mathematical justification. 

Note that due to the presence of the regularizing term, the periodicity for $S(x)$ in $x$ is destroyed.
Therefore, in the definition of the observable, we will ``unroll'' the torus $\mathbb{T}$ and change the integral domain to $\mathbb{R}$. Thus, under the modified action, we can rewrite equations \eqref{eq:integral} as
\begin{equation}\label{U(1)1}
\langle O \rangle_s = \frac{1}{Z_s} \int_{\mathbb{R}} O(x) \exp(-S_s(x)) \,\mathrm{d}x,
\end{equation}
with 
\begin{equation}\label{U(1)2}
Z_s = \int_{\mathbb{R}} \exp(-S_s(x)) \,\mathrm{d}x
\end{equation}

Upon complexification, we obtain the complex action
\begin{equation}\label{U(1)4}
S_s(z) = S(z) + \frac{sz^2}{2}, \quad z \in \mathbb{C}.
\end{equation}
The corresponding drift terms can be computed as follows:\footnote{Following the convention in \eqref{eq:real-Langevin} and in \eqref{eq:3dxy-drift}, we will represent $K_{x,s}$ as the drift term for the scalar field $\phi_x$ with regularized action. However, as in the one-dimensional case, as it is understood that we are only dealing with one field variable (which is written as $x$), we will drop the comma that separates $x$ and $s$ and simply write it as $K_s$. } 
\begin{equation}\label{U(1)5}
\begin{aligned}
K_{s}^R(x,y) &= - \re{(S_s'(x+ \ii y))} = -\beta \cos x \sinh y - sx, \\
K_{s}^I(x,y) &= - \im{(S_s'(x+ \ii y))} = \beta \sin x \cosh y - sy.
\end{aligned}
\end{equation}

As for the regularized action, two questions need to be answered:
\begin{itemize}
\item What is the relation between the regularized observable $\langle O \rangle_s$ and the original observable $\langle O \rangle$?
\item Can we apply the complex Langevin method to obtain the correct value of $\langle O \rangle_s$?
\end{itemize}
The following two sections will be devoted to the exploration of their answers.

\subsubsection{Correct convergence under regularized action}

Note that is natural to expect that the regularized observable will converge to the original observable as the regularizing parameter vanishes:
\begin{equation}\label{U(1)6}
\lim_{s\rightarrow 0^+} \langle O \rangle_s = \langle O \rangle.
\end{equation}
However, this is not immediately clear since we have changed the integration domain from $[0,2\pi)$ to $\mathbb{R}$ as we apply the regularization. Fortunately, the result above still holds under some mild conditions. To prove the limit \eqref{U(1)6}, we need the following lemma:
\begin{lemma} \label{LemmaIntegral}
For any $m, \beta \in \mathbb{R}$,
\begin{displaymath}
\int_{\mathbb{R}} e^{\ii mx}e^{-\ii \beta \cos x -s\frac{x^2}{2}} \mathrm{d}x = \sum_{n\in \mathbb{Z}} \ii^n J_n(-\beta) \sqrt{\frac{2\pi}{s}}e^{-\frac{(m+n)^2}{2s}},
\end{displaymath}
where $J_n$ denotes the Bessel function of the first kind.
\end{lemma}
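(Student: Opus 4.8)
The plan is to recognize the integral as a Fourier-type transform and exploit the periodicity of the non-Gaussian factor $e^{-\ii\beta\cos x}$. First I would use the Jacobi--Anger expansion, which expresses the periodic factor as a Fourier series in $x$:
\begin{displaymath}
e^{-\ii\beta\cos x} = \sum_{n\in\mathbb{Z}} \ii^{n} J_n(-\beta)\, e^{\ii n x},
\end{displaymath}
a standard identity (equivalently $e^{\ii z\cos\theta}=\sum_n \ii^n J_n(z)e^{\ii n\theta}$, applied with $z=-\beta$). Substituting this into the left-hand side and combining the exponentials $e^{\ii mx}$ and $e^{\ii nx}$ turns the integrand into $\sum_n \ii^n J_n(-\beta)\, e^{\ii(m+n)x} e^{-s x^2/2}$.

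The next step is to interchange the sum and the integral, which I would justify by dominated convergence: the Bessel coefficients satisfy $|J_n(-\beta)|\le 1$ (indeed they decay superexponentially in $|n|$ for fixed $\beta$), while $\int_{\mathbb{R}} e^{-sx^2/2}\,\mathrm{d}x$ is finite, so the series of $L^1$ norms converges and Fubini--Tonelli applies. After the interchange, each term reduces to a single Gaussian integral against a pure phase:
\begin{displaymath}
\int_{\mathbb{R}} e^{\ii(m+n)x} e^{-s x^2/2}\,\mathrm{d}x = \sqrt{\frac{2\pi}{s}}\, e^{-\frac{(m+n)^2}{2s}},
\end{displaymath}
which is the classical Fourier transform of a Gaussian (completing the square in the exponent, or equivalently recalling that the Gaussian is its own Fourier transform up to the scaling by $s$). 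Assembling the terms reproduces the claimed right-hand side.

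The main obstacle, such as it is, will be presentational rather than deep: I would need to state the Jacobi--Anger identity with the correct sign and phase conventions so that the factor $\ii^n J_n(-\beta)$ comes out exactly as written, and to confirm the normalization $\mathrm{d}w_k^2 = 2\,\mathrm{d}t$ does not secretly alter the Gaussian width here (it does not, since this lemma is a pure integral identity independent of the Langevin dynamics). A minor care point is the sign inside the Bessel argument; writing $\cos x = \cos(-x)$ and tracking that $e^{-\ii\beta\cos x}$ corresponds to $z=-\beta$ in the standard form keeps the bookkeeping clean. Beyond that, the proof is a routine three-line computation once the expansion and the Gaussian integral are in hand.
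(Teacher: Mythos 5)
Your proposal is correct and follows essentially the same route as the paper: Jacobi--Anger expansion of $e^{-\ii\beta\cos x}$, interchange of sum and integral by dominated convergence, then the Gaussian Fourier integral. The only cosmetic difference is in how the interchange is justified — you invoke absolute summability of $\sum_n |J_n(-\beta)|$ against the integrable Gaussian (Fubini--Tonelli), whereas the paper bounds the partial sums uniformly by a constant times $e^{-sx^2/2}$; both are valid and your version is, if anything, slightly cleaner.
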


\begin{proof}
Applying the Jacobi-Anger expansion 
\begin{equation}\label{U(1)9}
e^{- \ii z\cos\phi} = \sum_{n \in \mathbb{Z}} \ii^n J_n(-z) e^{\ii n\phi} \text{ for all } \phi \in \mathbb{R}.
\end{equation}
to \eqref{U(1)7}, we have
\begin{equation}\label{U(1)10}
\int_{\mathbb{R}} e^{\ii mx}e^{-\ii \beta \cos x -s\frac{x^2}{2}} \mathrm{d}x
= \sum_{n\in \mathbb{Z}} \ii^n J_n(-\beta) \int_{\mathbb{R}} e^{\ii (n+m)x-s\frac{x^2}{2}} \mathrm{d}x
= \sum_{n\in \mathbb{Z}} \ii^n J_n(-\beta) \sqrt{\frac{2\pi}{s}}e^{-\frac{(m+n)^2}{2s}}.
\end{equation}
Here, we have interchanged the infinite sum and the integral, which can be justified using Dominated Convergence Theorem by computing
\begin{equation}\label{U(1)11}
\begin{aligned}
\left| \sum_{n = -N}^{N} \ii^n J_n(-\beta)e^{\ii (n+m)x-s\frac{x^2}{2}} \right| &= e^{-s\frac{x^2}{2}}\left| \sum_{n=-N}^{N} \ii^n J_n(-\beta)e^{\ii nx} \right| \leq 4e^{-s\frac{x^2}{2}}.
\end{aligned}
\end{equation}
for a given $N \in \mathbb{Z}^+$ that is large enough, as the finite sum inside the absolute sign tends to $e^{-\ii \beta\cos x}$ with modulus $1$ if \eqref{U(1)9} is applied. The resulting upper bound in \eqref{U(1)11} is clearly integrable on $\mathbb{R}$ for a fixed $\beta$ and $s$.
\end{proof}

\begin{proposition}\label{U(1)Prop1}
For the $\operatorname{U}(1)$ one-link model, suppose the observable $O(x)$ is $2\pi-$periodic and absolutely continuous on $[0,2\pi)$. In addition, if we demand that $O$ is a $(1+\alpha)$-H\"{o}lder class function for some $\alpha > 0$, then, we have that \eqref{U(1)6} holds.
\end{proposition}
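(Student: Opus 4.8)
The plan is to pass to the Fourier series of $O$ and reduce everything to \Cref{LemmaIntegral}. Since $O$ is $2\pi$-periodic, write $O(x) = \sum_{m \in \mathbb{Z}} c_m e^{\ii m x}$. The role of the $(1+\alpha)$-Hölder hypothesis is precisely to guarantee that this series converges absolutely: the Fourier coefficients of $O'$ are $\ii m c_m$, and since $O'$ is $\alpha$-Hölder its coefficients decay like $O(|m|^{-\alpha})$, whence $c_m = O(|m|^{-1-\alpha})$ and $\sum_{m} |c_m| < \infty$. This summability is what licenses all the term-by-term manipulations below.

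First I would treat the numerator and denominator of $\langle O \rangle_s$ separately. Because $\sum_m |c_m| < \infty$ and $\int_{\mathbb{R}} e^{-sx^2/2}\,\mathrm{d}x = \sqrt{2\pi/s}$, the series for $O$ may be integrated term by term against $e^{-\ii\beta\cos x - sx^2/2}$. Applying \Cref{LemmaIntegral} to each mode and rescaling by $\sqrt{s/(2\pi)}$ gives
\[
\sqrt{\frac{s}{2\pi}}\int_{\mathbb{R}} O(x)\, e^{-\ii\beta\cos x - \frac{sx^2}{2}}\,\mathrm{d}x = \sum_{m \in \mathbb{Z}} c_m \sum_{n \in \mathbb{Z}} \ii^n J_n(-\beta)\, e^{-\frac{(m+n)^2}{2s}},
\]
while the analogous identity for $Z_s$ is the $m = 0$ contribution alone.

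Next I would let $s \to 0^+$. The inner sum over $n$ is dominated by $\sum_n |J_n(-\beta)| < \infty$ (a standard estimate for Bessel functions at a fixed argument) uniformly in $s$ and $m$; hence by dominated convergence each Gaussian factor $e^{-(m+n)^2/(2s)}$ tends to $\mathbf{1}_{\{n = -m\}}$, leaving $\ii^{-m} J_{-m}(-\beta)$. A second application of dominated convergence in $m$, valid because $\sum_m |c_m| < \infty$ and the inner sums are uniformly bounded, yields
\[
\sqrt{\frac{s}{2\pi}}\int_{\mathbb{R}} O(x)\, e^{-\ii\beta\cos x - \frac{sx^2}{2}}\,\mathrm{d}x \longrightarrow \sum_{m \in \mathbb{Z}} c_m\, \ii^{-m} J_{-m}(-\beta), \qquad \sqrt{\frac{s}{2\pi}}\, Z_s \longrightarrow J_0(-\beta).
\]

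Finally I would identify the limit of the ratio with $\langle O \rangle$. Expanding $e^{-\ii\beta\cos x}$ by the Jacobi--Anger formula \eqref{U(1)9} and integrating the absolutely convergent series for $O$ term by term over $[0,2\pi)$ gives $\int_0^{2\pi} O(x) e^{-\ii\beta\cos x}\,\mathrm{d}x = 2\pi \sum_m c_m \ii^{-m} J_{-m}(-\beta)$ and $Z = 2\pi J_0(-\beta)$, so the limiting ratio is exactly $\langle O \rangle$ (this requires $J_0(-\beta) \neq 0$, equivalent to $Z \neq 0$, which is already needed for $\langle O \rangle$ to be defined). The main obstacle is the justification of the two interchanges of limit and summation; both are controlled by the absolute convergence $\sum_m |c_m| < \infty$, which is exactly the content of the $(1+\alpha)$-Hölder assumption, together with the summability of $\{J_n(-\beta)\}_n$.
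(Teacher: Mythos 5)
Your proposal is correct and follows essentially the same route as the paper's proof: Fourier expansion of $O$, decay $|\widehat{O}_m|\leq K|m|^{-1-\alpha}$ from the H\"older hypothesis to justify the term-by-term integration, \Cref{LemmaIntegral} for each mode, the limit $s\to 0^+$ via dominated convergence using $\sum_n|J_n(-\beta)|<\infty$, and the Jacobi--Anger identification of the limit with $\langle O\rangle$. If anything, your treatment of the $s\to 0^+$ interchange and the explicit remark that $J_0(-\beta)\neq 0$ is needed are slightly more careful than the paper's.
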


\begin{proof}
First, we consider the Fourier series expansion of $O(x)$ given by
\begin{equation}\label{U(1)7}
O(x) = \sum_{m \in \mathbb{Z}} \widehat{O}_m e^{\ii mx}.
\end{equation}
Furthermore, from a standard result in Harmonic Analysis, we know that the convergence of the infinite series on the right hand side of \eqref{U(1)5} is uniform, which thus implies that the Fourier series on the right can be used to represent $O$. From here, we apply \Cref{LemmaIntegral}:
\begin{gather}
\label{eq:O_Ss}
\int_{\mathbb{R}} O(x) \exp(-S_s(x)) \mathrm{d}x =
\sum_{m\in \mathbb{Z}} \widehat{O}_m \int_{\mathbb{R}} e^{\ii m x} \exp(-S_s(x)) \mathrm{d}x = 
\sum_{m\in \mathbb{Z}} \sum_{n\in \mathbb{Z}}
\ii^n \widehat{O}_m J_n(-\beta) \cdot \sqrt{\frac{2\pi}{s}} e^{-\frac{(m+n)^2}{2s}}, \\
\begin{aligned}
\label{eq:O_S}
\int_{\mathbb{T}} O(x) \exp(-S(x)) \mathrm{d}x &=
\sum_{m\in \mathbb{Z}} \widehat{O}_m \int_0^{2\pi} e^{\ii m x} e^{-\ii \beta \cos x} \mathrm{d}x =
\sum_{m\in \mathbb{Z}} \widehat{O}_m \int_0^{2\pi} e^{\ii m x} \sum_{n \in \mathbb{Z}}\ii^n J_n(-\beta)e^{\ii n x} \mathrm{d}x \\
&= 2\pi \sum_{m\in \mathbb{Z}} \sum_{n\in \mathbb{Z}} \ii^n \widehat{O}_m J_n(-\beta) \delta_{n,-m} = 
2\pi \sum_{m\in \mathbb{Z}} \ii^m \widehat{O}_{-m} J_m(-\beta),
\end{aligned}
\end{gather}
where the last equality of \eqref{eq:O_Ss} is due to \Cref{LemmaIntegral} and \eqref{eq:O_S} utilizes \eqref{U(1)9} and the property that $J_n(x) = J_{-n}(x)$ for all integer $n$. Here, we note that in \eqref{eq:O_Ss} and \eqref{eq:O_S}, we have swapped the relevant infinite series and integration. This can be justified using the Dominated Convergence Theorem by considering the following partial sums for any $N \in \mathbb{N}$:
\begin{equation}\label{U(1)43}
\begin{aligned}
\left| \sum_{m = - N}^N \widehat{O}_m  e^{\ii m x} \exp(-S_s(x)) \right| &= \left| \sum_{m = - N}^N \widehat{O}_m  e^{\ii m x - \ii \beta \cos(x) - s\frac{x^2}{2}} \right| \\
&\leq 2\sum_{m = 1}^N \frac{K}{|m|^{1+\alpha}} e^{-s\frac{x^2}{2}} +  |\widehat{O}_0|e^{-s\frac{x^2}{2}}
\leq e^{-s\frac{x^2}{2}}(2K \zeta(1+\alpha) + |\widehat{O}_0|),
\end{aligned}
\end{equation}
where $\zeta(\cdot)$ is the Riemann zeta function, and we have used the assumption that $O$ is a $(1+\alpha)$-H\"{o}lder class function for some $\alpha > 0$. This means that there exists a constant $K$ such that
\begin{equation} \label{Holder}
|\widehat{O}_m| \leq \frac{K}{|m|^{1+\alpha}}.
\end{equation}
Thus, from \eqref{U(1)43}, we can see that the upper bound is clearly integrable on $\mathbb{R}$. Thus, by Dominated Convergence Theorem, the aforementioned interchange is justified.

The equation \eqref{eq:O_Ss} implies that
\begin{equation}\label{U(1)12}
\lim_{s\rightarrow 0^+ }\sqrt{\frac{s}{2\pi}}\int_{\mathbb{R}} O(x) \exp(-S_s(x)) \mathrm{d}x
= \sum_{n \in \mathbb{Z}} \ii^n \widehat{O}_{-n} J_n(-\beta).
\end{equation}
For $Z_s$ and $Z$, one can use the same technique to deduce that
\begin{equation} \label{eq:Zs_Z}
\lim_{s\rightarrow 0^+} \sqrt{\frac{s}{2\pi}} Z_s = J_0(-\beta), \qquad
Z = 2\pi J_0(-\beta).
\end{equation}
It is now clear from \eqref{eq:O_S}, \eqref{U(1)12}, and \eqref{eq:Zs_Z} that
\begin{equation}\label{U(1)13}
\lim_{s\rightarrow 0^+} \langle O \rangle_s
= \frac{\displaystyle \lim_{s\rightarrow 0^+}\sum_{m\in\mathbb{Z}} \widehat{O}_m \sqrt{\frac{s}{2\pi}}\int_{\mathbb{R}} e^{\ii mx}e^{-\ii \beta \cos x -s\frac{x^2}{2}} \mathrm{d}x}{\displaystyle \lim_{s\rightarrow 0^+}\sqrt{\frac{s}{2\pi}} Z_s}
= \frac{1}{J_0(-\beta)} \sum_{m\in \mathbb{Z}}\widehat{O}_{-m} \ii^m J_m(-\beta) = \langle O \rangle,
\end{equation}
which concludes the proof.
\end{proof}

The result above justifies the regularization of the action - if $s$ is chosen small and $\langle O \rangle_s$ can be correctly computed by the complex Langevin method, the value $\langle O \rangle_s$ can be regarded as an approximation of $\langle O \rangle$. 

\subsubsection{Correct numerical convergence for complex Langevin method}
Despite the guarantee for correct convergence given in Proposition \ref{U(1)Prop1}, numerical results from the complex Langevin method suggest otherwise. The numerical experiments on the regularized action have been carried out in \cite{scherzer2019clandboundary}, and we have repeated the same experiments for $\beta = 0.5$. The results are plotted in Figure \ref{fig:U(1)NumFailFig} for $O(x) = e^{\ii x}$, where we can observe a divergence between the true values represented by the red curve and the numerical results represented by the data points. The data points are obtained via numerical simulations with a fixed time step of $\Delta t = 3\times10^{-4}$ for values of $s$ closer to $0$ and $\Delta t = 1\times10^{-3}$ if otherwise, with each sample obtained after every $2000$ steps for a total of $10^6$ samples for each value of $s \in \{0.05k \hspace{3pt}| \hspace{3pt} 0 \leq k \leq 30, k \in \mathbb{Z} \}.$
\begin{figure}[htbp]
\centering
\includegraphics[width=0.5\textwidth]{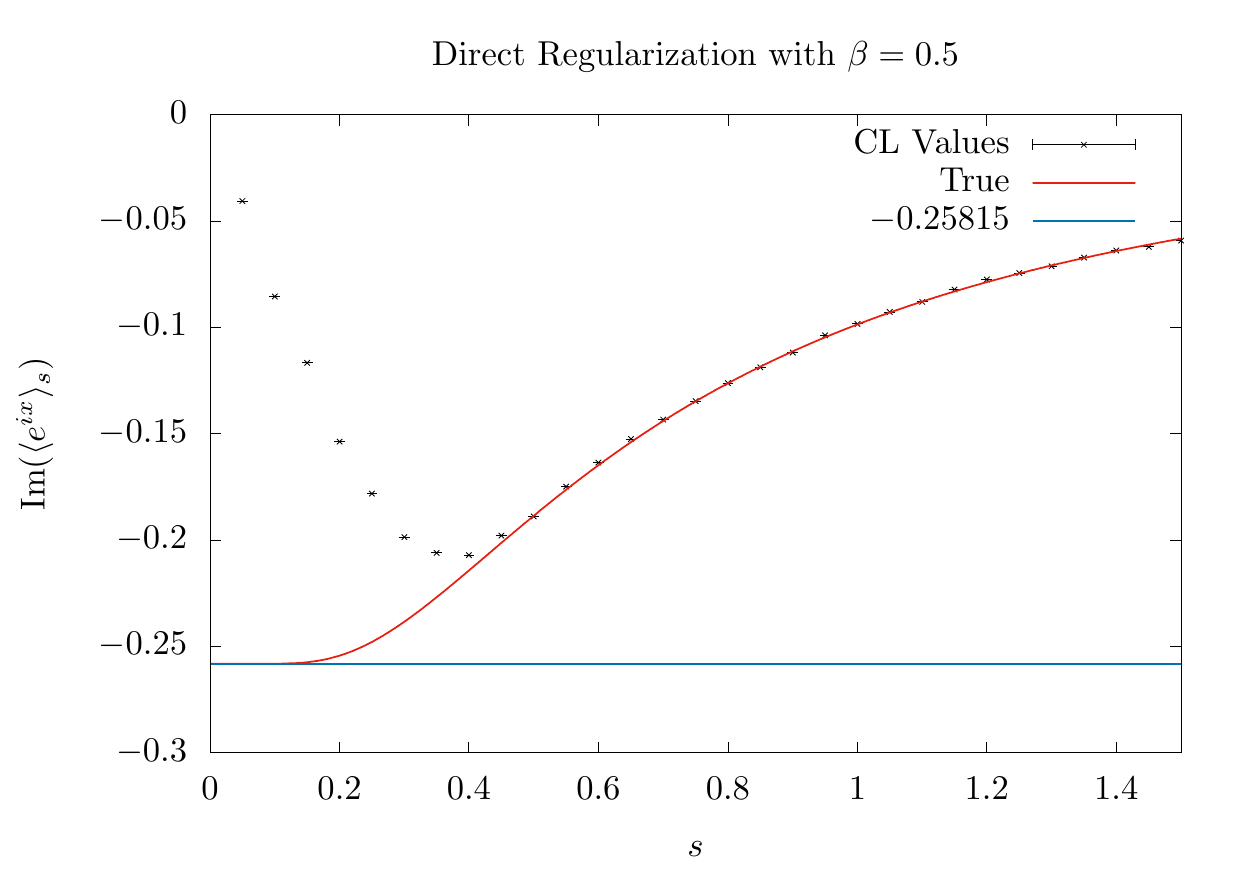}
\caption{The graph depicting the numerical results and true values of $\im \ml \langle e^{ix} \rangle_s \mr$ against $s$ for $\beta = 0.5$.}
\label{fig:U(1)NumFailFig}
\end{figure}

In view of Proposition \ref{U(1)Prop1}, we can deduce that such divergence between the true values and the numerical results must be due to the corresponding complexification of the Langevin dynamics. It is thus instructive to investigate the correct values of $s$ in which the numerical results from our complex Langevin method agree with that from the original Langevin dynamics.
The phenomenon that the correctness of complex Langevin changes with the parameter in the action has been observed and explained in a number of previous works \cite{aarts2013localised,nagata2016argument,cai2021validity}. In \cite{cai2021validity}, it is demonstrated in another example that the correctness of the complex Langevin results can be guaranteed only when the probability density function is localized, meaning that for all $t > 0$, the solution of \eqref{eq:complex-Langevin} always satisfies $y(t) \in [Y^-, Y^+]$ for some $Y^- < Y^+$.
Our problem has a close similarity to the example in \cite{cai2021validity}, and it can be expected that we also require the localization of the $y(t)$ to guarantee the correctness of complex Langevin.
To confine the value of $y(t)$, we need that the imaginary velocity $K^I$ to satisfy $K_{s}^I(x,Y^-) > 0$ and $K_{s}^I(x,Y^+) < 0$ for all $x$. Note that the choice of $0$ here is due to the fact that in all simulations of the complex Langevin dynamics, we will always set the initial coordinates to be at the origin. This thus motivates the following proposition:

\begin{proposition}\label{U(1)Prop2}
For the $\operatorname{U}(1)$ one-link model, given a fixed $s > 0$ and $\beta > 0$, if $s > 1.509\beta$, then there exist $Y^+ > 0$ and $Y^- < 0$ such that
\begin{equation}\label{U(1)17}
K_{s}^I(x,Y^+) < 0 \text{ and } K_{s}^I(x,Y^-) > 0.
\end{equation}
\end{proposition}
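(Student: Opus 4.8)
The plan is to reduce the two inequalities, each required to hold for \emph{all} $x$, to a single scalar condition on the ratio $\beta/s$, and then to analyze that condition by elementary one-variable calculus. First I would recall from \eqref{U(1)5} that $K_s^I(x,y) = \beta \sin x \cosh y - s y$. For a fixed height $y = Y^+ > 0$, since $\cosh Y^+ > 0$ and $\sin x$ ranges over $[-1,1]$, the supremum of $K_s^I(\cdot, Y^+)$ over $x$ is attained at $\sin x = 1$ and equals $\beta \cosh Y^+ - s Y^+$. Hence the requirement $K_s^I(x, Y^+) < 0$ for all $x$ is equivalent to the single inequality
\[
\frac{Y^+}{\cosh Y^+} > \frac{\beta}{s}.
\]
By the evenness of $\cosh$ I would then set $Y^- = -Y^+$: the infimum of $K_s^I(\cdot, Y^-)$ over $x$ is attained at $\sin x = -1$ and equals $s Y^+ - \beta \cosh Y^+$, so the second inequality $K_s^I(x, Y^-) > 0$ holds under the exact same condition. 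Thus the entire statement reduces to the existence of a single $Y > 0$ with $Y/\cosh Y > \beta/s$.

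Next I would study $h(Y) = Y/\cosh Y$ on $(0,\infty)$. It vanishes at the origin and decays to $0$ at infinity, so it attains a positive maximum at an interior critical point. Setting the numerator of $h'(Y) = (\cosh Y - Y \sinh Y)/\cosh^2 Y$ to zero yields the transcendental equation $Y \tanh Y = 1$. Since $\frac{d}{dY}(Y\tanh Y) = \tanh Y + Y\operatorname{sech}^2 Y > 0$ for $Y > 0$, the left side increases strictly from $0$ to $+\infty$, so this equation has a unique positive root $Y^\ast$, and it is necessarily the global maximizer of $h$. Numerically $Y^\ast \approx 1.1997$, giving a maximum value $M := h(Y^\ast) = Y^\ast/\cosh Y^\ast \approx 0.6628$.

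I would then close the argument by noting that a $Y > 0$ with $Y/\cosh Y > \beta/s$ exists if and only if $\beta/s < M$, i.e.\ $s > \beta/M \approx 1.509\,\beta$, which is precisely the hypothesis. The strict inequality in the hypothesis guarantees the strictly sub-maximal value $\beta/s$ is actually exceeded by $h$ on an open interval around $Y^\ast$, so one may select an explicit $Y^+$ in that interval, set $Y^- = -Y^+$, and obtain the two strict inequalities \eqref{U(1)17}.

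The only nontrivial point is the computation of $\max_{Y>0} Y/\cosh Y$: the critical-point equation $Y\tanh Y = 1$ has no closed form, so I expect the main (though mild) obstacle to be verifying that its root yields the stated threshold $1/M \approx 1.509$. Everything else—the reduction to the scalar condition via $|\sin x| \le 1$, the symmetry giving $Y^- = -Y^+$, and the uniqueness and maximality of $Y^\ast$—is routine, so the substance of the proof lives entirely in locating this transcendental maximum.
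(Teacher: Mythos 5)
Your proposal is correct and follows essentially the same route as the paper: both replace $\sin x$ by its extremal value $\pm 1$, reduce the claim to a one-variable transcendental condition, and solve it numerically to obtain the threshold $1.509$. The only difference is cosmetic --- you maximize $h(Y)=Y/\cosh Y$ (critical equation $Y\tanh Y=1$) and compare $\beta/s$ to the maximum, whereas the paper minimizes $\beta\cosh y - sy$ at $y_0=\sinh^{-1}(s/\beta)$ and demands negativity; these are equivalent formulations of the same tangency condition and yield the same constant.
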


\begin{proof}
We first consider the case for $K_s^I(x,Y^+) < 0$. For any $(x,y) \in \mathbb{R}^2$, using the expression from \eqref{U(1)5}, we are looking to solve the following inequality
\begin{equation}\label{U(1)18}
K_{s}^I(x,y) =  \beta \sin(x) \cosh(y) - sy < 0
\end{equation}
in the sense that there exists a $y = Y^+ > 0$ such that for all $x \in \mathbb{R}$, $K_{s}^I(x,Y^+) < 0$.

First, for this to hold for all $x$, it must thus hold at a point in which $\sin x$ is maximum, that is, takes the value of $1$, as $\cosh(y) > 0$ for all $y \in \mathbb{R}$. We define the new expression of $K_{s}^I$ in which we replace $\sin(x)$ by 1 as $\bar{K}^I$. Thus, we are looking to solve for a region in the parameter space ($\beta,s$) such that such a $Y^+$ would be guaranteed. The strategy is as follows. First, we fix the parameters $\beta$ and $s$ and solve for the minimum value of this function $\bar{K}^I$ at $y_0$ in terms of $\beta$ and $s$. Since this minimum value is a function of $\beta$ and $s$, we can in fact find such a region in the parameter space such that $\bar{K}^I(y_0) < 0$. Thus, since $\bar{K}^I$ is minimized at $y_0$ and is negative, we then have for all $y \in [0,Y^+]$ with $Y^+ = y_0$ that $\bar{K}^I(y) < 0$ and thus $K^I_s(x,y) < 0$ for all $x \in \mathbb{R}$ and $y \in [0,Y^+]$. Therefore, we have $y_0$ as the required $Y^+$ that we are looking for.
To apply this strategy, we first look at the corresponding function for $\bar{K}^I$:
\begin{equation}\label{U(1)19}
 \bar{K}^I(y) = \beta \cosh(y) - s y.
\end{equation}
Using standard one-variable optimization techniques, we see that global minimum is attained at 
\begin{equation}\label{U(1)20}
y_0 = \sinh^{-1}\ml \frac{s}{\beta}\mr.
\end{equation}
Now we demand that the minimum value of $\bar{K}^I$ be negative:
\begin{equation}\label{Klessthan0}
\beta \cosh(y_0) - s y_0 < 0.
\end{equation}
Inserting \eqref{U(1)20} into the equation above and letting $\chi = s/\beta$, we can simplify the inequality \eqref{Klessthan0} to
\begin{equation}\label{U(1)23}
e^{\sqrt{1 + \ml \frac{1}{\chi} \mr^2}} - \chi - \sqrt{1+\chi^2} < 0
\end{equation}
which can be solved numerically to obtain:
\begin{equation}
\label{U(1)24}
\chi > 1.509, \quad s > 1.509 \beta,
\end{equation}
and the proof is thus complete for this case.

For the other case in \eqref{U(1)17}, we can use the same strategy to obtain the same sufficient condition $s > 1.509 \beta$, which completes the proof of the proposition.
\end{proof}

Indeed, as we can see from Figure \ref{fig:U(1)NumFailFig}, for points after $s = 0.8 \approx 1.6\beta > 1.509\beta$, we can observe that the true values are coherent with the numerical values obtained. For $s \in (0.5, 0.7)$, although the numerical results from complex Langevin appear to be on the red curve, we believed that a small systematic bias has occurred.

In view of Propositions \ref{U(1)Prop1} and \ref{U(1)Prop2}, it seems unlikely that we can obtain good numerical values of $\langle O \rangle$ solely with the use of a regularized action, as seen in \Cref{fig:U(1)NumFailFig}. This thus motivates the following subsection, in which we will consider the fix of the regularized values.

\subsection{Reweighted complex Langevin method with regularized action}
In this subsection, we will introduce the reweighted complex Langevin method aimed at obtaining numerical results for $\langle O \rangle_s$.

In \cite{jacques2017reweight}, the authors consider the action $S_{\xi}$ with a parameter $\xi$. It then holds for any $\xi$ and $\xi_0$ that
\begin{equation} \label{eq:S_xi}
\frac{\displaystyle\int O(x) S_{\xi}(x) \mathrm{d}x}{\displaystyle\int S_{\xi}(x) \mathrm{d}x} = \frac{\displaystyle \int \frac{O(x) S_{\xi}(x)}{S_{\xi_0}(x)} S_{\xi_0}(x) \mathrm{d}x}{\displaystyle \int \frac{S_{\xi}(x)}{S_{\xi_0}(x)} S_{\xi_0}(x) \mathrm{d}x}.
\end{equation}
Both the numerator and the denominator on the right-hand side can be approximated using the complex Langevin method with action $S_{\xi_0}(x)$. By choosing an appropriate $\xi_0$, one may get a better approximation of $\langle O \rangle$ as compared to applying the complex Langevin method directly to the left-hand side of \eqref{eq:S_xi}. In our case, the regularized action includes a regularizing parameter $s$, in which we know that the true value could be generated at $s = 0$. This inspires us to develop our algorithm according to the following proposition:

\begin{proposition}\label{U(1)Prop3}
The following equality holds for all $s,s_0 \geq 0$:
\begin{equation}\label{U(1)29}
 \langle O(x) \rangle_s = \frac{\left\langle O (x) \exp\ml \frac{(s_0-s)x^2}{2} \mr \right\rangle_{s_0}}{\left\langle \exp\ml \frac{(s_0-s)x^2}{2} \mr \right\rangle_{s_0}}
\end{equation}
\end{proposition}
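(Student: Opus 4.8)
The plan is to treat \eqref{U(1)29} as a pure change-of-weight identity, exploiting the fact that the two regularized actions differ only through their quadratic stabilizing term. Since $S_s(x) = \ii\beta\cos x + \frac{sx^2}{2}$, the oscillatory part $\ii\beta\cos x$ is common to $S_s$ and $S_{s_0}$ and cancels in the difference, leaving $S_{s_0}(x) - S_s(x) = \frac{(s_0 - s)x^2}{2}$ exactly. Consequently $\exp(-S_s(x)) = \exp(-S_{s_0}(x))\exp\ml\frac{(s_0-s)x^2}{2}\mr$ for every $x \in \mathbb{R}$, and this single algebraic observation drives the whole argument.

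With this substitution in hand, I would rewrite both the numerator and the normalizing constant in the definition \eqref{U(1)1}--\eqref{U(1)2} of $\langle O\rangle_s$ as integrals taken against the weight $\exp(-S_{s_0})$. Inserting the factorization into $\int_{\mathbb{R}} O(x)\exp(-S_s(x))\,\mathrm{d}x$ and multiplying and dividing by $Z_{s_0}$ turns it into $Z_{s_0}\left\langle O(x)\exp\ml\frac{(s_0-s)x^2}{2}\mr\right\rangle_{s_0}$; doing the same to $Z_s = \int_{\mathbb{R}}\exp(-S_s(x))\,\mathrm{d}x$ yields $Z_s = Z_{s_0}\left\langle\exp\ml\frac{(s_0-s)x^2}{2}\mr\right\rangle_{s_0}$. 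Forming the quotient $\langle O\rangle_s = \frac{1}{Z_s}\int_{\mathbb{R}} O(x)\exp(-S_s(x))\,\mathrm{d}x$, the common factor $Z_{s_0}$ cancels between numerator and denominator, and \eqref{U(1)29} drops out immediately.

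The only step that genuinely requires care---and the one I expect to be the main obstacle---is justifying that every integral appearing above converges absolutely, so that the reweighting is legitimate rather than merely formal. For $s_0 > 0$ the modulus of each reweighted integrand is $|O(x)|\,\bigl|\exp(-S_{s_0}(x))\bigr|\exp\ml\frac{(s_0-s)x^2}{2}\mr = |O(x)|\exp\ml-\frac{sx^2}{2}\mr$, since $|e^{-\ii\beta\cos x}| = 1$ and the two Gaussian exponents combine to $-\frac{s}{2}x^2$; when $O$ is bounded (as for $O(x)=e^{\ii x}$) this guarantees integrability precisely for $s > 0$. The delicate endpoint is $s = 0$, where $\exp(-S_s)$ has unit modulus and the integral over $\mathbb{R}$ ceases to converge absolutely; there I would interpret the identity through the limiting procedure $s\to 0^+$ already established in \Cref{U(1)Prop1}, or simply restrict the reweighting step to $s_0 > 0$. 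I emphasize that this convergence bookkeeping, not the algebra, is where whatever content there is actually resides.
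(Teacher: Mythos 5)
Your argument is exactly the paper's: the paper proves this proposition by invoking the generic reweighting identity \eqref{eq:S_xi} with $\xi = s$, which is precisely the factorization $\exp(-S_s) = \exp(-S_{s_0})\exp\ml\tfrac{(s_0-s)x^2}{2}\mr$ followed by cancellation of $Z_{s_0}$ that you spell out. Your additional caveat about absolute convergence failing at the endpoint $s=0$ (where $Z_0 = \int_{\mathbb{R}} e^{-\ii\beta\cos x}\,\mathrm{d}x$ is not defined as an absolutely convergent integral, despite the proposition's claim of validity for all $s \geq 0$) is a legitimate point the paper glosses over, and restricting to $s, s_0 > 0$ with the $s\to 0^+$ limit handled by \Cref{U(1)Prop1} is the right fix.
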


This proposition is a direct result of \eqref{eq:S_xi} by setting $\xi$ to be $s$.
Thus, from Proposition \ref{U(1)Prop2}, as long as we pick $s_0 > 1.509\beta$ and $s > 0$, we are guaranteed that the numerical values of two integrals in the ratio obtained using the complex Langevin method for the right hand side of \ref{U(1)Prop3} have no biases. By equality \eqref{U(1)29}, we can thus obtain an accurate numerical value of $\langle O \rangle_s$ even for $s < s_0$. Setting $s \rightarrow 0$ in \eqref{U(1)29}, we thus have an accurate numerical value of $\langle O \rangle$.

Following \Cref{U(1)Prop3}, we carry out numerical experiments by fixing $s_0 = 0.8$ and compute $\langle O \rangle_s$ for $s \in (0, 1.5)$. The numerical values were generated using a fixed time step of $\Delta t = 10^{-3}$, with each sample obtained after every $2000$ steps for a total of $10^7$ samples at $s_0 = 0.8$. Values of $\langle O \rangle_s$ for $s \neq 0.8$ were obtained from this set of points generated via the equation \eqref{U(1)29} above. Furthermore, the corresponding error bars were generated using a $M$ out of $N$ naive bootstrap method at each $s$, with $M = 20000$ and $N = 10^7$, repeated for $n = 10000$ times. The results and the estimated error bars are plotted in \Cref{fig:U(1)RewDivFig}. Indeed, we observe that for $s$ around $0.5$, we have obtained a better approximation of $\langle O \rangle$. However, two worrying phenomena have also surfaced from this experiment. Namely,
\begin{itemize}
    \item The numerical value of $\langle O \rangle_s$ deviates from the true value when $s$ gets smaller than $0.5$.
    \item As $s$ reduces, the estimated standard error start to grow dramatically from $s$ around $0.4$.
\end{itemize}

\begin{figure}[htbp]
\centering
\includegraphics[width=0.5\textwidth]{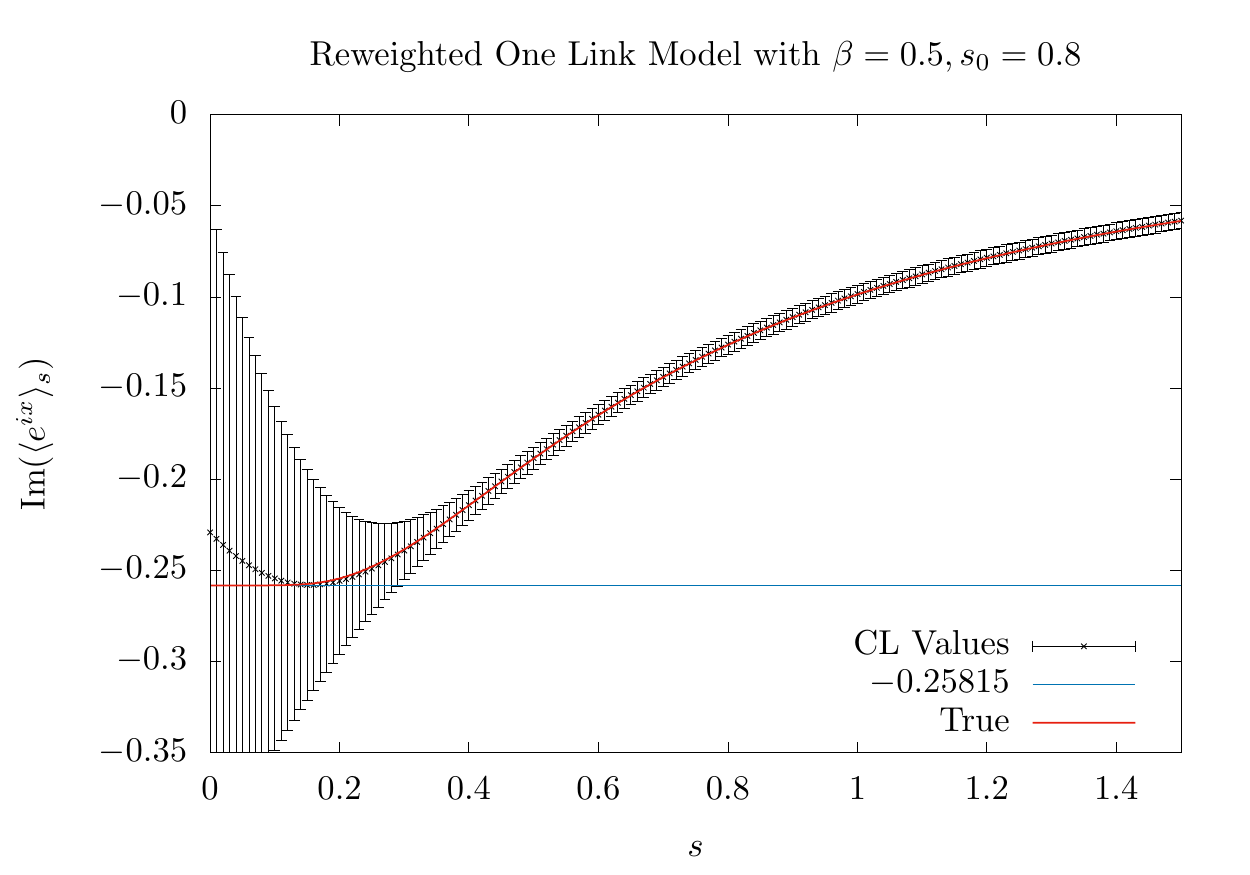}
\caption{The graph depicting the divergence of numerical results obtained from reweighted complex Langevin method and true values of $\im \ml \langle e^{\ii x} \rangle_s \mr$ against $s$ for $\beta = 0.5$ due to large standard errors. $-0.25815$ represents the true value at $s = 0$.}
\label{fig:U(1)RewDivFig}
\end{figure}
Nonetheless, it can be shown that the divergence for $\langle O \rangle$ is due to a large standard error, in which the standard error for $\langle O \rangle_s$ grows as the value of $s$ decreases from $s_0$ to $0$. This thus provides motivation for the following section, in which the introduction of a mathematically-motivated regression model aims to obtain an improved numerical estimate for $\langle O \rangle$.

\subsection{Coupling reweighted complex Langevin method with regularized action, with regression}
As mentioned at the start of this section, an important question to address would be the choice of the regressors that we should use to perform regression. Will a simple polynomial regression work? What would be considered as appropriate regressors? To answer these questions, we refer back to \Cref{fig:U(1)NumFailFig}. The graph above shows the graph of the true curve of $\langle O \rangle_s$ for $O = e^{\ii x}$ with $\beta = 0.5$ in red. As observed, the curve becomes very flat when $s$ is close to zero, which implies that the higher-order derivatives of $\im{\ml \langle e^{\ii x} \rangle_s \mr}$ might be $0$ at $s = 0$. This is not a fact captured by arbitrary polynomial regressors. Thus, if such an observation is true, we would have to turn to other regressors. This motivates the proposition below.

\begin{proposition}\label{U(1)Prop4}
For the $\operatorname{U}(1)$ one-link model with regularized action, for any observable $O$ satisfying the conditions in Proposition \ref{U(1)Prop1} and for any given $k \in \mathbb{Z}^+$, we have 
\begin{equation}\label{U(1)32}
\frac{\mathrm{d}^k}{\mathrm{d}s^k} \langle O \rangle_s|_{s = 0} = 0.
\end{equation}
\end{proposition}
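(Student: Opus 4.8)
The plan is to reduce the claim to the single elementary fact that $s \mapsto e^{-c/s}$ (extended by $0$ at $s=0$) is smooth on $[0,\infty)$ with every right derivative vanishing at the origin, and then to propagate this property through an infinite sum and a quotient. First I would take the explicit representations already established: by \Cref{LemmaIntegral} and the computation in the proof of \Cref{U(1)Prop1},
\[
\int_{\mathbb{R}} O(x)e^{-S_s(x)}\,\mathrm{d}x = \sqrt{\frac{2\pi}{s}}\,f(s), \qquad Z_s = \sqrt{\frac{2\pi}{s}}\,g(s),
\]
so that $\langle O\rangle_s = f(s)/g(s)$ and the troublesome prefactor $\sqrt{2\pi/s}$ cancels. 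Regrouping the double sum in \eqref{eq:O_Ss} by the index $\ell = m+n$ gives
\[
f(s) = \sum_{\ell \in \mathbb{Z}} c_\ell\, e^{-\ell^2/(2s)}, \qquad c_\ell := \sum_{m\in\mathbb{Z}} \ii^{\ell - m}\widehat{O}_m J_{\ell-m}(-\beta), \qquad g(s) = \sum_{n\in\mathbb{Z}} \ii^n J_n(-\beta)\,e^{-n^2/(2s)}.
\]
The point of this rewriting is that the only terms surviving at $s=0$ are $\ell = 0$ and $n=0$; writing $f = c_0 + F$ and $g = J_0(-\beta) + G$, the remainders $F$ and $G$ are built entirely from exponentials $e^{-c/s}$ with $c>0$.

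Second, I would show $F,G \in C^\infty([0,\infty))$ with all derivatives vanishing at $s=0$. For a single term one has $\frac{\mathrm{d}^k}{\mathrm{d}s^k}e^{-a/s} = e^{-a/s}\sum_{j=1}^k b_{k,j}\, a^j s^{-(k+j)}$; substituting $a=\ell^2/2$ and $t = \ell^2/(2s)$ turns each monomial into $2^k \ell^{-2k} t^{k+j}e^{-t}$, and since $t^{k+j}e^{-t}$ is bounded on $[0,\infty)$ this yields the uniform-in-$s$ bound $|\frac{\mathrm{d}^k}{\mathrm{d}s^k}e^{-\ell^2/(2s)}| \le C_k \ell^{-2k}$ for $\ell \neq 0$. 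Because the H\"older hypothesis on $O$ gives $\sum_m |\widehat{O}_m| < \infty$ via \eqref{Holder} and the Bessel coefficients satisfy $\sum_n |J_n(-\beta)| < \infty$, the convolution estimate $\sum_\ell |c_\ell| \le \left(\sum_m|\widehat{O}_m|\right)\left(\sum_n |J_n(-\beta)|\right) < \infty$ holds. Hence $\sum_{\ell\neq 0} c_\ell \frac{\mathrm{d}^k}{\mathrm{d}s^k}e^{-\ell^2/(2s)}$ converges uniformly on $(0,\infty)$ for every $k$, so term-by-term differentiation is legitimate; and since each differentiated term tends to $0$ as $s\to 0^+$, I conclude $F^{(k)}(0) = G^{(k)}(0) = 0$ for all $k\geq 1$.

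Third, assuming $J_0(-\beta)\neq 0$ (which is exactly the condition $Z\neq 0$ needed for $\langle O\rangle$ to be defined), I would close the argument by induction on $k$ using the Leibniz rule applied to $h(s) g(s) = f(s)$, where $h(s) = \langle O\rangle_s = f(s)/g(s)$. Differentiating $k$ times and evaluating at $s=0$, every product $h^{(j)}(0) g^{(k-j)}(0)$ with $j<k$ has $k-j\ge 1$ and hence vanishes because $g^{(k-j)}(0)=G^{(k-j)}(0)=0$; what remains is $h^{(k)}(0)\,g(0) = f^{(k)}(0) = 0$ for $k\ge 1$. Since $g(0)=J_0(-\beta)\ne 0$, this forces $h^{(k)}(0)=0$, which is precisely \eqref{U(1)32}.

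The main obstacle is the second step: justifying that the infinite series defining $F$ and $G$ may be differentiated term by term arbitrarily many times up to the boundary $s=0$. The derivatives of $e^{-\ell^2/(2s)}$ individually blow up like negative powers of $s$ as $s\to 0^+$, so uniform control requires the $t^{k+j}e^{-t}$ reparametrization to trade those blow-ups for the decay factor $\ell^{-2k}$, after which absolute summability of $(c_\ell)$ secures uniform convergence. Once this analytic point is settled, the algebraic Leibniz induction is routine.
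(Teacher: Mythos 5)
Your proposal is correct and follows essentially the same route as the paper's proof: write $\langle O\rangle_s$ as a ratio of two exponential series, justify term-by-term differentiation via a uniform bound that trades the $s^{-(k+j)}$ blow-up against the $e^{-\ell^2/(2s)}$ decay, observe that all derivatives of both series vanish at $s=0$, and close with a Leibniz-rule induction on $f = g\,\langle O\rangle_s$. Your regrouping by $\ell=m+n$ and the explicit bound $|\frac{\mathrm{d}^k}{\mathrm{d}s^k}e^{-\ell^2/(2s)}|\le C_k\ell^{-2k}$ are only cosmetic refinements of the paper's argument.
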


\begin{proof}
The proof of this proposition continues from the proof of Proposition \ref{U(1)Prop1}. From \eqref{U(1)10}, the numerator for $\langle O \rangle_s$ constitutes a sum over $m$ of the expression in \eqref{U(1)10}. The denominator however, consists of the $m=0$ term in \eqref{U(1)10}. Multiplying both the numerator and the denominator by a factor of $\sqrt{\frac{s}{2\pi}}$, we have that
\begin{equation}\label{U(1)33}
\langle O \rangle_s = \frac{\sum_{m\in \mathbb{Z}}\sum_{n \in \mathbb{Z}} \ii^n J_n(-\beta) \widehat{O}_m e^{-\frac{(m+n)^2}{2s}} }{\sum_{n \in \mathbb{Z}} \ii^n J_n(-\beta) e^{-\frac{n^2}{2s}}} := \frac{f_1(s)}{f_2(s)}.
\end{equation}
As the given function above is clearly infinitely differentiable, we take the derivative with respect $s$ on both sides to obtain
\begin{equation}\label{U(1)34}
\frac{\mathrm{d}}{\mathrm{d}s}\langle O \rangle_s = \frac{f_1'(s)f_2(s) - f_2'(s)f_1(s)}{(f_2(s))^2}
\end{equation}
where
\begin{equation}\label{U(1)35}
\begin{aligned}
f_1'(s) & = \sum_{(n + m)\in \mathbb{Z}\setminus\{0\}} \ii^n J_n(-\beta) \widehat{O}_m \ml \frac{(m+n)^2}{s^2}\mr e^{-\frac{(m+n)^2}{2s}}, \\
f_2'(s) &= \sum_{n\in\mathbb{Z}\setminus\{0\}} \ii^n J_n(-\beta) \ml \frac{n^2}{s^2}\mr e^{-\frac{n^2}{2s}}.
\end{aligned}
\end{equation}
Here, we used $\mathbb{Z} \setminus \{0\}$ since if $n$ or $n + m$ is equals to $0$ before differentiating, the corresponding term in the infinite series is a constant due to the absence of the exponential factor and disappears upon differentiation. By writing down \eqref{U(1)35}, we have explicitly swapped the derivative and the infinite sum. This can be justified using a standard result in analysis (see Theorem 7.17 in \cite{rudin}) as follows. First, we restrict our attention to $[0,s_r]$ for $s_r$ large enough.\footnote{Large enough can be understood in the sense that it is sufficient for our numerical simulations and that $s_r > s_0$} Then, we will proceed to show that the derivative of the sequence of partial sums converges uniformly on $[0,s_r]$. Below, we will verify the conditions for $f_1'(s)$, in which a simpler case will thus hold for $f_2'(s)$. The uniform convergence can be verified using Weierstrass M-test by first computing
\begin{equation}\label{U(1)40}
\begin{aligned}
&\sum_{(n+m) \in \mathbb{Z}\backslash \{0\}} \left| \ii^n J_n(-\beta) \widehat{O}_m \ml \frac{(m+n)^2}{s^2}\mr e^{-\frac{(m+n)^2}{2s}} \right| \\
\leq{} & \sum_{(m+n) \in \mathbb{Z}\backslash \{0\}} |J_n(-\beta)|\ml (1 - \delta_{0m}) \frac{K}{|m|^{1+\alpha}} +  \delta_{0m} |\widehat{O}_0|\mr \frac{4}{(m+n)^2}\ml\frac{(m+n)^2}{2s}\mr^2 e^{-\frac{(m+n)^2}{2s}} \\
\leq{} & 4\left(\sum_{n\in \mathbb{Z}} |J_n(-\beta)|\right) \sum_{m\in \mathbb{Z}} \ml (1 - \delta_{0m})\frac{K}{|m|^{1+\alpha}} +  \delta_{0m} |\widehat{O}_0|\mr
= M(\beta) [2K\zeta(1+\alpha) + |\widehat{O}_0|]< +\infty,
\end{aligned}
\end{equation}
where we have used the following facts:
\begin{itemize}
    \item $O$ is a $(1+\alpha)$-H\"{o}lder class function for some $\alpha > 0$, for Fourier coefficients $\widehat{O}_n$, where $n \neq 0$, and that $|\widehat{O}_0|$ is bounded. These two cases are separated by using the Kronecker delta symbol $\delta_{0m}$.
    \item Since $m+n\neq 0$, then we have $\frac{1}{(m+n)^2} \leq 1$ for all $m,n \in \mathbb{Z}$.
    \item $x^2 e^{-x} \leq \frac{4}{e^2} \leq 1$ for all $x \geq 0$. 
    \item $M(\beta) := \sum_{n=-\infty}^\infty |J_n(-\beta)| < + \infty$ for any $\beta > 0.$\footnote{This can be observed from its asymptotic behaviour for large $|n|$, such that for a fixed $\beta$, $|J_n(-\beta)| \sim \frac{1}{\Gamma(|n|+1)}(\frac{|\beta|}{2})^{|n|}$.} 
\end{itemize}
With \eqref{U(1)40}, the aforementioned interchange in \eqref{U(1)35} is justified. Furthermore, since  $\lim_{s\rightarrow 0^+} f_1'(s) = 0$ and $\lim_{s\rightarrow 0^+} f_2'(s) = 0$, and $\lim_{s\rightarrow 0^+} f_1(s) < + \infty$ and $\lim_{s\rightarrow 0^+} f_2(s) < + \infty$, then we have $\frac{\mathrm{d}}{\mathrm{d}s}\langle O \rangle_s |_{s=0} = 0.$

Now, assume that \eqref{U(1)32} has been proven for all $k = 1,\cdots,K$ for some $K > 0$. By the general Leibniz rule,
\begin{equation}\label{eq:higher_derivative}
  f_1^{(K+1)}(s) = \sum_{k=0}^{K+1} f_2^{(K+1-k)}(s) \frac{\mathrm{d}^k}{\mathrm{d}^k s}\langle O \rangle_s.
\end{equation}
Using a similar logic as in \eqref{U(1)35}, we can write down the higher order derivatives of $f_1$ and $f_2$ below:
\begin{equation}\label{U(1)37}
\begin{aligned}
f_1^{(q)}(s) & = \sum_{(n+m)\in \mathbb{Z}\setminus\{0\}} \ii^n J_n(-\beta) \hat O(m) h_{1,q}(s) e^{-\frac{(m+n)^2}{2s}} \text{ and }\\
f_2^{(r)}(s) & = \sum_{n\in\mathbb{Z}\setminus\{0\}} \ii^n J_n(-\beta) h_{2,r}(s) e^{-\frac{n^2}{2s}},
\end{aligned}
\end{equation}
where both $h_{1,k}(s)$ and $h_{2,k}(s)$ refer to polynomials in $\frac{1}{s}$ of degree $2k$. Note that the interchanges between the infinite sums and the $q$ and $r$-th order derivatives are still justified. This is because similar to \eqref{U(1)40}, the presence of $e^{-\frac{(m+n)^2}{2s}}$ will always be able to overwhelm any polynomials in $\frac{1}{s}$ and create $\frac{1}{(m+n)^\gamma}$ for $\gamma$ sufficiently large.
Taking the limit $s \rightarrow 0^+$ on both sides of \eqref{eq:higher_derivative}, we obtain
\begin{equation}
0 = f_2(0) \lim_{s\rightarrow 0^+} \frac{\mathrm{d}^{K+1}}{\mathrm{d}^{K+1} s}\langle O \rangle_s.
\end{equation}
Thus \eqref{U(1)32} holds for $k = K+1$ since $f_2(0) \neq 0$. By the principle of mathematical induction, \eqref{U(1)32} holds for all positive integer $k$.
\end{proof}

Upon acknowledging the information presented in Proposition \ref{U(1)Prop4}, we can investigate the structure of $\langle O \rangle_s$ presented in \eqref{U(1)33}. From here, we can consider regressors in the form of a ratio of sum of exponential functions as summarised below:

\begin{proposition}\label{U(1)Prop5}
For any observable $O$ satisfying the conditions as stated in Proposition \ref{U(1)Prop1}, an appropriate rational representation would be:
\begin{equation}\label{U(1)38}
\begin{aligned}
    \langle O \rangle_s &= \frac{\sum_{k=0}^\infty a_k e^{-\frac{k^2}{2s}}}{\sum_{k=0}^\infty b_k e^{-\frac{k^2}{2s}}}
\end{aligned}
\end{equation}
\end{proposition}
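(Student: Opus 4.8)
The plan is to start directly from the closed form \eqref{U(1)33} obtained in the proof of \Cref{U(1)Prop4},
\begin{equation*}
\langle O \rangle_s = \frac{\sum_{m\in\mathbb{Z}}\sum_{n\in\mathbb{Z}} \ii^n J_n(-\beta)\widehat{O}_m\, e^{-\frac{(m+n)^2}{2s}}}{\sum_{n\in\mathbb{Z}}\ii^n J_n(-\beta)\, e^{-\frac{n^2}{2s}}},
\end{equation*}
and to reorganize the numerator and denominator so that each becomes a single series in the exponentials $e^{-k^2/(2s)}$ indexed by $k \geq 0$. The key observation is that the exponent in every term depends on the summation indices only through the squared quantities $(m+n)^2$ and $n^2$; hence terms may be grouped according to the value of $|m+n|$ (resp.\ $|n|$), and this grouping is exactly what produces the claimed coefficients $a_k$ and $b_k$.

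Before rearranging I would record that the double series converges absolutely for every fixed $s>0$. Since $e^{-(m+n)^2/(2s)} \le 1$, the bound $M(\beta) := \sum_{n\in\mathbb{Z}}|J_n(-\beta)| < \infty$ from the proof of \Cref{U(1)Prop4}, together with the H\"older estimate \eqref{Holder}, which gives $\sum_{m\in\mathbb{Z}}|\widehat{O}_m| < \infty$, yields $\sum_{m,n}|\ii^n J_n(-\beta)\widehat{O}_m|\, e^{-(m+n)^2/(2s)} \le M(\beta)\sum_{m}|\widehat{O}_m| < \infty$. This absolute convergence licenses every reindexing and regrouping below via Fubini's theorem for series.

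For the numerator I would perform the bijective substitution $j = m+n$ on $\mathbb{Z}^2$ and sum first over $n$ for each fixed $j$, which gathers all contributions carrying the factor $e^{-j^2/(2s)}$ into a single coefficient $c_j := \sum_{n\in\mathbb{Z}} \ii^n J_n(-\beta)\widehat{O}_{j-n}$ (a convolution that converges absolutely by the same bound). The numerator then reads $\sum_{j\in\mathbb{Z}} c_j\, e^{-j^2/(2s)}$. Folding the $j$ and $-j$ terms, which share the exponential $e^{-|j|^2/(2s)}$, I would set $a_0 = c_0$ and $a_k = c_k + c_{-k}$ for $k \ge 1$, so that the numerator equals $\sum_{k=0}^\infty a_k\, e^{-k^2/(2s)}$. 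The denominator is handled identically and more simply: grouping $n$ with $-n$ gives $b_0 = J_0(-\beta)$ and $b_k = \ii^k J_k(-\beta) + \ii^{-k} J_{-k}(-\beta)$ for $k \ge 1$, whence the denominator equals $\sum_{k=0}^\infty b_k\, e^{-k^2/(2s)}$. Dividing the two series produces \eqref{U(1)38}.

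The step demanding the most care is the justification of the term-by-term regrouping, namely verifying that after the substitution $j=m+n$ both the inner sums defining $c_j$ and the outer rearrangement into $k=|j|$ are legitimate; all of this reduces to the absolute-convergence bound above, so the obstacle is bookkeeping rather than genuine analysis. I would also emphasize that this construction renders the coefficients explicit, with $a_k$ a fold of the Bessel--Fourier convolution and $b_k$ depending only on $\beta$, which is precisely the structural form exploited by the regression model introduced thereafter.
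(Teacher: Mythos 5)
Your proposal is correct and follows essentially the same route as the paper, whose proof is simply the one-line remark that \eqref{U(1)38} ``follows directly by considering all possible integer combinations'' in \eqref{U(1)33}; you have merely spelled out that regrouping explicitly (the substitution $j=m+n$, the folding of $\pm j$, and the absolute-convergence justification via $M(\beta)$ and \eqref{Holder}). The added detail, in particular the explicit convolution formula for $a_k$ and $b_k$, is a faithful elaboration rather than a different argument.
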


\begin{proof}
This follows directly by considering all possible integer combinations of both the numerator and the denominator in \eqref{U(1)33}.
\end{proof}

Henceforth, we can use the rational representation in \eqref{U(1)38} and consider the following regression model:

\begin{equation}\label{U(1)39}
\langle O \rangle_s = \frac{\sum_{k=0}^M a_k e^{-\frac{k^2}{2s}}}{1 + \sum_{k=1}^M b_k e^{-\frac{k^2}{2s}}} := \frac{A_M(s)}{B_M(s)}.
\end{equation}
In the regression model above, the coefficients $a_k$ and $b_k$ are obtained by minimizing the objective function:
\begin{equation}\label{eq:lossfunction}
\argmin_{a_k,b_k} \int_{s_{\min}}^{s_{\max}} |A_M(s) - f(s)B_M(s)|^2 \mathrm{d}s,
\end{equation}
where $s_{\min}$ and $s_{\max}$ are the lower and upper bounds of $s$ for which we can obtain the value of $\langle O \rangle_s$ via simulation, and $f(s)$ refers to a certain approximation of $\langle O \rangle_s$ for $s \in [s_{\min}, s_{\max}]$. In our experiments below, $f(s)$ is chosen as a polynomial of $s$ and is obtained via least squares approximation.

\subsection{Numerical Results}

In this subsection, we will attempt to include simulations and regressions conducted for $O(x) = e^{\ii x}$ with $\beta = 0.5$. Here, we note that from \cite{scherzer2019clandboundary}, the exact value of $\langle e^{\ii x} \rangle$ at $\beta = 0.5$ is given by 
$$ \langle e^{\ii x} \rangle = \frac{I_1(-0.5\ii)}{I_0(-0.5\ii)} = 0  -0.25815\ii.$$

First, we will attempt to obtain an estimate for $\langle e^{\ii x} \rangle$ through the use of regularization and regression. Note that we have shown that for $s\geq 0.8$, the numerical values obtained from the complex Langevin method are accurate. In addition, as inspired from Figure \ref{fig:U(1)NumFailFig}, we start to see a divergence between the true curve and the complex Langevin values at $s \approx 0.4$. Thus, we will attempt to include the complex Langevin values for two different cases, $s \geq 0.4$ and $s \geq 0.8$, and apply the regression model in Proposition \ref{U(1)Prop5} with different values of $M$ for each case. The numerical values were generated using a fixed time step of $\Delta t = 10^{-4}$, with each sample obtained after every $2000$ steps for a total of $10^6$ samples for each value of $s \in \{0.01k \hspace{3pt} | \hspace{3pt} 40 \leq k \leq 150, k \in \mathbb{Z} \}$. Due to fluctuations present in the raw data set, we have employed an interpolation using a quartic polynomial in $s$ to average out the fluctuations prior to solving the optimization problem \eqref{eq:lossfunction}. Here, we note that this is consistent with the original regression model in Proposition \ref{U(1)Prop5}, in which for $s$ not close to $0$, we do not have an issue with a flat curve as $s$ approaches $0$ and can therefore approximate such as expression with an appropriate polynomial. The relevant data points and regression curves are summarized in \Cref{fig:U(1)DirectCurves}.

\begin{figure}[htbp]
\centering
\includegraphics[width=0.48\textwidth]{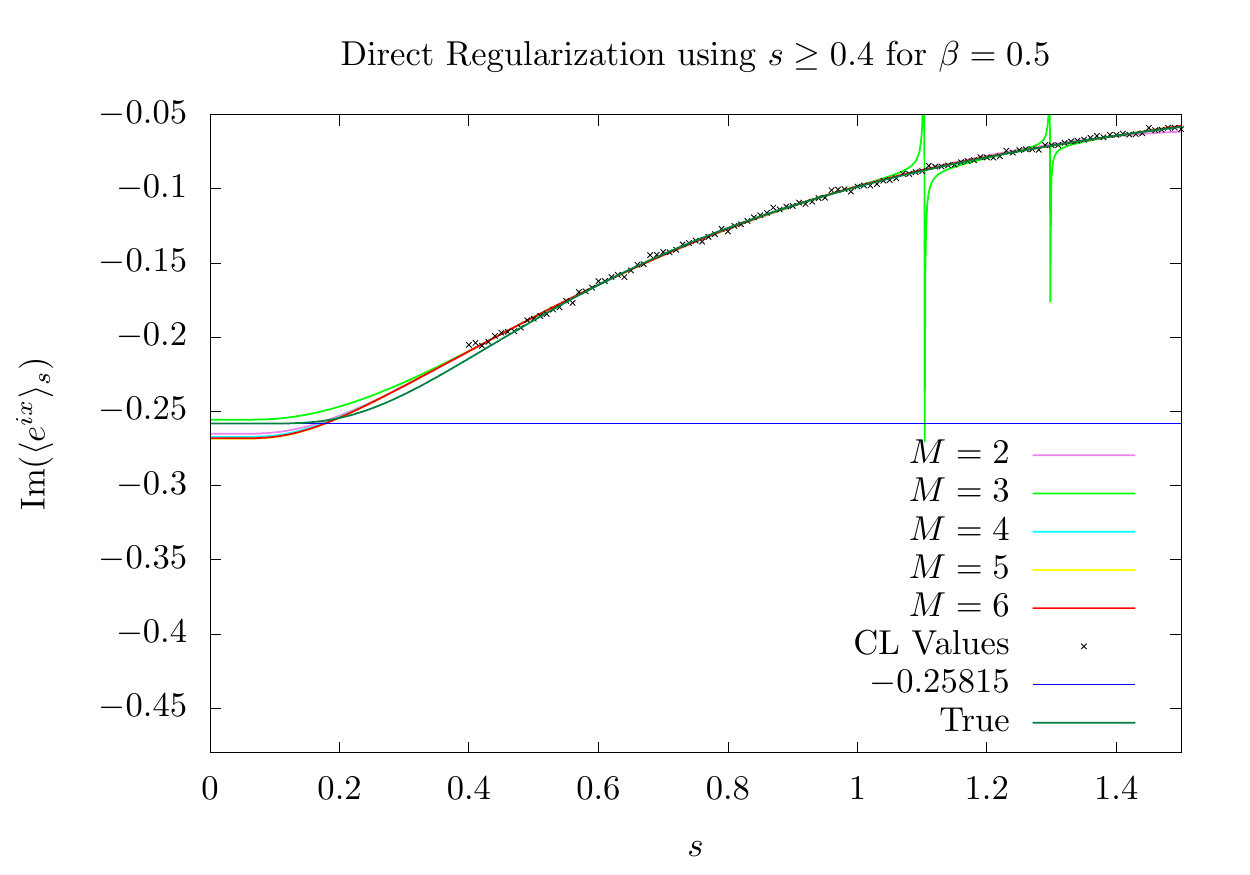}
\includegraphics[width=0.48\textwidth]{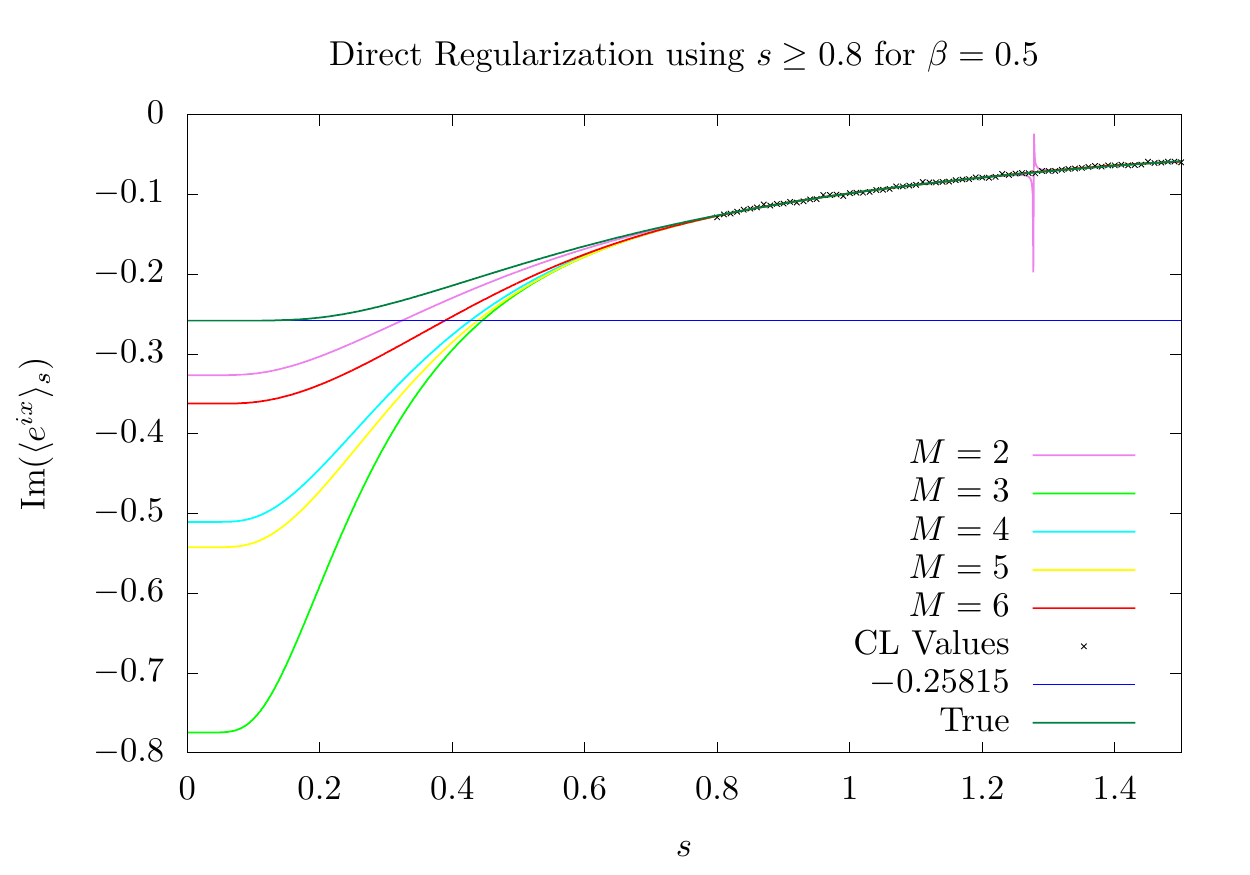}
\caption{Numerical results obtained by performing the relevant regression on observables via direct regularization. The left plot utilizes data points for $s\geq 0.4$ and the right plot utilizes data points for $s \geq 0.8$.}
\label{fig:U(1)DirectCurves}
\end{figure}

Next, we will illustrate the possible advantage obtained by reweighting our observables in accordance to Proposition \ref{U(1)Prop3}. As observed in \Cref{fig:U(1)RewDivFig}, 
starting from $s = 0.4$, the corresponding standard error grows rapidly as $s$ goes to $0$. Thus, this portion of data may not be suitable to be used in the regression. We would therefore like to remove part of the information from our data set. The criterion for this is based on the p-value and will be described in the following paragraph.

\begin{figure}[htbp]
\centering
\includegraphics[width=0.48\textwidth]{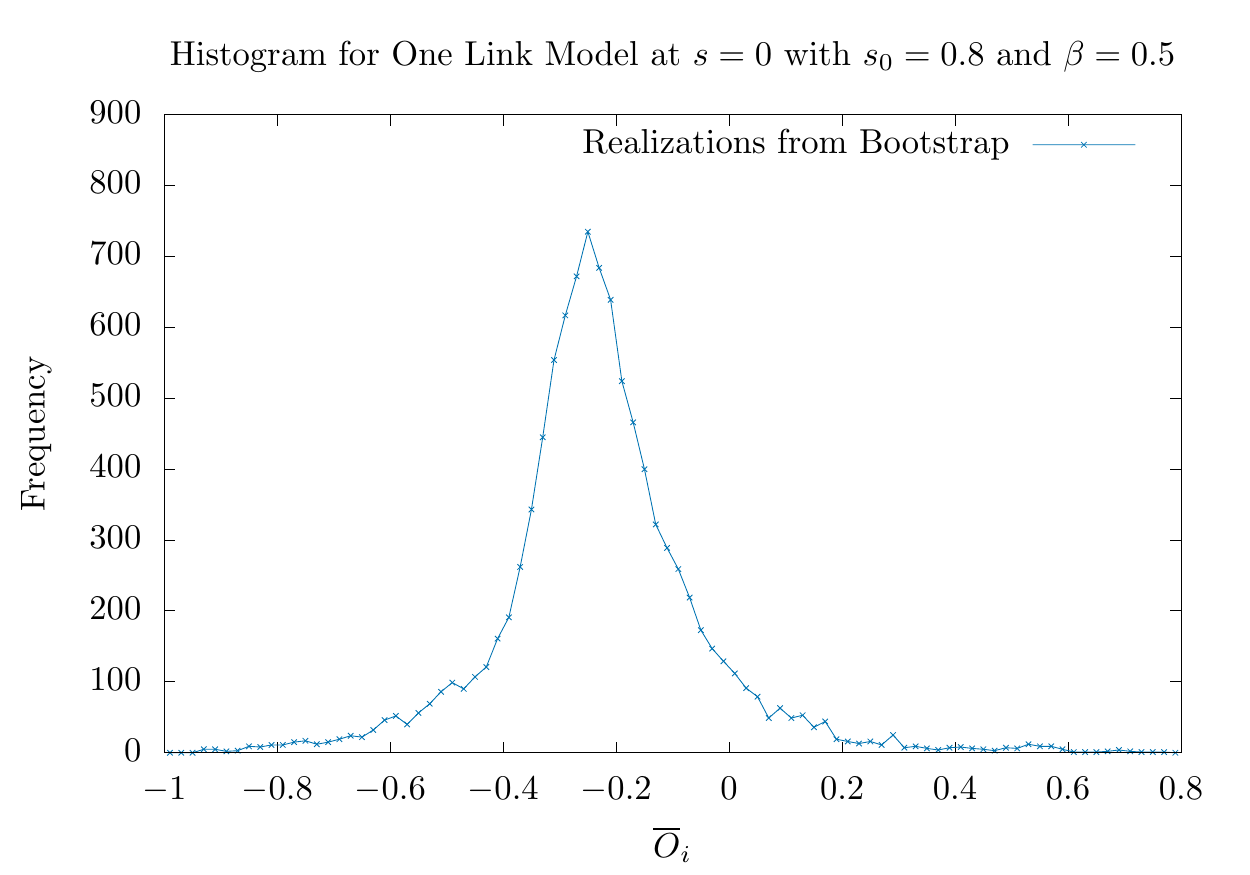}
\includegraphics[width=0.48\textwidth]{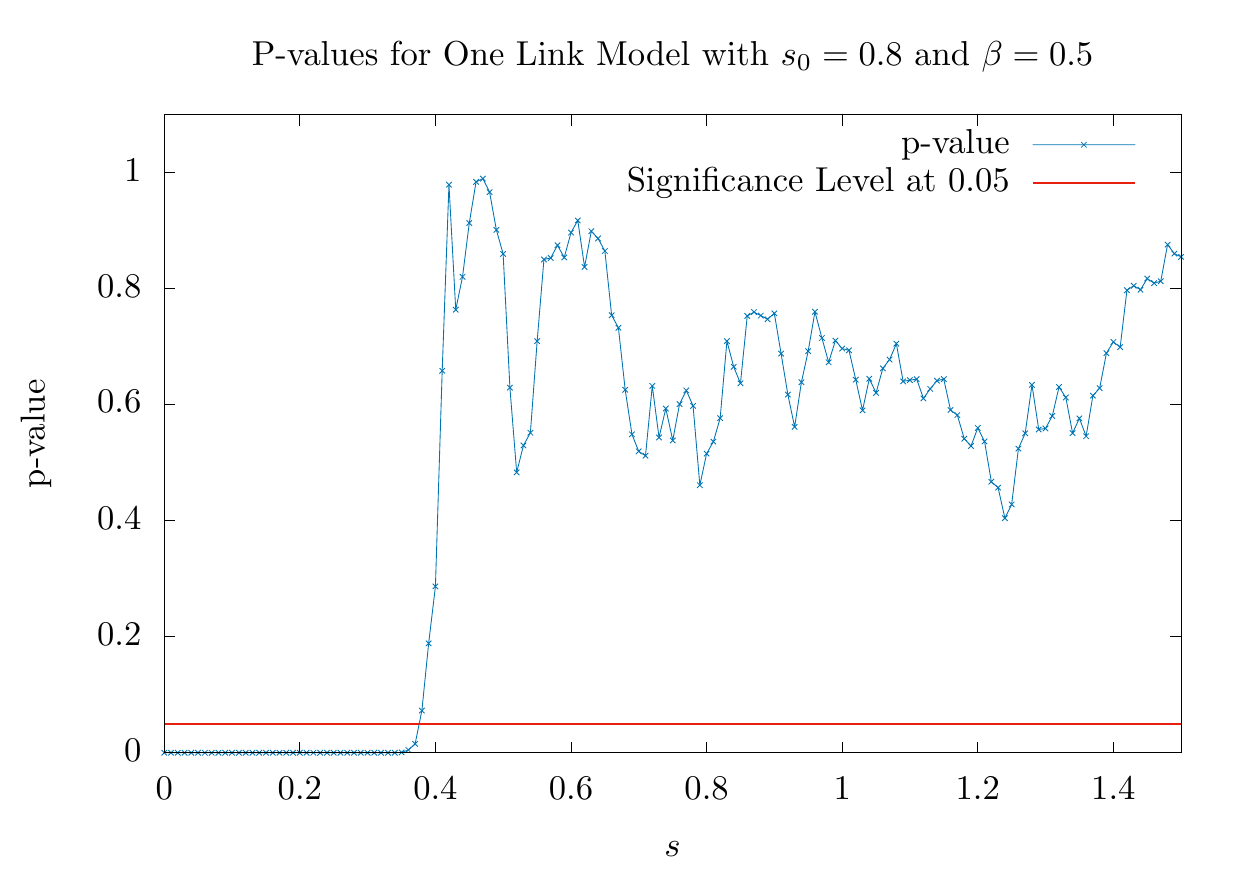}
\caption{The figure on the left depicts the histogram of the obtained distribution for ${\overline{O}_i}$ at $s = 0$. The figure on the right corresponds to the corresponding p-values obtained for $0 \leq s \leq 1.5$.}
\label{fig:U(1)Hist&Pvalues}
\end{figure}

The growth of the standard error as $s$ approaches $0$ can be explained as follows. First, we label each realization of the mean of the observable $e^{\ii x}$ as $\overline{O}_i$ for each iteration of the bootstrap method. Next, we obtain a histogram for the ${\overline{O}_i}$, as shown in the left diagram of Figure \ref{fig:U(1)Hist&Pvalues}. From there, we can observe that although the distribution looks somewhat symmetric and normal, the distribution of the ${\overline{O}_i}$ seems to be concentrated more at its mean. We can support this with the use of a Kolmogorov-Smirnov test, conducted against a normal distribution at each value of $s$. If the p-value at a given $s$ happens to be below $0.05$, we will reject the null hypothesis that the underlying distribution for $\langle O\rangle_s = \langle e^{\ii x} \rangle_s$ is normal and concluding that the underlying distribution at that value of $s$ is non-normal. Under the Generalized Central Limit Theorem, an instance in which a mean distribution converges to a non-normal distribution must corresponds to the fact that the underlying population has an infinite variance. Thus, as seen from the right diagram of Figure \ref{fig:U(1)Hist&Pvalues}, we will perform regression using points generated for $s \geq 0.39$ which corresponds to values of $s$ with $p-$value greater than or equals to $0.05$. The results are summarized below.

\begin{figure}[htbp]
\centering
\includegraphics[width=0.48\textwidth]{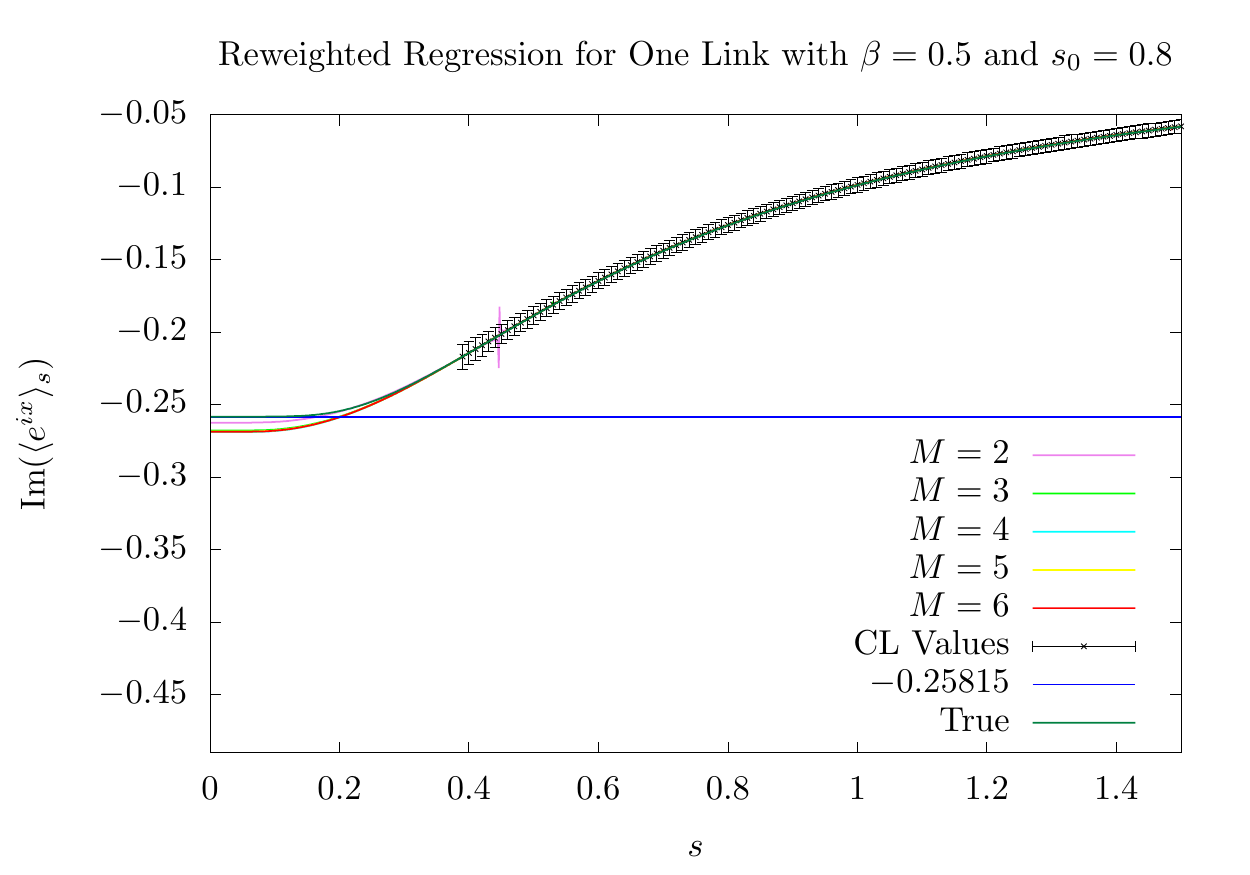}
\includegraphics[width=0.48\textwidth]{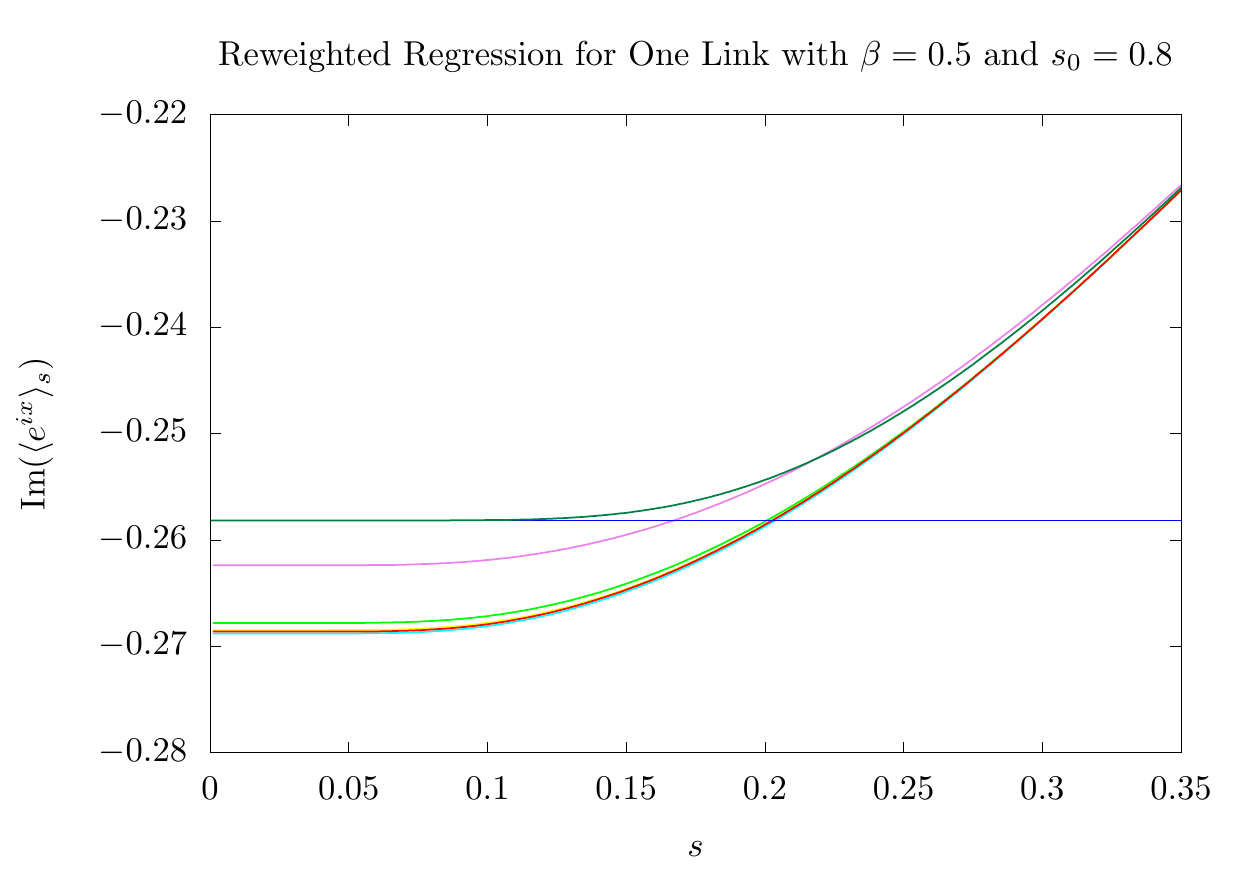}
\caption{Numerical results obtained by performing the relevant regression on reweighted observables. The plot on the right zooms in to display the behaviour for curves of different $M$ for small values of $s$ and values of relevant observable close to $-0.25815$.}
\label{fig:U(1)RewCurves}
\end{figure}

The numerical results for both methods, direct regression with regularized action ($2R$ method) and regressing reweighted observables ($3R$ method)\footnote{The three ``$R$''s mentioned here correspond to regularization, regression and reweighting. The missing $R$ in the first method corresponds to regression done without any sort of reweighting.}, are summarized in Table \ref{tab:U1NumResults}. Note that we have excluded the estimates obtained by the $2R$ method using data points with $s \geq 0.8$ as we can see from \Cref{fig:U(1)DirectCurves} that the values of $\im{\langle e^{ix} \rangle}$ predicted for the given values of $M$ are both inaccurate and imprecise. 
\begin{table}[htbp]
\centering
\caption{Estimates of $\im{\langle e^{\ii x}} \rangle$ for various methods at $\beta = 0.5$. Note that the \% Discrepancy is calculated with respect to the true value of $-0.25815$.}
\begin{tabular}{|c|c|c|c|c|}
\hline
    & Direct Reg $s\geq 0.4$ & \%  & Reweighted  & \%  \\
    $M$ & (\Cref{fig:U(1)DirectCurves}; $2R$ Method) & Disp &  (\Cref{fig:U(1)RewCurves}; $3R$ Method) & Disp \\
\hline
    $2$ & $-0.265004$ & $2.7$ &  $-0.262370$ & $1.6$ \\
    $3$ & $-0.255572$ & $1.0$ &  $-0.267792$ & $3.7$ \\
    $4$ & $-0.267196$ & $3.5$ &  $-0.268786$ & $4.1$ \\
    $5$ & $-0.268420$ & $4.0$ &  $-0.268468$ & $4.0$ \\
    $6$ & $-0.268089$ & $3.9$ &  $-0.268611$ & $4.1$ \\
\hline
\end{tabular}
\label{tab:U1NumResults}
\end{table}

We summarize some of the relevant key observations from Figures \ref{fig:U(1)DirectCurves}, \ref{fig:U(1)Hist&Pvalues}, and  \ref{fig:U(1)RewCurves} and Table \ref{tab:U1NumResults} below.
\begin{itemize}
    \item From Table \ref{tab:U1NumResults}, for a fixed $M \geq 3$, we can see that both methods are on par in terms of their accuracy in estimating the value of $\im \langle e^{ix} \rangle$. 
    \item However, the matched performance of the $2R$ Method for $s\geq 0.4$ as mentioned is on top of the fact that we have used \textit{a priori} information on the divergence of the numerical values generated using our complex Langevin algorithm as in \Cref{fig:U(1)NumFailFig}. This piece of information might not be available for general $\operatorname{U}(1)$ models such as the 3D XY Model.
    \item In the absence of \textit{a priori} information, as mentioned, the results obtained via the $2R$ Method for $s \geq 0.8$ are both inaccurate and imprecise. This can be already be seen from the right plot in \Cref{fig:U(1)DirectCurves}. Therefore, to get better results using the $2R$ method, one may consider including some biased results with acceptable errors (e.g. $\langle O \rangle_s$ with $s \in (0.4,0.8)$ in this example).
    \item In addition, we can see from \Cref{fig:U(1)NumFailFig} that the true curve (for $\im{\langle e^{\ii x}\rangle}_s$ as a function of $s$) diverges from the numerical value at $s \approx 0.4$, which is consistent with our non-normality test as explained for \Cref{fig:U(1)Hist&Pvalues}. As of now, we are not sure if this is a coincidence or if there are sufficient mathematical grounds to justify such a phenomenon.
\end{itemize}

\begin{remark}
In theory, only samples drawn from populations of infinite variance will give rise to stable distributions (instead of normal distributions) under the limit of a large sample size. However, depending on the type of bootstrapping method used, there are other types of distribution in which a simple naive bootstrapping method, like the one we have employed, might fail. Furthermore, as shown in \cite{InfVarBootstrap}, even for population distributions with infinite population variance that are suitably well-behaved, we note that the resulting distribution might not be that of a stable distribution, but rather, a random distribution. Nonetheless, the success of the naive bootstrapping method for large values of $s$ indicates that we might not face issues that we will with standard counterexamples to naive bootstrapping methods, such as the extreme order statistics but rather, the issue can be attributed to infinite population variance.
\end{remark}

\section{Generalization to multi-dimensional integrals in the \texorpdfstring{$\operatorname{U}(1)$}{U(1)} theory} \label{sec:GenU1} 
In this section, we will attempt to extend the method used for the $\operatorname{U}(1)$ one-link model to integrals on $\mathbb{T}^N$, $N \geqslant 1$ with more general actions.
This includes the $\operatorname{U}(1)$ lattice field theory, where $N$ equals the number of links.
As in \Cref{sec:CL}, we will use  $\{\phi\}$ to denote the collection of all the $N$ variables $\phi_1, \cdots, \phi_N$, and each $\phi_k$ is a variable in $\mathbb{T}$.
The action and the observable will be denoted by $S(\{\phi\})$ and $O(\{\phi\})$ respectively.
To extend the regularization to multi-dimensional models, we adopt the uniform regularization in all directions, that is, the regularized observable $\langle O \rangle_s$ is defined by
\begin{equation}\label{3DXY25}
\langle O \rangle_s = \frac{1}{Z_s} \int_{\mathbb{R}^N} O(\{\phi\}) \exp \left( -S(\{\phi\}) - \frac{s}{2} \sum_{j=1}^N \phi_j^2 \right) \mathrm{d}\phi_1 \cdots \mathrm{d}\phi_N,
\end{equation} with $Z_s$ defined by
\begin{equation}\label{3DXY26}
Z_s = \int_{\mathbb{R}^N} \exp \left( -S(\{\phi\}) - \frac{s}{2} \sum_{j=1}^N \phi_j^2 \right) \mathrm{d}\phi_1 \cdots \mathrm{d}\phi_N.
\end{equation}
The extension to multi-dimensional models is done by generalizing Propositions \ref{U(1)Prop1} to \ref{U(1)Prop5} whenever possible, giving the necessary proofs unless a given proposition generalizes clearly.

We start off by generalizing Proposition \ref{U(1)Prop1} as follows:
\begin{proposition}\label{3DXYProp1}
Suppose that both $e^{-S(\{ \phi \})}$ and $O(\{ \phi\})$ are $C^{\infty}$ functions on $\mathbb{T}^N$. Then it holds that
\begin{displaymath}
\lim_{s\rightarrow 0^+} \langle O \rangle_s = \langle O \rangle.
\end{displaymath}
\end{proposition}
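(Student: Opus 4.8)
The plan is to mirror the one-dimensional argument of Proposition~\ref{U(1)Prop1}, except that the explicit Jacobi--Anger/Bessel expansion is replaced by the general fact that a $C^\infty$ function on $\mathbb{T}^N$ has rapidly decaying Fourier coefficients. The whole statement reduces to a single lemma: for any $g$ that is $C^\infty$ on $\mathbb{T}^N$, hence extends to a smooth $2\pi$-periodic function on $\mathbb{R}^N$ with Fourier series $g(\{\phi\}) = \sum_{\{m\}\in\mathbb{Z}^N} \widehat{g}_{\{m\}}\exp(\ii\, \{m\}\cdot\{\phi\})$ where $\{m\}\cdot\{\phi\} = \sum_{j=1}^N m_j\phi_j$ and $\widehat{g}_{\{m\}} = (2\pi)^{-N}\int_{\mathbb{T}^N} g(\{\phi\})e^{-\ii\,\{m\}\cdot\{\phi\}}\,\mathrm{d}\{\phi\}$, one has
\begin{equation}
\lim_{s\rightarrow 0^+} \ml \frac{s}{2\pi} \mr^{N/2} \int_{\mathbb{R}^N} g(\{\phi\}) \exp\ml -\frac{s}{2}\sum_{j=1}^N \phi_j^2 \mr \mathrm{d}\{\phi\} = \widehat{g}_{\{0\}} = \frac{1}{(2\pi)^N}\int_{\mathbb{T}^N} g(\{\phi\}) \,\mathrm{d}\{\phi\}.
\end{equation}
Granting this, I would apply it twice, once with $g = O e^{-S}$ to handle the numerator of $\langle O \rangle_s$ and once with $g = e^{-S}$ to handle $Z_s$, and then divide: the common prefactor $(s/2\pi)^{N/2}$ cancels, and the limit of the ratio equals $\widehat{(Oe^{-S})}_{\{0\}}\big/\widehat{(e^{-S})}_{\{0\}}$, in which the $(2\pi)^{-N}$ factors also cancel, leaving exactly $\langle O \rangle$.

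To prove the lemma, I would interchange the summation with the Gaussian integral and evaluate the latter by factorizing it over the $N$ coordinates, each one-dimensional factor being $\int_{\mathbb{R}} e^{\ii m_j \phi_j} e^{-s\phi_j^2/2}\,\mathrm{d}\phi_j = \sqrt{2\pi/s}\, e^{-m_j^2/(2s)}$, which is the $\beta=0$ specialization of Lemma~\ref{LemmaIntegral}. This gives
\begin{equation}
\ml \frac{s}{2\pi} \mr^{N/2}\int_{\mathbb{R}^N} g(\{\phi\})\, e^{-\frac{s}{2}\sum_{j} \phi_j^2}\,\mathrm{d}\{\phi\} = \sum_{\{m\}\in\mathbb{Z}^N} \widehat{g}_{\{m\}}\, e^{-\frac{1}{2s}\sum_{j=1}^N m_j^2}.
\end{equation}
As $s\rightarrow 0^+$, every summand with $\{m\}\neq\{0\}$ is annihilated by the factor $e^{-|\{m\}|^2/(2s)}$, while the $\{m\}=\{0\}$ term is exactly $\widehat{g}_{\{0\}}$. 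Since $g\in C^\infty(\mathbb{T}^N)$ forces $\widehat{g}_{\{m\}}$ to decay faster than any polynomial, in particular $\sum_{\{m\}}|\widehat{g}_{\{m\}}| < \infty$, the series is absolutely summable uniformly in $s$, so the limit may be taken termwise and the lemma follows.

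The only delicate point, and the place where the $C^\infty$ hypothesis does real work, is the interchange of the infinite sum with the integral over $\mathbb{R}^N$; this is the main obstacle in the sense that it is the only step requiring a nontrivial estimate, the Gaussian factorization and the termwise limit being routine. I would justify it by the Dominated Convergence Theorem exactly as in \eqref{U(1)43}: the partial sums are bounded in absolute value by $e^{-\frac{s}{2}\sum_{j}\phi_j^2}\sum_{\{m\}}|\widehat{g}_{\{m\}}|$, which is integrable on $\mathbb{R}^N$ for each fixed $s>0$ precisely because the rapid decay guarantees $\sum_{\{m\}}|\widehat{g}_{\{m\}}|<\infty$. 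This is the multidimensional counterpart of the H\"older-type bound used in the one-link case, with $C^\infty$ smoothness on the torus replacing the $(1+\alpha)$-H\"older condition. Finally, I would note that the quotient is well defined, since $\widehat{(e^{-S})}_{\{0\}} = Z/(2\pi)^N \neq 0$.
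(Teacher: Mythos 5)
Your proposal is correct and follows essentially the same route as the paper's proof: Fourier expansion on $\mathbb{T}^N$, factorized Gaussian integrals, interchange of sum and integral justified by the rapid decay of Fourier coefficients, and a termwise limit isolating the zero mode. The only difference is cosmetic — you expand the single product $Oe^{-S}$ rather than expanding $O$ and $e^{-S}$ separately and identifying $\sum_{\vec{m}}\widehat{O}_{\vec{m}}\widehat{\beta}_{-\vec{m}}$ as the zero-frequency coefficient of the product afterwards, which slightly streamlines the final bookkeeping.
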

\begin{remark}
The smoothness of $O(\{\phi\})$ and $e^{-S(\{ \phi\})}$ guarantees the interchangeability of the infinite sum and the integrals.
Similar to the one-dimensional case (\Cref{U(1)Prop1}), the $C^{\infty}$ requirement can be weakened to a certain H\"older class.
However, this $C^{\infty}$ condition is satisfied in most applications of the complex Langevin method, due to the analytic extensibility of both functions to the complexified space.
\end{remark}
\begin{proof}
The proof will largely mimic that of Proposition \ref{U(1)Prop1}. We can write down the multi-dimensional Fourier series of $O(\{\phi\})$ and $e^{-S(\{\phi\})}$ in the following form:
\begin{equation}\label{3DXY1}
O(\{\phi\}) = \sum_{\vec{m} = (m_1, \cdots, m_N) \in \mathbb{Z}^N} \widehat{O}_{\vec{m}} \exp \left( \ii \sum_{j=1}^N m_j \phi_j \right), \quad
e^{-S(\{\phi\})} = \sum_{\vec{k} = (k_1, \cdots, k_N) \in \mathbb{Z}^N} \widehat{\beta}_{\vec{k}} \exp \left( \ii \sum_{j=1}^N m_j \phi_j \right).
\end{equation}
In particular, analogous to Proposition \ref{U(1)Prop1}, the $C^{\infty}$ regularity allows for the interchange of the relevant infinite sum and the $N$-dimensional integral. Thus, it is sufficient to only consider integrals in the following form:
\begin{equation}\label{3DXY3}
\begin{aligned}
&\int_{\mathbb{R}^N} 
\exp \left( \ii \sum_{j=1}^N m_j \phi_j \right) \exp \left( -S(\{\phi\}) - \frac{s}{2} \sum_{j=1}^N \phi_j^2 \right) \mathrm{d} \phi_1 \cdots \mathrm{d} \phi_N \\
=& \int_{\mathbb{R}^N} \left[ \sum_{\vec k = (k_1,\cdots,k_N) \in \mathbb{Z}^N} \widehat{\beta}_{\vec k} \exp \ml \ii \sum_{j=1}^N k_j\phi_j \mr \right] \ml \prod_{j=1}^N e^{\ii m_j \phi_j -s\frac{\phi_j^2}{2}} \mr \mathrm{d} \phi_1 \cdots \mathrm{d} \phi_N \\
=& \sum_{\vec k = (k_1,\cdots,k_N) \in \mathbb{Z}^N} \widehat{\beta}_{\vec k} \int_{\mathbb{R}^N} \prod_{j=1}^N e^{ \ii (m_j + k_j) \phi_j -s\frac{\phi_j^2}{2} } \mathrm{d} \phi_1 \cdots \mathrm{d} \phi_N \\
=& \left( \frac{2\pi}{s} \right)^{N/2} \sum_{\vec{k} = (k_1, \cdots, k_N) \in \mathbb{Z}^N} \widehat{\beta}_{\vec k} \exp \left( \frac{1}{2s} \sum_{j=1}^N (m_j + k_j)^2 \right).
\end{aligned}
\end{equation}
Here we have again interchanged the summation and the integral, which can be justified by an argument similar to \eqref{U(1)11} using the smoothness of $e^{-S(\{\phi\})}$.
Thus, we have
\begin{equation}\label{3DXY14}
\langle O \rangle_s = \frac{1}{Z_s}  \ml \frac{2\pi}{s} \mr^{N/2} \sum_{\vec{m} \in \mathbb{Z}^N} \widehat{O}_{\vec{m}} \sum_{\vec k \in \mathbb{Z}^N}\widehat{\beta}_{\vec k}  \exp \left( -\frac{1}{2s} \sum_{j=1}^N (m_j+k_j)^2 \right), 
\quad Z_s = \ml \frac{2\pi}{s} \mr^{N/2} \sum_{\vec{k} \in \mathbb{Z}^N}\widehat{\beta}_{\vec{k}} \exp \left(-\frac{1}{2s} \sum_{j=1}^N k_j^2 \right),
\end{equation}
from which we can obtain the limit
\begin{equation} \label{eq:lim_Os}
\lim_{s\rightarrow 0^+} \langle O \rangle_s = \frac{1}{\widehat{\beta}_{\vec{0}}}\sum_{\vec{m} \in \mathbb{Z}^N} \widehat{O}_{\vec{m}} \widehat{\beta}_{-\vec m}.
\end{equation}

It remains to show that the right-hand-side of \eqref{eq:lim_Os} equals $\langle O \rangle$.
To this end, we write $\langle O \rangle$ as
\begin{equation}\label{3DXY34}
\langle O \rangle = \frac{(2\pi)^{-N} \int_{\mathbb{T}^N} O(\{\phi\}) e^{-S(\{\phi\})}\mathrm{d}\{\phi\}}{(2\pi)^{-N} \int_{\mathbb{T}^N} e^{-S(\{\phi\})} \mathrm{d}\{\phi\}}.
\end{equation}
By definition, it is clear that $\beta_{\vec{0}}$ is the (unweighted)\footnote{Here, unweighted mean of an expression $f$ refers to $\langle f\rangle_{\text{unweighted}} = \frac{\int_D f(x) \mathrm{d}x}{\int_D 1 \mathrm{d}x}$, where $D$ is the appropriate domain for consideration. One can compare this expression to the expression for weighted means as in \eqref{3DXY25} and \eqref{3DXY26}.}
mean value of $e^{-S(\{\phi\})}$, which is equals the denominator in \eqref{3DXY34}. Since
\begin{equation}\label{3DXY33}
O(\{\phi\})e^{-S(\{\phi\})} = \left( \sum_{\vec{k} \in \mathbb{Z}^N} \widehat{O}_{\vec k} \prod_{j=1}^N e^{\ii k_j \phi_j} \right)
\left( \sum_{\vec{k} \in \mathbb{Z}^N} \beta_{\vec k} \prod_{j=1}^N e^{\ii k_j \phi_j} \right) = \sum_{\vec{k} \in \mathbb{Z}^N} \left( \sum_{\vec{m} \in \mathbb{Z}^N} \widehat{O}_{\vec{m}} \beta_{\vec{k}-\vec{m}} \right)\prod_{j=1}^N e^{\ii k_j \phi_j},
\end{equation}
we see that the right-hand side of \eqref{eq:lim_Os} is the Fourier coefficient of the zero-frequency mode in the expansion of $O(\{\phi\})e^{-S(\{\phi\})}$,
and thus corresponds to the numerator in \eqref{3DXY34}.
We have thus completed the proof.
\end{proof}

Next, we note the range of $s(\beta)$ depends largely on the functional form of $S(\{ \phi\})$. Thus, Proposition \ref{U(1)Prop2} does not generalize easily. Instead, we will proceed to generalize Proposition \ref{U(1)Prop3}. However, this is straightforward, as we only have to replace  $|x|^2$ with the sum of $|\phi_j|^2$, as elaborated below:
\begin{proposition}\label{3DXYProp3}
The following equality holds for all $s,s_0 \geq 0$:
\begin{equation}\label{3DXY12}
 \langle O(\{\phi\}) \rangle_s = \frac{\left\langle O (\{\phi\}) \exp\ml \frac{s_0-s}{2} \sum_{j=1}^N |\phi_j|^2 \mr \right\rangle_{s_0}}{\left\langle \exp\ml \frac{s_0-s}{2} \sum_{j=1}^N |\phi_j|^2 \mr \right\rangle_{s_0}}.
\end{equation}
\end{proposition}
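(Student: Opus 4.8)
The plan is to establish \eqref{3DXY12} as a direct reweighting identity, exactly mirroring the one-dimensional argument behind Proposition \ref{U(1)Prop3}; no Fourier expansions, Bessel identities, or dominated-convergence arguments are needed here, in contrast to the limiting statement of Proposition \ref{3DXYProp1}. The single algebraic observation driving everything is that the Gaussian regularization weight factorizes multiplicatively in the parameter $s$. Writing $\Sigma(\{\phi\}) = \sum_{j=1}^N \phi_j^2$, which coincides with $\sum_{j=1}^N |\phi_j|^2$ on the real integration domain $\mathbb{R}^N$, one has
\begin{equation*}
\exp\left(-\frac{s}{2}\Sigma\right) = \exp\left(\frac{s_0-s}{2}\Sigma\right)\exp\left(-\frac{s_0}{2}\Sigma\right).
\end{equation*}
Inserting this factorization into the defining integrals \eqref{3DXY25} and \eqref{3DXY26} rewrites both the numerator and the denominator of $\langle O\rangle_s$ as integrals taken against the $s_0$-regularized weight $\exp\left(-S-\frac{s_0}{2}\Sigma\right)$.

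Concretely, first I would substitute the factorization into the numerator $\int_{\mathbb{R}^N} O\,\exp\left(-S-\frac{s}{2}\Sigma\right)\,\mathrm{d}\{\phi\}$ and, after dividing and multiplying by $Z_{s_0}$, recognize it as $Z_{s_0}\,\langle O\exp(\tfrac{s_0-s}{2}\Sigma)\rangle_{s_0}$; the identical manipulation turns $Z_s$ into $Z_{s_0}\,\langle \exp(\tfrac{s_0-s}{2}\Sigma)\rangle_{s_0}$. Forming the quotient $\langle O\rangle_s = (\text{numerator})/Z_s$, the common normalization factor $Z_{s_0}$ cancels and \eqref{3DXY12} drops out immediately. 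The whole computation is thus a one-line substitution followed by a cancellation.

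The only point requiring attention, and hence the sole (mild) obstacle, is integrability: the manipulation is legitimate provided all integrals involved converge absolutely and $Z_s,Z_{s_0}$ are nonzero, so that the ratios are well defined and $Z_{s_0}$ may be cancelled. For $s,s_0>0$ the Gaussian factors dominate the bounded periodic weight $e^{-S}$ and the smooth (hence bounded) observable $O$, which guarantees convergence; in particular $\langle O\exp(\tfrac{s_0-s}{2}\Sigma)\rangle_{s_0}$ reduces to $\frac{1}{Z_{s_0}}\int O\exp(-S-\frac{s}{2}\Sigma)\,\mathrm{d}\{\phi\}$ and is finite precisely because $s>0$, even when the bare reweighting factor $\exp(\tfrac{s_0-s}{2}\Sigma)$ grows for $s_0>s$. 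The boundary value $s_0=0$ in the stated range $s,s_0\geq 0$ is to be read in the sense that the identity holds wherever both sides are finite. Once this bookkeeping is granted, I expect no genuine difficulty.
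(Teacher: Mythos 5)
Your proof is correct and follows essentially the same route as the paper: the paper derives \eqref{3DXY12} as the direct multi-dimensional analogue of Proposition \ref{U(1)Prop3}, which is itself just the reweighting identity \eqref{eq:S_xi} obtained by multiplying and dividing by the $s_0$-regularized weight and cancelling $Z_{s_0}$. Your added remark on absolute convergence for $s, s_0 > 0$ is a sensible (if implicit in the paper) piece of bookkeeping.
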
 As the proof is completely analogous to that in Proposition \ref{U(1)Prop3}, we shall skip the proof.

We can also observe a similar phenomenon as Proposition \ref{U(1)Prop4} for all possible $S(\{ \phi\})$ satisfying the relevant conditions for uniform convergence of their corresponding Fourier series. In particular, the following proposition generalizes this.

\begin{proposition}\label{3DXYProp4}
Let the action $S(\{\phi\})$ and any observable $O$ satisfying the conditions in Proposition \ref{3DXYProp1} be given. For any given $k \in \mathbb{Z}^+$, we have
\begin{equation}\label{3DXY13}
\frac{\mathrm{d}^k}{\mathrm{d}s^k} \langle O \rangle_s \bigg|_{s = 0} = 0.
\end{equation}
\end{proposition}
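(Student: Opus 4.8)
The plan is to follow the proof of Proposition~\ref{U(1)Prop4} almost verbatim, replacing the Bessel-function representation by the explicit formula for $\langle O \rangle_s$ already derived in \eqref{3DXY14}. The common prefactor $\ml 2\pi/s \mr^{N/2}$ cancels between numerator and denominator, leaving $\langle O \rangle_s = f_1(s)/f_2(s)$ with
\[
f_1(s) = \sum_{\vec m \in \mathbb{Z}^N} \widehat{O}_{\vec m} \sum_{\vec k \in \mathbb{Z}^N} \widehat{\beta}_{\vec k}\, \exp\ml -\frac{1}{2s}\sum_{j=1}^N (m_j+k_j)^2 \mr, \qquad f_2(s) = \sum_{\vec k \in \mathbb{Z}^N} \widehat{\beta}_{\vec k}\, \exp\ml -\frac{1}{2s}\sum_{j=1}^N k_j^2 \mr.
\]
The structure is identical to the scalar case: every summand is a Fourier coefficient times $e^{-c/s}$ with $c=\tfrac12|\vec m+\vec k|^2\geq 0$ (respectively $c=\tfrac12|\vec k|^2$). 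The terms with $c=0$, namely $\vec k=-\vec m$ in $f_1$ and $\vec k=\vec 0$ in $f_2$, are constant in $s$ and hence annihilated by every derivative, so I discard them as in \eqref{U(1)35}.

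The core is to show that $f_1^{(q)}(0^+)=f_2^{(q)}(0^+)=0$ for all $q\geq 1$, together with the legitimacy of differentiating under the double sum. For a single exponential with $c>0$, writing $t=c/s$ one finds $\frac{\mathrm{d}^q}{\mathrm{d}s^q}e^{-c/s}=e^{-c/s}\sum_{j=1}^q a_{q,j}\,c^{j}s^{-(q+j)}$, and each monomial equals a constant times $c^{-q}t^{q+j}e^{-t}$. Since $t^{q+j}e^{-t}$ is bounded on $[0,\infty)$ and tends to $0$ as $t\to\infty$, this simultaneously gives the uniform-in-$s$ bound
\[
\left| \frac{\mathrm{d}^q}{\mathrm{d}s^q}\exp\ml -\frac{|\vec m+\vec k|^2}{2s}\mr \right| \leq \frac{C_q}{|\vec m+\vec k|^{2q}} \leq C_q \qquad (\vec m+\vec k\neq \vec 0),
\]
and the pointwise vanishing of each such derivative as $s\to 0^+$, because $t=c/s\to\infty$.

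The main technical point, and the step I expect to cost the most care, is the interchange of $\frac{\mathrm{d}^q}{\mathrm{d}s^q}$ with the infinite double sum over $\mathbb{Z}^N\times\mathbb{Z}^N$ on a fixed interval $[0,s_r]$; this is the higher-dimensional analogue of \eqref{U(1)40}. I would verify it by the Weierstrass $M$-test: by the bound above the differentiated $(\vec m,\vec k)$ term is dominated by $C_q|\widehat{O}_{\vec m}|\,|\widehat{\beta}_{\vec k}|$, and the $C^\infty$ hypothesis on $O$ and $e^{-S}$ forces their Fourier coefficients to decay faster than any power, so $\sum_{\vec m}|\widehat{O}_{\vec m}|<\infty$ and $\sum_{\vec k}|\widehat{\beta}_{\vec k}|<\infty$; the product series then converges and dominates uniformly in $s$. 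Notably, the smoothness gives absolute summability outright, so, unlike the H\"older estimate used in the scalar proof, the extra factor $|\vec m+\vec k|^{-2q}$ is not even needed and may be bounded crudely by $1$. Uniform convergence then yields $f_1^{(q)}(0^+)=0$ and $f_2^{(q)}(0^+)=0$ for $q\geq 1$, while $f_2(0^+)=\widehat{\beta}_{\vec 0}\neq 0$, this being the unweighted mean of $e^{-S}$, i.e.\ the nonzero partition function.

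Finally I close the argument by induction on $k$ exactly as in Proposition~\ref{U(1)Prop4}. Writing $f_1=\langle O\rangle_s\, f_2$ and differentiating $K+1$ times by the general Leibniz rule \eqref{eq:higher_derivative}, I let $s\to 0^+$: the left-hand side is $f_1^{(K+1)}(0^+)=0$; on the right-hand side the $j=0$ term vanishes because $f_2^{(K+1)}(0^+)=0$, and every term with $1\leq j\leq K$ vanishes by the inductive hypothesis $\frac{\mathrm{d}^j}{\mathrm{d}s^j}\langle O\rangle_s|_{s=0}=0$, leaving only $\widehat{\beta}_{\vec 0}\,\frac{\mathrm{d}^{K+1}}{\mathrm{d}s^{K+1}}\langle O\rangle_s|_{s=0}$. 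Since $\widehat{\beta}_{\vec 0}\neq 0$, this gives \eqref{3DXY13} for $k=K+1$; the base case $k=1$ follows from $f_1'(0^+)=f_2'(0^+)=0$ and $f_2(0^+)\neq 0$.
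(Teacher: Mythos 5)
Your proposal is correct and follows essentially the same route as the paper: the paper's proof is a sketch that reduces to the decomposition $\langle O\rangle_s = f_1(s)/f_2(s)$ from \eqref{3DXY14}, invokes the argument of Proposition~\ref{U(1)Prop4} verbatim, and justifies the interchange of derivative and series by the rapid decay of the Fourier coefficients guaranteed by the $C^\infty$ hypothesis — exactly the structure (term-by-term vanishing of $f_1^{(q)},f_2^{(q)}$ at $0^+$, Weierstrass $M$-test via absolute summability of $\widehat{O}_{\vec m}$ and $\widehat{\beta}_{\vec k}$, and the Leibniz-rule induction) that you spell out in detail.
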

\begin{proof}
As the proof for this proposition is largely similar to that in Proposition \ref{U(1)Prop4}, we will sketch the key differences here. From \eqref{3DXY14}, we have $\langle O \rangle_s = f_1(s) / f_2(s)$, where
\begin{equation}\label{3DXY15}
f_1(s) = \sum_{\vec{m} \in \mathbb{Z}^N} \widehat{O}_{\vec{m}} \sum_{\vec k \in \mathbb{Z}^N}\beta_{\vec k}  \exp \left( -\frac{1}{2s} \sum_{j=1}^N (m_j+k_j)^2 \right), 
\qquad f_2(s) = \sum_{\vec{k} \in \mathbb{Z}^N}\beta_{\vec{k}} \exp \left(-\frac{1}{2s} \sum_{j=1}^N k_j^2 \right).
\end{equation}
The structure of this function is similar to \eqref{U(1)33}, and we can use exactly the same approach to prove that all the derivative at $s = 0$ vanishes. Here the interchangeability of the derivative and the series follows the fact that both $O$ and $S$ are infinitely, so that the coefficients $\widehat{O}_{\vec k}$ and $\beta_{\vec k}$ decay faster than $1/|\vec k|^{\alpha}$ for any $\alpha > 0$.
\end{proof}

Based on the expression \eqref{3DXY15}, it is straightforward to derive the following proposition, which is similar to \Cref{U(1)Prop5}:

\begin{proposition}\label{3DXYProp5}
For any observable $O$ satisfying the conditions as stated in \Cref{3DXYProp1}, there exist coefficients $a_k$ and $b_k$ for all $k \in \mathbb{N}$ such that
\begin{equation}\label{3DXY18}
\langle O \rangle_s = \frac{\sum_{k=0}^\infty a_k e^{-\frac{k}{2s}}}{\sum_{k=0}^\infty b_k e^{-\frac{k}{2s}}}.
\end{equation}
\end{proposition}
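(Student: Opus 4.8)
The plan is to start from the closed-form expressions for $f_1(s)$ and $f_2(s)$ recorded in \eqref{3DXY15} and regroup the double series according to the integer value carried by the Gaussian exponent. The key structural observation is that both exponents, $\sum_{j=1}^N (m_j + k_j)^2$ and $\sum_{j=1}^N k_j^2$, are sums of squares of integers and therefore take values in $\mathbb{N} = \{0,1,2,\dots\}$. Consequently every term in $f_1$ and $f_2$ carries a factor of the form $e^{-k/(2s)}$ for some non-negative integer $k$, and collecting all terms that share the same value of $k$ produces exactly the form \eqref{3DXY18}. This is the reason the exponent appears as $k$ rather than $k^2$ here (contrast \Cref{U(1)Prop5}): in the multi-dimensional case the single index $k$ already \emph{is} the sum of squares.

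Concretely, I would set $\vec{\ell} = \vec{m} + \vec{k}$ in the expression for $f_1$ so that its exponent becomes $\sum_j \ell_j^2$, and then define
\begin{equation*}
b_k := \sum_{\substack{\vec{k} \in \mathbb{Z}^N \\ \sum_{j=1}^N k_j^2 = k}} \beta_{\vec{k}}, \qquad
a_k := \sum_{\substack{\vec{\ell} \in \mathbb{Z}^N \\ \sum_{j=1}^N \ell_j^2 = k}} \;\sum_{\vec{m} \in \mathbb{Z}^N} \widehat{O}_{\vec{m}} \, \beta_{\vec{\ell} - \vec{m}}.
\end{equation*}
Here the inner sum in $a_k$ is precisely the Fourier coefficient of $O(\{\phi\})e^{-S(\{\phi\})}$ at frequency $\vec{\ell}$ (compare \eqref{3DXY33}). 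For each fixed $k$ the index set $\{\vec{\ell} : \sum_j \ell_j^2 = k\}$ is finite, being the lattice points on a sphere of radius $\sqrt{k}$, so $b_k$ is a finite sum; the coefficient $a_k$, by contrast, still involves an inner convolution sum ranging over all of $\mathbb{Z}^N$. With these definitions the regrouped series reproduce $f_1(s) = \sum_{k\ge 0} a_k e^{-k/(2s)}$ and $f_2(s) = \sum_{k\ge 0} b_k e^{-k/(2s)}$, and dividing yields \eqref{3DXY18}.

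The one point requiring care, and the main (though mild) obstacle, is to justify that this reindexing is legitimate, i.e.\ that the rearranged series converges to the same value and that each $a_k$ is a well-defined finite number. This follows from absolute convergence: under the $C^\infty$ hypothesis of \Cref{3DXYProp1}, both $\widehat{O}_{\vec{m}}$ and $\beta_{\vec{k}}$ decay faster than any inverse power of $|\vec{m}|$ and $|\vec{k}|$ respectively, so the double series $\sum_{\vec{m},\vec{k}} |\widehat{O}_{\vec{m}}|\,|\beta_{\vec{k}}|\, e^{-\sum_j (m_j+k_j)^2/(2s)}$ converges for every $s>0$; this is exactly the estimate already invoked in the proof of \Cref{3DXYProp4}. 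Absolute convergence permits arbitrary rearrangement, in particular grouping by the value of the exponent, so the regrouping is valid. Finally I would remark that any $k \in \mathbb{N}$ not expressible as a sum of $N$ integer squares gives empty index sets and hence $a_k = b_k = 0$; this is harmless and merely means that only a subsequence of the exponentials $e^{-k/(2s)}$ actually appears. The argument is therefore a bookkeeping reorganization of the already-established expansion \eqref{3DXY15}.
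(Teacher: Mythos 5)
Your proposal is correct and follows essentially the same route the paper intends: the paper derives \eqref{3DXY18} directly from the expansion \eqref{3DXY15} by noting that the exponents $\sum_j (m_j+k_j)^2$ and $\sum_j k_j^2$ are non-negative integers and collecting terms accordingly, which is precisely your regrouping. Your explicit formulas for $a_k$ and $b_k$ and the absolute-convergence justification of the rearrangement simply supply details the paper leaves implicit by calling the derivation ``straightforward.''
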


Note that a key difference between \eqref{3DXY15} and \eqref{U(1)33} is that the sum over the components $\sum_{j=1}^N$ does not exist in the one-link model.
In fact, the integer $\sum_{j=1}^N k_j^2$ can take any value from $0$ to $N$, so that when $N$ is large, the zero terms in both summations in \eqref{3DXY18} will appear only for a large $k$.
To accommodate for such cases, we have included all possible integer coefficients of $-\frac{1}{2s}$ instead as shown in \eqref{3DXY18}.

In view of the proposition above, an appropriate regression model is given by 
\begin{equation}\label{3DXY22}
\langle O \rangle_s = \frac{\sum_{k=0}^M a_k e^{-\frac{k}{2s}}}{1 + \sum_{k=1}^M b_k e^{-\frac{k}{2s}}}.
\end{equation}
Note that for any model under the $\operatorname{U}(1)$ lattice field theory framework, similar to the regression model in  \eqref{U(1)39}, we can minimize a similar objective function as described in \eqref{eq:lossfunction} to obtain the corresponding regression coefficients.

\section{Extension to the \texorpdfstring{$\operatorname{SU}(n)$}{SU(n)} theory}\label{sec:RegSUN}
This section introduces how we can extend results from the previous sections to the $\SU{n}$ theory, and presents results for the one-dimensional problems. Before we introduce the regularization method in $\SU{n}$ theory, we will briefly review the complex Langevin method under the $\SU{n}$ gauge theory. 

Consider the set $\{U_k : k=1,\dots,N\}\in [\SU{n}]^{N}$, where  $N$ is the total number of lattice points. Given the action $S(\{U\})$, the expectation value for the observable $O(\{U\})$ is given by
\begin{equation}\label{eq:sun-obs}
\langle O \rangle = \frac{\int_{[\SU{n}]^N} O (\{U\}) \exp(-S(\{U\}) \,\mathrm{d}\{U\}}{\int_{[\SU{n}]^N} \exp(-S(\{U\}) \,\mathrm{d}\{U\}}.
\end{equation}

Let $w_{a,k}$ for $a = 1,\dots,n^2-1$ be independent Brownian motions. Upon complexifying the configuration space $[\SU{n}]^N$ to $[\operatorname{SL}(n,\mathbb{C})]^N$, the complex Langevin method for $\SU{n}$ theory can be described by the complex stochastic process:
\begin{equation} \label{eq:sun-cl}
\mathrm{d} U_{k} = -\sum_{a=1}^{n^2-1} \mathrm{i} \lambda_a \Big[U_{k} D_{a,k}S(\{U\}) \mathrm{d}t + U_{k} \circ \mathrm{d}w_{a,k} \Big], \quad U_{k}\in\operatorname{SL}(n,\mathbb{C}), k = 1,\dots,N,
\end{equation} 
where $\circ$ stands for the Stratonovich interpretation of the stochastic integral and  $\lambda_a$, $a = 1,\dots,n^2-1$ are the infinitesimal generators of the $\SU{n}$ group satisfying the orthogonality
\begin{equation*}
\text{tr}(\lambda_a \lambda_b) = 2 \delta_{ab}, \quad \forall a,b=1,\dots,n^2-1,
\end{equation*}
and $D_{a,k}$ denotes the left Lie derivative operator defined as
\begin{equation}\label{eq:sun-llied}
D_{a,k} S(\{U\}) = \lim_{\epsilon \rightarrow 0} \frac{S(\{\widetilde{U}^{\epsilon}\}) - S(\{U\})}{\epsilon},
\end{equation}
where $\{\widetilde{U}^{\epsilon}\}$ denotes the field with links defined by
\begin{displaymath}
\widetilde{U}_l^{\epsilon} = \exp(\ii \epsilon \delta_{kl} \lambda_a) U_l, \qquad l = 1,\ldots,N.
\end{displaymath}
To solve \eqref{eq:sun-cl}, we mimic standard methods and  apply the following scheme to update the links as follows:
\begin{equation} \label{eq:Euler}
U_{k}^{(j+1)} = \exp \left( -\sum_{a=1}^{n^2-1} \mathrm{i} \lambda_a \left( D_{a,k}S^{(j)} \Delta t + \eta_{a,k} \sqrt{\Delta t} \right)  \right) U_{k}^{(j)}, \quad k=1,\dots,N,
\end{equation}
where $U_k^{(j)}$ is the link at time instance $t_j$,   $\Delta t=t_{j+1}-t_j$ is time step and each $\eta_{a,k}$ is normally distributed with mean $0$ and variance $2$. The scheme is similar to the Euler-Maruyama method, while the exponential map is applied to keep the solution from leaving the Lie group.

Below, we will generalize the techniques of regularization, reweighting, and regression to the $\SU{n}$ group theory. They will be introduced in the following three subsections.
 
\subsection{Regularization}
To demonstrate how the regularization can be generalized to the $\SU{n}$ theory, we would like to restate the method for the $\operatorname{U}(1)$ theory in the language of group theories.
Here we regard $\operatorname{U}(1)$ as the unit circle on the complex plane, so that we can establish the one-to-one map between $U \in \operatorname{U}(1)$ and $x \in \mathbb{T}$ by $U = \exp(\ii x)$.
Thus, we can write down the integrals over $\mathbb{T}$ as integrals over $\operatorname{U}(1)$.
For instance,
\begin{equation} \label{eq:int_U1}
Z = \int_{\mathbb{T}} \exp(-S(x)) \mathrm{d}x =
  \int_{\operatorname{U}(1)} \exp \Big({-S(-\ii \log U)} \Big) \mathrm{d}U,
\end{equation}
where we have assumed that $S(\cdot)$ is periodic with period $2\pi$, so that the value of $S(-\ii \log U)$ is unique. In the equation \eqref{eq:int_U1}, we can also consider $\log U$ as an element in the Lie algebra of $\operatorname{U}(1)$, i.e., $\mathfrak{g} = \ii \mathbb{R}$.
When we apply the regularization, an extra term $sx^2/2$ is added to the action, whose counterpart should be $-s(\log U)^2/2$ if we represent the regularization term using the variable in $\operatorname{U}(1)$. However, since $sx^2/2$ is no longer periodic with respect to $x$, the expression $-s(\log U)^2/2$ becomes multi-valued, so simply adding this term to the action will cause ambiguities.

To address this problem, we can rewrite \eqref{U(1)1} and \eqref{U(1)2} in the following form:
\begin{equation} \label{eq:lieU1-reg}
\begin{aligned}
\langle O \rangle_s &= \frac{1}{Z_s} \int_{\operatorname{U}(1)} O(-\ii \log U) \sum_{\substack{g \in \mathfrak{g} \\ \text{ s.t.} \exp(g) = U}} \exp \left(-S(-\ii \log U) + \frac{s g^2}{2} \right) \mathrm{d}U, \\
Z_s &= \int_{\operatorname{U}(1)} \sum_{\substack{g \in \mathfrak{g} \\ \text{ s.t.} \exp(g) = U}} \exp \left(-S(-\ii \log U) + \frac{s g^2}{2} \right) \mathrm{d}U.
\end{aligned}
\end{equation}
Here the summation is taken over all logarithms of $U$, corresponding to unrolling the torus $\mathbb{T}$ to the real axis $\mathbb{R}$. Our generalization of the regularization to the $\SU{n}$ theory will be based on the form of \eqref{eq:lieU1-reg}.

Formally, for the $\SU{n}$ theory, the regularized action can be written as
\begin{equation}\label{eq:sun-regLieA}
S_s(\{U\}) = S(\{U\}) - \frac{s}{2} \sum_{k=1}^N \tr[(\log U_k)^2],
\end{equation}
where $\log U_k$ is an element in the Lie algebra $\mathfrak{su}(n)$.
Here the regularization term is chosen as the trace of the matrix square since its square root defines a norm on $\mathfrak{su}(n)$.
However, the ambiguity again comes from the non-uniqueness of the matrix logarithm.
Therefore, we mimic \eqref{eq:lieU1-reg} to write down the regularized observable as
\begin{equation} \label{eq:sun-regLieA-obs}
\begin{aligned}
\langle O \rangle_s &= \frac{1}{Z_s} \int_{[\SU{n}]^N} O(\{U\}) \sum_{\substack{g_1 \in \mathfrak{su}(n) \\ \text{ s.t.} \exp(g_1) = U_1}} \cdots\sum_{\substack{g_N \in \mathfrak{su}(n) \\ \text{ s.t.} \exp(g_N) = U_N}} \exp \left(-S(\{U\}) + \sum_{k=1}^N \frac{s}{2} \tr(g_k^2) \right) \mathrm{d}U, \\
Z_s &= \int_{[\SU{n}]^N} 
\sum_{\substack{g_1 \in \mathfrak{su}(n) \\ \text{ s.t.} \exp(g_1) = U_1}} \cdots\sum_{\substack{g_N \in \mathfrak{su}(n) \\ \text{ s.t.} \exp(g_N) = U_N}} \exp \left(-S(\{U\}) + \sum_{k=1}^N \frac{s}{2} \tr(g_k^2) \right) \mathrm{d}U.
\end{aligned}
\end{equation}
Following the idea in \Cref{U(1)Prop1}, it is expected that when $s\to 0^+$, the regularized observable $\langle O \rangle_s$ will converge to $\langle O \rangle$ in \eqref{eq:sun-obs}. We leave the rigorous proof as our further work and we focus on the implementation of the regularized method for $\SU{n}$ theory in this section.

To formulate the complex Langevin method for the complexified action \eqref{eq:sun-regLieA}, we need to first formulate the Langevin method for real actions.
The general idea of the numerical scheme follows \eqref{eq:Euler}, while the derivative of the action $D_{a,k} S$ should be replaced with $D_{a,k} S_s$ defined by
\begin{equation}\label{eq:sun-regLieD}
\begin{aligned}
D_{a,k}S_{s}(\{U\}) =D_{a,k}S(\{U\})-\frac{s}{2}\lim_{\epsilon \to 0} \frac{\tr\left[(\log(e^{\ii\lambda_a\epsilon}U_k))^2 - (\log U_k)^2\right]}{\epsilon}.
\end{aligned}
\end{equation}
The numerical scheme is fully determined once the Lie derivative is determined. Meanwhile, the generalization to the complex action follows naturally as the formula of the numerical scheme is unchanged, and each link $U_k$ automatically falls into the complexification of $\SU{n}$, i.e., the special linear group $\SL{n}$. Below, we will focus on the computation of $\eqref{eq:sun-regLieD}$ with $U_k \in \SL{n}$ and resolve the complication caused by the multi-valued logarithmic function.

In \eqref{eq:sun-regLieD}, $g_k := \log U_k$ can take any matrix logarithm of $U_k$, while $\log(e^{\ii\lambda_a\epsilon}U_k)$ must be the matrix logarithm that is closest to $g_k$ such that the limit exists.
The value of the limit is given in the following proposition:
\begin{proposition}
For any $U_k \in \SL{n}$, it holds that
\begin{displaymath}
\lim_{\epsilon \to 0} \frac{\tr\left[(\log(e^{\ii\lambda_a\epsilon}U_k))^2 - (\log U_k)^2\right]}{\epsilon} = 2\ii \tr(\lambda_a \log U_k).
\end{displaymath}
\end{proposition}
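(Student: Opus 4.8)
The plan is to recognize the left-hand side as the derivative at $\epsilon=0$ of the smooth scalar function $f(\epsilon) := \tr[(\log(e^{\ii\epsilon\lambda_a}U_k))^2]$ and to compute it by differentiating the defining relation of the matrix logarithm. Writing $g := \log U_k$ and $g(\epsilon) := \log(e^{\ii\epsilon\lambda_a}U_k)$, we have $g(0)=g$ and, by the choice of branch, $\exp(g(\epsilon)) = e^{\ii\epsilon\lambda_a}U_k$ for small $\epsilon$. By linearity and cyclicity of the trace,
\[
f'(\epsilon) = \tr[g'(\epsilon)g(\epsilon) + g(\epsilon)g'(\epsilon)] = 2\tr[g(\epsilon)\,g'(\epsilon)],
\]
so the whole problem reduces to evaluating the single contraction $\tr[g\,g'(0)]$; notably $g'(0)$ is never needed in isolation, only paired against $g$.

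To reach $g'(0)$, I would differentiate $\exp(g(\epsilon)) = e^{\ii\epsilon\lambda_a}U_k$ at $\epsilon=0$. The right-hand side yields $\ii\lambda_a U_k = \ii\lambda_a e^{g}$, while the left-hand side is handled by the Duhamel integral formula for the derivative of the matrix exponential,
\[
\frac{\mathrm{d}}{\mathrm{d}\epsilon}\bigg|_{\epsilon=0}\exp(g(\epsilon)) = \int_0^1 e^{sg}\,g'(0)\,e^{(1-s)g}\,\mathrm{d}s,
\]
valid for any smooth matrix path whether or not $g$ and $g'(0)$ commute. Equating the two expressions and right-multiplying by $e^{-g}=U_k^{-1}$ collapses the trailing factor through $e^{(1-s)g}e^{-g}=e^{-sg}$, giving
\[
\int_0^1 e^{sg}\,g'(0)\,e^{-sg}\,\mathrm{d}s = \ii\lambda_a.
\]

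The final step is to contract this identity with $g$. Left-multiplying by $g$ and taking the trace, and using that $g$ commutes with $e^{\pm sg}$ together with cyclicity, the integrand reduces to the $s$-independent quantity $\tr[g\,g'(0)]$, so the integral evaluates to $\tr[g\,g'(0)] = \ii\tr[g\lambda_a]$. Substituting into $f'(0)=2\tr[g\,g'(0)]$ yields $f'(0) = 2\ii\tr[\lambda_a\log U_k]$, which is the claim. The genuine obstacle here is not the algebra but the preceding analytic point: one must know that a branch $g(\epsilon)$ exists and is differentiable at $\epsilon=0$ with $g(0)=g$. This is precisely the meaning of selecting $\log(e^{\ii\epsilon\lambda_a}U_k)$ as the logarithm closest to $g$, and it holds for small $\epsilon$ as long as $\exp$ is a local diffeomorphism near $g$ (equivalently, no two eigenvalues of $g$ differ by a nonzero integer multiple of $2\pi\ii$), with $\det U_k=1$ guaranteeing a logarithm exists at all. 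I would state this regularity hypothesis explicitly rather than dwell on it.
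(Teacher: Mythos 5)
Your argument is correct, and it arrives at the same two structural facts that drive the paper's proof: the derivative of the squared logarithm reduces to the single contraction $2\tr[g\,g'(0)]$ with $g=\log U_k$, and the non-commutative corrections in the derivative of the exponential map die when traced against $g$. The realization is different, though. The paper never introduces a differentiable branch $g(\epsilon)$; it works with the finite difference $\Delta_1^{\epsilon}=\log(e^{\ii\lambda_a\epsilon}U_k)-\log U_k$, first reducing the quotient via $\tr(A^2-B^2)=\tr[(A+B)(A-B)]$ to $2\tr[(\log U_k)\Delta_1^{\epsilon}/\epsilon]$, and then invokes the series form of the differential of the exponential map, $\sum_{i\ge 1}\frac{(-1)^{i+1}}{i!}\Delta_i^{\epsilon}=U_k^{-1}e^{\ii\lambda_a\epsilon}U_k-I+\mathcal{O}(\epsilon^2)$ with $\Delta_{i+1}^{\epsilon}=[\log U_k,\Delta_i^{\epsilon}]$; the higher commutators are annihilated by the cyclic-trace identity $\tr[(\log U_k)\Delta_{i+1}^{\epsilon}]=0$. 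You instead use the integral (Duhamel) form, and the analogous killing mechanism is $e^{-sg}g\,e^{sg}=g$, which makes the integrand $s$-independent. These are two standard, equivalent formulations of the same theorem, so the proofs are close cousins; what your version buys is a cleaner derivation with no series manipulation, at the cost of having to assume up front that the nearest-branch logarithm $\epsilon\mapsto g(\epsilon)$ is actually differentiable at $0$ (the paper's finite-difference formulation needs essentially the same local invertibility of $\exp$ near $\log U_k$, but hides it inside the cited expansion). You are right to flag that regularity hypothesis explicitly --- it fails precisely when two eigenvalues of $\log U_k$ differ by a nonzero integer multiple of $2\pi\ii$, a caveat the paper's proof also silently requires.
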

\begin{proof}
Since $\tr(A^2-B^2) = \tr(A+B)(A-B)$ for any matrices $A$ and $B$, we can simplify the limit as follows:
\begin{equation} \label{eq:limit}
\begin{split}
\lim_{\epsilon \to 0} \frac{\mbox{tr}\left[(\log(e^{\ii\lambda_a\epsilon}U_k))^2 - (\log U_k)^2\right]}{\epsilon} &=
\lim_{\epsilon \to 0}  \tr \left([\log(e^{\ii\lambda_a\epsilon}U_k) + \log U_k] \frac{\log(e^{\ii\lambda_a\epsilon}U_k) - \log U_k}{\epsilon} \right) \\
&= 2 \lim_{\epsilon \to 0}  \tr \left(\log U_k \frac{\log(e^{\ii\lambda_a\epsilon}U_k) - \log U_k}{\epsilon} \right).
\end{split}
\end{equation}
Let $\Delta_1^{\epsilon} = \log(e^{\ii\lambda_a\epsilon}U_k) - \log U_k$ and
\begin{equation}
\Delta_{i+1}^{\epsilon} = [\log U_k, \Delta_i^{\epsilon}] = (\log U_k) \Delta_i^{\epsilon} - \Delta_i^{\epsilon} (\log U_k), \qquad i = 1,2,\ldots.
\end{equation}
It can be derived from the cyclic property of the matrix trace that
\begin{equation} \label{eq:zero_trace}
\tr [(\log U_k) \Delta_{i+1}^{\epsilon}] = \tr[ (\log U_k)^2 \Delta_i^{\epsilon} - (\log U_k) \Delta_i^{\epsilon} (\log U_k)] = 0, \qquad i = 1,2,\ldots.
\end{equation}
According to the differential formula of the exponential map
\cite[Theorem~5]{rossmann2006lie}, we have
\begin{equation*}
\begin{split}
    \sum_{i=1}^{+\infty} \frac{(-1)^{i+1}}{i!} \Delta_i^{\epsilon} &= U_k^{-1} e^{\ii \lambda_a \epsilon} U_k - I + O(\epsilon^2) = \ii\epsilon U_k^{-1} \lambda_a U_k + \mathcal{O}(\epsilon^2).
\end{split}
\end{equation*}
Using \eqref{eq:zero_trace}, we can left-multiply the above equation by $\log U_k$ and then take the trace to obtain
\begin{equation}
\tr[(\log U_k) \Delta_1^{\epsilon}] = \ii \epsilon \tr[(\log U_k) U_k^{-1} \lambda_a U_k] + \mathcal{O}(\epsilon^2) = \ii \epsilon \tr (\lambda_a \log U_k) + \mathcal{O}(\epsilon^2).
\end{equation}
Inserting this equation into \eqref{eq:limit} concludes the proof. 
\end{proof}
By this proposition, the drift term \eqref{eq:sun-regLieD} becomes
\begin{equation}\label{eq:sun-regLieD-2}
    D_{a,k}S_{s}(\{U\}) 
    = D_{a,k}S(\{U\}) -\ii s\tr(\lambda_a \log U_k),
\end{equation}
based on which the update of the links \eqref{eq:Euler} becomes
\begin{equation} \label{eq:regularized_evolution}
U_{k}^{(j+1)} = \exp \left( -\sum_{a=1}^{n^2-1} \mathrm{i} \lambda_a \left( D_{a,k}S^{(j)} \Delta t  - \underline{\ii s \tr(\lambda_a \log U_k^{(j)}) \Delta t} + \eta_{a,k} \sqrt{\Delta t} \right)  \right) U_{k}^{(j)}, \quad k=1,\dots,N,
\end{equation}
where the underlined term produces a pull-back velocity so that the excursion away from $\SU{n}$ can be restricted.

We now consider the determination of $g_k = \log U_k$. We first assume that $U_k$ is diagonalizable, i.e., $U_k = R \Theta R^{-1}$ for some $R \in \SL{n}$ and $\Theta = \operatorname{diag} (\theta_1, \cdots, \theta_n)$. Then
\begin{equation}\label{eq:sun-liea}
  g_k = \log U_k = \log(R\Theta R^{-1})
  = R (\log \Theta ) R^{-1}.
\end{equation}
The non-uniqueness of $g_k$ comes from the non-uniqueness of $\log \Theta$.
If $\Xi=\operatorname{diag}(\xi_1,\dots,\xi_n)$ is a diagonal matrix satisfying
\begin{equation}
\exp(\xi_i) =\theta_i, \quad i = 1,\cdots,n \qquad \text{and} \qquad \sum_{i=1}^n \xi_i = 0.
\end{equation}
Then for any $K_1, \cdots, K_n \in \mathbb{Z}$ satisfying $K_1 + \cdots + K_n = 0$, we have
\begin{equation} \label{eq:matrix_logarithm}
\exp(\Xi + 2\pi \ii \Gamma) = \Theta,
\end{equation}
where $\Gamma = \operatorname{diag} (K_1, \cdots, K_n)$.
Hence, the matrix $R(\Xi + 2\pi \ii \Gamma) R^{-1}$ is a candidate of $g_k$. Note that here we require $\tr \Xi = \tr \Gamma = 0$ since $g_k$ is required to be an element in $\mathfrak{sl}(n,\mathbb{C})$, which contains all traceless $n\times n$ matrices. If $U_k$ is non-diagonalizable, then $\Theta$ will contain Jordan blocks and $\Xi$ becomes an upper-triangular matrix. Nonetheless, the relation \eqref{eq:matrix_logarithm} still holds and still plays the role that leads to multiple values of matrix logarithms.

To resolve this issue, we would like to determine a specific $\Xi_k^{(j)}$ for each $j$ and $k$ such that the diagonal of $\exp(\Xi_k^{(j)})$ consists of all the eigenvalues of $U_k$, which will further determine the matrix logarithm.
In order to maintain the continuity of the dynamics, we choose $\Xi_k^{(j+1)}$ by minimizing its difference from $\Xi_k^{(j)}$. In other words, we determine $\Xi_k^{(j+1)}$ via
\begin{equation} \label{eq:sun-chooseg}
\begin{aligned}
\Xi_k^{(j+1)} = {} & \operatorname*{argmin}_{\Xi \in \mathbb{C}^{n\times n}} \|\Xi - \Xi_k^{(j)}\|_F, \\
& \text{s.t. } \Xi \text{ is a Jordan normal form and } \exists R\in \mathbb{C}^{n\times n} \text{ such that } \exp(R \Xi R^{-1}) = U_k^{(j+1)}.
\end{aligned}
\end{equation}
When $j = 0$, we simply choose $\Xi_k^{(0)}$ such that the imaginary parts of all its diagonal elements locate in $[-\pi, \pi)$.
Here, we comment that due to the existence of the stochastic term, $\Xi_k^{(j+1)}$ may be distant from $\Xi_k^{(j)}$. Thus, our strategy does not produce the ``correct'' choice of $\Xi_k^{(j+1)}$. However, such a probability decreases exponentially as $\Delta t$ decreases, which will not affect the order of accuracy for the numerical method. Note that in this approach, we need to keep track of the evolution of $\Xi_k^{(j)}$, which guarantees that the summations in \eqref{eq:sun-regLieA-obs} are taken into account.

To summarize, we describe the algorithm of the complex Langevin method for the $\SU{n}$ theory below:
\begin{algorithm2e}[ht]
	\caption{Complex Langevin method for the regularized action}
	\label{alg:cl-sun-reg}
	\SetKwInOut{Input}{Input}
	\SetKwInOut{Output}{Output}
	\SetKwComment{Comment}{}{}
	\BlankLine %\dontprintsemicolon
	\Input{Initial field $\{U^{(0)}\}$, time step $\Delta t$, total number of time steps $J$, number of time steps to reach the invariant measure $J_0$, number of time steps between two samples $\Delta J$}
	\BlankLine
    Set $j \gets 0$, $N_{\mathrm{sample}} \gets 0$ and $\langle O \rangle_s \gets 0$\;
    For each $k$, let $\Xi_k$ be the diagonal matrix with diagonal elements being the logarithms of the eigenvalues of $U_k^{(0)}$, and compute its logarithm $\log U_k^{(0)} = R \Xi_k^{(0)} R^{-1}$\;
    \For{$j \gets 0$ \KwTo $J$}{
        Use the result of $\log U_k^{(0)}$ to evolve the solution by \eqref{eq:regularized_evolution}\;
        \If{$j > J_0$}{
            $\langle O \rangle_s \gets \langle O \rangle_s + O(\{U^{(j+1)}\})$\;
            $J_0 \gets J_0 + \Delta J$, $N_{\mathrm{sample}} \gets N_{\mathrm{sample}}+1$\;
        }
        Compute $\Xi_k^{(j+1)}$ for each $k$ according to \eqref{eq:sun-chooseg}, and find $\log U_k^{(j+1)}$ based on the result of $\Xi_k^{(j+1)}$\;
    }
    $\langle O\rangle_s \gets \langle O \rangle_s / N_{\mathrm{sample}}$\;
    \BlankLine
    \Output{$\langle O\rangle_s$}
\end{algorithm2e}

Like in the regularized $\operatorname{U}(1)$ theory, when the regularization parameter $s$ is small, the regularized method may still converge to biased results. Our improvements, including reweighting and regularization, will be studied in the following two subsections.

\subsection{Reweighting}
The idea of reweighting follows from out motivating example, as in \eqref{U(1)29}. We represent the observable $\langle O \rangle_s$ as
\begin{equation} \label{eq:sun-reweighted}
\langle O \rangle_s = \frac{\langle O \exp(S_{s_0} - S_s)\rangle_{s_0}}{\langle \exp(S_{s_0} - S_s)\rangle_{s_0}},
\end{equation}
where
\begin{equation} \label{eq:s0-s}
S_{s_0} - S_s = -\frac{s_0-s}{2}\sum_{k=1}^N\tr[(\log U_k)^2]
\end{equation}
according to the definition \eqref{eq:sun-regLieA}.
To compute the numerator and the denominator of \eqref{eq:sun-reweighted}, \Cref{alg:cl-sun-reg} can still be applied, and the logarithms in \eqref{eq:s0-s} can still be found by using $\Xi_k^{(j)}$ at each time step (see \eqref{eq:sun-chooseg}). However, when $N$ is large, the value of $S_{s_0} - S_s$ in \eqref{eq:s0-s} might be a large positive number if $s_0 > s$ (note that $\tr[(\log U_k)^2] < 0)$. Consequently, its exponent $\exp(S_{s_0} - S_s)$ will be so huge that it will be difficult to handle using double-precision floating-point numbers.
Therefore, we shift $S_{s_0} - S_s$ by its average, so that \eqref{eq:sun-reweighted} can be computed as
\begin{equation} \label{eq:sun-reweighted-shifted}
\langle O \rangle_s = \frac{\Big\langle O \exp\Big(S_{s_0} - S_s - \langle S_{s_0} - S_s \rangle_{s_0}\Big)\Big\rangle_{s_0}}{\Big\langle \exp\Big(S_{s_0} - S_s - \langle S_{s_0} - S_s \rangle_{s_0}\Big)\Big\rangle_{s_0}}.
\end{equation}
This requires us to record the values of $O$ and $S_{s_0} - S_s$ for each sample, so that we can first compute $\langle S_{s_0} - S_s \rangle$, and then use the result to evaluate \eqref{eq:sun-reweighted-shifted}.
In our tests, the magnitude of the shifted exponents turns out to be acceptable after applying such a trick.

\subsection{Regression} \label{sec:sun-reg}
For the $\operatorname{U}(1)$ theory, the expression we used in the regression has a fractional form \eqref{3DXY18}, which is derived based on the Fourier expansion of functions on $\operatorname{U}(1)$.
Similarly, for the $\SU{n}$ theory, suppose $\psi_k$, $k = 1,2,\ldots$ form an orthonormal set of basis:
\begin{equation}
\int_{\SU{n}} \psi_k(U) \psi_l(U) \mathrm{d}U = \delta_{kl}.
\end{equation}
Then the expression used in the regularization can be determined by computing 
\begin{equation} \label{eq:sun-integral}
\int_{\SU{n}} \psi_k(U) \sum_{\substack{g \in \mathfrak{su}(n) \\ \text{s.t.} \exp(g) = U}}
\exp \left( \frac{s}{2} \tr g^2 \right) \mathrm{d}U, \qquad k = 1,2,\ldots.
\end{equation}

For the $\SU{2}$ theory, the integral \eqref{eq:sun-integral} can be calculated using the isomorphism between $\SU{2}$ and the 3-sphere $\mathbb{S}^3$.
The basis functions $\psi_k(U)$ can be chosen as the generalized spherical harmonics defined on $\mathbb{S}^3$, which are denoted by $Y_{l_1 l_2 l_3}(\psi, \theta, \varphi)$ with $(\psi, \theta, \varphi)$ being the hyperspherical coordinates. The precise form of $Y_{l_1 l_2 l_3}$ can be found in \cite{higuchi1987symmetric}.
To compute the integral, we write \eqref{eq:sun-integral} as an integral on the three-dimensional linear space $\mathfrak{su}(2)$.
With proper changes of variables, the integral \eqref{eq:sun-integral} can be transformed into
\begin{equation}\label{eq:su2-s3}
I_{l_1 l_2 l_3} := \sqrt{\frac{2l_2+1}{4\pi} \frac{(l_2 + l_1)!}{(l_2 - l_1)!}}
\int_{\mathbb{R}} \int_{\mathbb{R}} \int_{\mathbb{R}}
  \left( \frac{g_1 + \ii g_2}{\sqrt{g_1^2 + g_2^2}} \right)^{l_1}
  P_{l_2}^{-l_1} \left( \frac{g_3}{|g|} \right)
  \frac{Q_{l_3}^{l_2}(\sin |g|, \cos |g|)}{|g|\sqrt{g_1^2+g_2^2}}
  \exp \left(-s |g|^2 \right) \mathrm{d}g_1 \mathrm{d}g_2 \mathrm{d}g_3,
\end{equation}
where $|g| = \sqrt{g_1^2+g_2^2+g_3^2}$, and $P_{l_2}^{-l_1}(\cdot)$ is the associated Legendre function.
Here $g_1, g_2, g_3$ can be considered as the coefficients of the Pauli matrices $\sigma_1, \sigma_2, \sigma_3$ when representing the elements in $\mathfrak{su}(2)$.
Since $\tr(\sigma_i \sigma_j)=2\delta_{ij}$, we get $s$ instead of $s/2$ as the parameter in the exponent of \eqref{eq:su2-s3}, which differs slightly from the $\operatorname{U}(1)$ theory.
The polynomial $Q_{l_3}^{l_2}(\cdot, \cdot)$, $l_3 \geqslant l_2 \geqslant 0$ has degree $l_3$, and is defined by the recurrence relations
\begin{displaymath}
  Q_{l_3+1}^{l_2}(x,y) = 2\sqrt{\frac{(l_3+2)(l_3+1)}{(l_3-l_2+1)(l_3+l_2+2)}} y Q_{l_3}^{l_2}(x,y) - \sqrt{\frac{(l_3+2)(l_2+l_3+1)(l_3-l_2)}{l_3(l_3-l_2+1)(l_3+l_2+2)}} Q_{l_3-1}^{l_2}(x,y)
\end{displaymath}
with the initial condition
\begin{displaymath}
  Q_{l_3}^{l_3} = \sqrt{\frac{(2l_3+2)!!}{(2l_3+1)!!}} \frac{x^{l_3}}{\sqrt{\pi}}.
\end{displaymath}
The recurrence relation shows that $Q_{l_3}^{l_2}(x,y)$ has the form $x^{l_2} R_{l_3}^{l_2}(y)$, where $R_{l_3}^{l_2}(\cdot)$ is a polynomial of degree $l_3 - l_2$.

For the three-dimensional integral \eqref{eq:su2-s3}, one can use spherical coordinates to further simplify it. Upon integrating out the spherical angles, we obtain 
\begin{displaymath}
  I_{l_1 l_2 l_3} = \begin{cases}
  \displaystyle \frac{\pi^2}{2^{2l_2}} \sqrt{\frac{2l_2+1}{4\pi}} \begin{pmatrix} l_2 \\ l_2/2 \end{pmatrix}^2
   \int_0^{+\infty} (\sin g)^{l_2} R_{l_3}^{l_2}(\cos g) \exp(-s g^2) \mathrm{d}g, & \text{if } l_1 = 0 \text{ and } l_2 \text{ is even}, \\
   0, & \text{otherwise}.
  \end{cases}
\end{displaymath}
When $l_1 = 0$ and $l_2$ is even, we define the polynomial $\overline{R}_{l_3}^{l_2}(x) = (1 - x^2)^{l_2/2} R_{l_3}^{l_2}(x)$. Then, we have
\begin{displaymath}
  I_{l_1 l_2 l_3} = \frac{\pi^2}{2^{2l_2}} \sqrt{\frac{2l_2+1}{4\pi}} \begin{pmatrix} l_2 \\ l_2/2 \end{pmatrix}^2 \int_0^{+\infty} \overline{R}_{l_3}^{l_2}(\cos g) \exp(-sg^2) \mathrm{d}g, \qquad l_1 = 0 \text{ and } l_2 \text{ is even}.
\end{displaymath}
The integral above is the linear combination of $s^{-1/2} e^{-\frac{k^2}{4s}}$, $k = 0,1,\cdots,l_3$. Thus, similar to \Cref{3DXYProp1} and \Cref{3DXYProp3}, we conclude that in the $\SU{2}$ theory, the regularized observable $\langle O \rangle_s$ has the form
\begin{equation}\label{eq:su2-regression}
    \langle O \rangle_s = \frac{\sum_{k=0}^{+\infty} a_k e^{-\frac{k}{4 s}}}{\sum_{k=0}^{+\infty} b_k e^{-\frac{k}{4 s}}}.
\end{equation}
This expression can then be used in the regression model via a truncation of both infinite series.

For the $\SU{n}$ theory with $n > 2$, we have not found a straightforward way to evaluate the integral \eqref{eq:sun-integral}. Instead, we conjecture that the form \eqref{eq:su2-regression} holds for all the $\SU{n}$ theories, and we will use the approximation
\begin{equation}\label{eq:sun-regression}
    \langle O \rangle_s = \frac{\sum_{k=0}^M a_k e^{-\frac{k}{4 s}}}{1 + \sum_{k=1}^M b_k e^{-\frac{k}{4 s}}}
\end{equation}
in our regression model when carrying out numerical tests.

\section{Applications in the lattice field theory}\label{sec:QCD}
We are now ready to carry out numerical simulations for the lattice field theories. For the $D$-dimensional lattice, the each lattice point is denoted by a periodic multi-index
\begin{equation} \label{eq:X}
x \in X := (\mathbb{Z}/l_0 \mathbb{Z}) \times \cdots \times (\mathbb{Z}/l_{D-1} \mathbb{Z}).
\end{equation}
where $l_i$ refers to the length of the lattice in the $(i+1)^{\text{th}}$ component of $x$.
For a scalar field $\{\phi\}$, the variables will be denoted as $\phi_x$; for a vector field $\{U\}$, we denote the variables using $U_{x,\mu}$, where $\mu \in \{0,1,\cdots,D-1\}$. For simplicity, we use $x \pm \hat{\mu}$ to denote the multi-index that adds/subtracts the $\mu$th component of $x$ by $1$. For instance,
\begin{equation}
x + \hat{0} = (x_0+1, x_1, \cdots, x_{D-1}), \quad
x - \hat{1} = (x_0, x_1-1, \cdots, x_{D-1}).
\end{equation}
In what follows, three models in the lattice field theory will be studied. In order to achieve better results from regression, we will derive more suitable regression models for specific problems whenever possible.

\subsection{3D XY model}
For the 3D XY model, we have $D = 3$ and its action reads
\begin{equation}\label{eq:3dxy-action}
    S(\{\phi\}) = -\beta \sum_{x \in X} \sum_{\nu=0}^2 \cos(\phi_x - \phi_{x+\hat{\nu}} - \ii \mu \delta_{\nu,0}),
\end{equation}
where the field variables $\phi_x\in \mathbb{T}$, and $\mu$ is the chemical potential. Without loss of generality, we will set the length of the lattice $l$ in each component to be equal, that is $l = l_i$ for $i \in \{0,1,2\}$. For a scalar field, the size of the lattice, denoted by $N = l^3$, corresponds to the number of field variables we will be considering in our integration. When $\re \mu \neq 0$, the action becomes complex, and the complex Langevin method is applied to solve this model. This method fails even for small $\mu$ in this model \cite{aarts2010convergence, scherzer2020controlling}, and its failure was carefully analyzed in \cite{aarts2010convergence}, with the effect of the boundary terms begin discussed in \cite{scherzer2020controlling}.
Furthermore, according to our numerical experiments, the complex Langevin dynamics diverges even for a simple Euler-Maruyama method without special time-stepping techniques.
In addition, to impose the presented regularization method on this model, we simply add the regularization term $-\frac{s}{2}\sum_x\phi_x^2$ with $s>0$ as discussed in \Cref{sec:GenU1}.

For this model, the complex drift force corresponding to the variable $\phi_x$ upon regularization is given by 
\begin{equation}\label{eq:3dxy-drift}
    K_{x,s} = -\frac{\partial S_s}{\partial \phi_x} = -\beta \sum_{\nu=0}^2 \left[ \sin(\phi_x-\phi_{x+\hat\nu} -\ii \mu\delta_{\nu,0}) + \sin(\phi_x -\phi_{x-\hat\nu} + \ii \mu \delta_{\nu,0}) \right] - s \phi_x.
\end{equation}
We will mainly focus on two observables: the action density 
\begin{equation}\label{eq:3dxy-oa}
    \langle S \rangle = -\beta \frac{\partial\ln Z}{\partial \beta}  = -\beta\left\langle \sum_{x}\sum_{\nu=0}^2\cos(\phi_x-\phi_{x+\hat\nu}-\ii \mu\delta_{\nu,0})\right\rangle,
\end{equation}
and the number density
\begin{equation}\label{eq:3dxy-on}
    \langle n \rangle = \frac{\partial \ln Z}{\partial \mu} = \left\langle \ii \beta \sum_x \sin(\phi_x -\phi_{x+\hat 0}-\ii \mu) \right\rangle.
\end{equation} 

As part of the general framework for $\operatorname{U}(1)$ lattice field theory, we can expect Propositions \ref{3DXYProp1}, \ref{3DXYProp3}, \ref{3DXYProp4}, and \ref{3DXYProp5} to hold. What remains is to determine an equivalent interval of $s$ depending on a given $\beta$ and $\mu$ such that we are guaranteed correct convergence. In addition, for this specific model, we are able to derive a better regression model as compared to the general model in Proposition \ref{3DXYProp5}. The former is summarised in a Proposition that follows.

\begin{proposition}\label{3DXYProp2}
Let $\beta$ and $\mu$ be given. Let $\eta_0$ be the smallest real number such that the following inequality holds:
\begin{equation}\label{3DXY6}
(e^{|\mu|}+2)Y(\eta_0;\mu) - (e^{-|\mu|}+2)Y(\eta_0;\mu)^{-1} - \eta_0 \log \ml Y(\eta_0;\mu) \mr \leq 0
\end{equation}
whereby $$ Y(\eta;\mu) = \frac{\eta + \sqrt{\eta^2 - 4(e^{|\mu|}+2)(e^{-|\mu|}+2)}}{2(e^{|\mu|}+2)} .$$Then, if $s > 2\eta_0\beta$, The imaginary part of the field $\{\phi^I\}$ is bounded for any realization of the complex Langevin dynamics.
\end{proposition}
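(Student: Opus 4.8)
The plan is to mimic the single-variable analysis of \Cref{U(1)Prop2}: I will exhibit a forward-invariant interval $[Y^-,Y^+]$ for the imaginary parts by showing that, at its boundary, the imaginary drift points inward. First I would write out the imaginary part of \eqref{eq:3dxy-drift}. Using $\im\sin(a+\ii b)=\cos a\,\sinh b$, the evolution of $\phi_x^I$ reads
\begin{equation*}
\frac{\mathrm{d}\phi_x^I}{\mathrm{d}t} = -\beta\sum_{\nu=0}^2\ml \cos(\Delta_\nu^{R,+})\sinh(\Delta_\nu^{I,+}-\mu\delta_{\nu,0}) + \cos(\Delta_\nu^{R,-})\sinh(\Delta_\nu^{I,-}+\mu\delta_{\nu,0})\mr - s\phi_x^I,
\end{equation*}
where $\Delta_\nu^{R,\pm}=\phi_x^R-\phi_{x\pm\hat\nu}^R$ and $\Delta_\nu^{I,\pm}=\phi_x^I-\phi_{x\pm\hat\nu}^I$. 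The regularization supplies the linear restoring term $-s\phi_x^I$, exactly as the $-sy$ term did in \eqref{U(1)5}.

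The next step is to eliminate the dependence on the real parts by the worst-case choice $\cos(\cdot)=-1$ for every term (the analogue of setting $\sin x=1$ in \Cref{U(1)Prop2}), which maximizes the outward push. I would then track the site $x^\star$ carrying the extremal imaginary part and bound its drift by an autonomous scalar function of $y:=\phi_{x^\star}^I$. Collecting the six neighbor terms and estimating the hyperbolic functions (splitting $\sinh(y\pm\mu)$ into its $e^{y}$ and $e^{-y}$ pieces in the chemical-potential direction, and using $\sinh y$ for the remaining two directions), the coefficient of the growing exponential $e^{y}$ assembles into $e^{|\mu|}+2$ and that of the decaying exponential $e^{-y}$ into $e^{-|\mu|}+2$, giving
\begin{equation*}
\frac{\mathrm{d}y}{\mathrm{d}t} \le 2\beta\ml (e^{|\mu|}+2)e^{y} - (e^{-|\mu|}+2)e^{-y} \mr - s y =: F(y).
\end{equation*}
The symmetry $(\{\phi^I\},\mu)\mapsto(-\{\phi^I\},-\mu)$ of the dynamics then yields the mirror bound governing $Y^-$, so it suffices to analyze $F$.

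Minimizing $F$ reproduces the statement. Setting $F'(y)=0$ gives $2\beta\ml(e^{|\mu|}+2)e^{y}+(e^{-|\mu|}+2)e^{-y}\mr=s$, i.e. with $Y=e^{y}$ and $\eta:=s/(2\beta)$ the quadratic $(e^{|\mu|}+2)Y^2-\eta Y+(e^{-|\mu|}+2)=0$, whose larger root is precisely $Y(\eta;\mu)$. Since $F(0)=4\beta\sinh|\mu|>0$ and $F(y)\to+\infty$, the field is pushed outward unless $F$ dips below zero; demanding that its minimum $F(y_0)$ be nonpositive, i.e. $2\beta[(e^{|\mu|}+2)Y-(e^{-|\mu|}+2)Y^{-1}]\le s\log Y$, is after division by $2\beta$ exactly inequality \eqref{3DXY6}. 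Hence a confining barrier exists as soon as $\eta\ge\eta_0$, that is $s>2\eta_0\beta$, and $\phi_{x^\star}^I$ can never cross the corresponding first zero $Y^+$ of $F$; the companion argument bounds it below by $Y^-$.

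The main difficulty is the intersite coupling: the arguments of the $\sinh$ functions are differences $\Delta_\nu^{I,\pm}$ of imaginary parts, which are not individually controlled a priori, so the reduction to the single scalar inequality $\dot y\le F(y)$ is the delicate step. I would handle it through a comparison-principle argument for $\max_x\phi_x^I$, working with its upper Dini derivative since the maximum is only Lipschitz rather than differentiable, and bounding the neighbor contributions at the extremal site; verifying that this genuinely yields the coefficients $e^{\pm|\mu|}+2$, and not merely some looser constant, is where the care lies. Once the scalar bound $F$ is in hand, the remainder—locating $y_0$, identifying $Y(\eta;\mu)$, and solving \eqref{3DXY6} for $\eta_0$—is routine one-variable calculus, entirely parallel to \Cref{U(1)Prop2}.
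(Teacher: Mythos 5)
Your overall strategy---confining $\{\phi^I\}$ by showing the imaginary drift points inward at the extremal site---is exactly the paper's: the paper takes the invariant region to be a hypercube $[-C,C]^N$ and checks the sign of the drift on each face $\Pi_x^{\pm}$, which is the same as your comparison-principle argument for $\max_x \phi_x^I$. Your quadratic for $Y(\eta;\mu)$, the inequality \eqref{3DXY6}, and the threshold $s>2\eta_0\beta$ all come out as in the paper's proof.

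There is, however, a flaw in the one step you yourself flagged as delicate. With $y:=\phi_{x^\star}^I$ the extremal imaginary part, the neighbour differences $\Delta_\nu^{I,\pm}=y-\phi^I_{x^\star\pm\hat\nu}$ are only controlled by $0\le\Delta_\nu^{I,\pm}\le 2y$ (a neighbour may sit near $-y$), so the worst case of each hyperbolic term is $\sinh(2y+|\mu|)$, not $\sinh(y+|\mu|)$. The correct envelope is $\beta\bigl[(e^{|\mu|}+2)e^{2y}-(e^{-|\mu|}+2)e^{-2y}\bigr]-sy$ (the paper's $K^I_{\mathrm{upper}}$ in \eqref{3DXY10} with $C=y$), which grows like $e^{2y}$, whereas your $F(y)=2\beta\bigl[(e^{|\mu|}+2)e^{y}-(e^{-|\mu|}+2)e^{-y}\bigr]-sy$ grows only like $e^{y}$ and therefore does \emph{not} dominate the true drift for large $y$; the claimed inequality $\mathrm{d}y/\mathrm{d}t\le F(y)$ is false as stated. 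You nonetheless land on the correct $\eta_0$ because $F(y)=2K^I_{\mathrm{upper}}(y/2)$: minimizing either function leads to the same quadratic in $Y$ (with $Y=e^{y_0}$ for you versus $Y=e^{2C_1}$ in the paper) and the identical sign condition \eqref{3DXY6}, only the barrier location is halved. Two ways to repair the step: either put the factor $2$ back into the $\sinh$ arguments and rerun the one-variable minimization, recovering the paper's computation verbatim; or observe that your $F$ is the correct bound for the \emph{oscillation} $\max_x\phi_x^I-\min_x\phi_x^I$ (add the upper bound on the drift at the max site to the negative of the lower bound at the min site, the two $-s\phi^I$ terms combining into $-s\,(\max-\min)$), and then deduce boundedness of each $\phi_x^I$ from the bounded oscillation together with the restoring term $-s\phi_x^I$.
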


Here, by following the notation in \eqref{eq:real-Langevin} and \eqref{U(1)5}, the first index in the subscript of $K_{x,s}$ represents the drift term for the corresponding scalar field $\phi_x$, while $s$ in the second index indicates that this drift term is obtained from a regularized action.

\begin{proof}
We begin the proof by decomposing the drift term in \eqref{eq:3dxy-drift} as follows:
\begin{equation}\label{3DXY7}
\begin{aligned}
K^R_{x,s} =& -\beta \sum_{\nu = 0}^2 [\sin \ml \phi_x^R - \phi_{x+\hat\nu}^R\mr \cosh \ml \phi_x^I - \phi_{x+\hat\nu}^I - \mu \delta_{\nu,0}\mr \\
&+ \sin \ml \phi_x^R - \phi_{x-\hat\nu}^R\mr \cosh \ml \phi_x^I - \phi_{x-\hat\nu}^I + \mu \delta_{\nu,0}\mr] - s \phi_x^R, \\
K^I_{x,s} =& -\beta \sum_{\nu = 0}^2 [\cos \ml \phi_x^R - \phi_{x+\hat\nu}^R\mr \sinh \ml \phi_x^I - \phi_{x+\hat\nu}^I - \mu \delta_{\nu,0}\mr \\
&+ \cos \ml \phi_x^R - \phi_{x-\hat\nu}^R\mr \sinh \ml \phi_x^I - \phi_{x-\hat\nu}^I + \mu \delta_{\nu,0}\mr] - s \phi_x^I.
\end{aligned}
\end{equation}

Following a similar strategy as to Proposition \ref{U(1)Prop2} but in a generalized case, we need to show that for $s$ large enough, the support of the probability density in the imaginary variables $\phi^I$ is compact. Thus, we do so by first removing any dependence in $\{\phi^R \}$ by finding an upper bound for $K^I_{x,s}$ for a given $x \in X$ that holds for all $\phi_x^R$. An instructive upper bound for $K^I_{x,s}$ would be
\begin{equation}\label{3DXY8}
\begin{aligned}
K^I_{x,s} &\leq \beta \sum_{\nu = 0}^2 \ml \sinh(|\phi_x^I|+|\phi_{x+\hat\nu}^I| + |\mu\delta_{\nu,0}|) + \sinh(|\phi_x^I|  + |\phi_{x-\hat\nu}^I| + |\mu\delta_{\nu,0}|) \mr  - s\phi_x^I.
\end{aligned}
\end{equation}
Let $\overline{K^I_{x,s}}$ be the right-hand side of the inequality above, which only depends on the imaginary part of the field variables $\{\phi^I\}$. Note that $\overline{K^I_{x,s}}$ can be viewed as a function on $\mathbb{R}^N$. As a generalization to Proposition \ref{U(1)Prop2}, we will need to find an $N$-dimensional object $H$ that contains the origin\footnote{This condition is necessary since all our complex Langevin simulations always starts from the origin.} such that along the boundary of the object $\partial H$, we have\footnote{In the one-link model, the $1$- dimensional object is the line segment $[Y^-,Y^+]$ that contains the origin $0$, with the boundary being the points $y = Y^+$ and $y = Y^-$.}
\begin{equation}\label{3DXY29}
\sum_{x \in X} K_{x,s}^I \cdot \hat{n}_x|_{\partial H} < 0,
\end{equation}
where $\{\hat{n}\}$ represents the outward-oriented unit vector normal of the surface $\partial H$. 
For this proof, our choice of $H$ would be an $N$-dimensional hypercube centered at the origin with length $2C$ in each dimension. Here, we reserve the freedom of choice on $C > 0$ which would be chosen to close the proof for this proposition. We note that the surfaces of this hypercube are $(N-1)$-dimensional finite planes given by
\begin{equation}\label{3DXY9}
\Pi^{\pm}_x = \{ \{\phi^I\} \in \mathbb{R}^N | \phi_x^I = \pm C \text{ and } \forall y \neq x, |\phi^I_y| < C \}
\end{equation}
with a total of $2N$ of such planes, indexed by a sign on its superscript and $x \in X$ corresponding to the imaginary part of the field variable $\phi_x$ in which the value of $C$ is achieved. Note that since the outward oriented unit vector normal to $\Pi_x^\pm$ only has a component in the $x$-direction and that $K_{x,s}^I \leq \overline{K_{x,s}^I}$, it is sufficient to show that $\overline{K^I_{x,s}}(\{\phi|_{\Pi^+_x} \}) < 0$ if $\phi_x^I = + C$ and $\overline{K^I_{x,s}}(\{\phi|_{\Pi^+_x} \}) > 0$ if $\phi_x^I = - C$ for all $x$. Thus, applying the relevant inequalities from \eqref{3DXY9} for $\Pi_x^+$, we have
\begin{equation}\label{3DXY10}
\begin{aligned}
\overline{K^I_{x,s}}|_{\Pi_x^+} 
&< \beta \sum_{\nu = 0}^2 \ml \sinh(|\phi_x^I|+|\phi_{x+\hat\nu}^I| + |\mu\delta_{\nu,0}|) + \sinh(|\phi_x^I|-|\phi_{x-\hat\nu}^I| + |\mu\delta_{\nu,0}|) \mr  - s\phi_x^I\\
&<\beta (2 \sinh(2C + |\mu|) + 4 \sinh(2C)) - sC := K^I_{\text{upper}}.
\end{aligned}
\end{equation}
Thus, $\overline{K^I_{x,s}}|_{\Pi_x^+} < 0$ if $K^I_{\text{upper}} \leq 0$. Thus, for a given $\beta, s$ and $\mu$, we can conduct a one variable optimization on $C$ and deduce that $K^I_{\text{upper}}$ is minimized at $C = C_1$ with
\begin{equation}
C_1 = \frac{1}{2}\log\ml \frac{\frac{s}{2\beta} + \sqrt{\ml\frac{s}{2\beta}\mr^2-4(e^{|\mu|}+2)(e^{-|\mu|}+2)}}{2(e^{|\mu|}+2)} \mr.
\end{equation}
A remark here is that minimization is consistent with the fact that we can utilize the freedom of choice of $C$ for any given $\beta,\mu$ and $s$. Thus, we want to lower the upper bound of $K^I_{\text{upper}}$ as much as possible so that $K^I_{\text{upper}} < 0$ can be achieved with a smaller value of $s$. This is analogous to picking the choice of $y_0$ in the $\operatorname{U}(1)$ one-link model in Proposition \ref{U(1)Prop2}. Substituting $C_1$ to the expression of $\overline{K^I_{\text{upper}}}|_{\Pi_x^+}$ in \eqref{3DXY10} yields
\begin{equation}\label{3DXY11}
K^I_{\text{upper}}|_{\Pi_x^+} = \beta \ml (e^{|\mu|}+2)Y(\eta;\mu) - (e^{-|\mu|}+2)Y(\eta;\mu)^{-1} - \eta \log \ml Y(\eta;\mu) \mr \mr
\end{equation}
where $Y(\eta;\mu)$ is as defined in \eqref{3DXY6} and $\eta := \frac{s}{2\beta}.$ Thus, demanding $K^I_{\text{upper}} \leq 0$ is equivalent to solving the inequality as described in \eqref{3DXY6}. In other words, we describe the minimum value in which \eqref{3DXY6} holds as $\eta_0$. Thus, this is equivalent to $$ \eta = \frac{s}{2\beta} \geq \eta_0,$$ and thus $s \geq 2\eta_0 \beta$. Alternatively, we can just choose $s > 2\eta_0 \beta$. A symmetric and analogous argument holds for the case of $\Pi^-_x$ and for all $x$. This concludes the proof.
\end{proof}

Furthermore, as promised, we will attempt to obtain a better regression model as compared to that in \eqref{3DXY18}, summarized in the following proposition:

\begin{proposition}\label{3DXYProp6}
We consider the 3D XY model for the action density and the number density observables as defined in \eqref{eq:3dxy-oa} and \eqref{eq:3dxy-on}. For both observables, we can improve the representation of $\langle O \rangle_s$ to
\begin{equation}\label{3DXY19}
\langle O \rangle_s = \frac{\sum_{k=0}^\infty a_k e^{-\frac{k}{s}}}{\sum_{k=0}^\infty b_k e^{-\frac{k}{s}}}.
\end{equation}
\end{proposition}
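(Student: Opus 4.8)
The plan is to return to the fractional representation behind \eqref{3DXY18}, namely the formula \eqref{3DXY14} with $f_1,f_2$ as in \eqref{3DXY15} established in \Cref{3DXYProp1}, and to show that for these two particular observables the integer appearing in every exponent is even, so that each $e^{-k/(2s)}$ becomes $e^{-k/s}$ after relabelling. Concretely, \Cref{3DXYProp1} gives $\langle O\rangle_s=f_1(s)/f_2(s)$, where the numerator is a sum of terms proportional to $\exp\ml-\frac{1}{2s}\sum_{j=1}^N(m_j+k_j)^2\mr$ with coefficients $\widehat O_{\vec m}\,\widehat\beta_{\vec k}$, and the denominator is the analogous sum with exponent $\exp\ml-\frac{1}{2s}\sum_{j=1}^N k_j^2\mr$. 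Thus the exponents are governed by the integers $\sum_j(m_j+k_j)^2$ and $\sum_j k_j^2$, ranging over those multi-indices for which $\widehat O_{\vec m}$ and $\widehat\beta_{\vec k}$ do not vanish.

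The central structural observation is a charge-conservation property: for the 3D XY action \eqref{eq:3dxy-action} and for both observables \eqref{eq:3dxy-oa} and \eqref{eq:3dxy-on}, every nonzero Fourier coefficient is supported on a multi-index whose components sum to zero. For $e^{-S}$ I would expand each factor of $e^{-S}=\prod_{x,\nu}\exp\ml\beta\cos(\phi_x-\phi_{x+\hat\nu}-\ii\mu\delta_{\nu,0})\mr$ by the Bessel generating identity $\exp(\beta\cos\theta)=\sum_{n}I_n(\beta)e^{\ii n\theta}$, the same device as \eqref{U(1)9}. Each edge $(x,\nu)$ then contributes a factor $e^{\ii n_{x,\nu}(\phi_x-\phi_{x+\hat\nu})}$, adding $+n_{x,\nu}$ to the frequency at site $x$ and $-n_{x,\nu}$ at site $x+\hat\nu$; summing over all sites, these contributions cancel in pairs, so any $\vec k$ with $\widehat\beta_{\vec k}\neq0$ satisfies $\sum_j k_j=0$. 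The identical argument applies to both observables, since each is a lattice sum of $\cos$ or $\sin$ of nearest-neighbour differences $\phi_x-\phi_{x\pm\hat\nu}$: every Fourier mode carries charges $\pm1$ on the two endpoints of a single edge with opposite signs and zero elsewhere, so $\widehat O_{\vec m}\neq0$ forces $\sum_j m_j=0$.

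The conclusion then follows from an elementary parity fact. For any integer $n$ one has $n^2\equiv n\pmod 2$, hence $\sum_j n_j^2\equiv\sum_j n_j\pmod 2$. In the numerator, whenever both coefficients are nonzero $\sum_j(m_j+k_j)=\sum_j m_j+\sum_j k_j=0$, so $\sum_j(m_j+k_j)^2$ is even; in the denominator $\sum_j k_j=0$ makes $\sum_j k_j^2$ even as well. Writing $\sum_j(m_j+k_j)^2=2p$ and $\sum_j k_j^2=2q$ and grouping terms with equal $p$ (respectively $q$) collapses each exponent to $e^{-(2p)/(2s)}=e^{-p/s}$ (respectively $e^{-q/s}$), which is precisely the form \eqref{3DXY19}.

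I expect the only genuine work to lie in the second step, namely making the charge-conservation claim rigorous. Each factor of $e^{-S}$ is an infinite Bessel series, so the product over the finitely many lattice edges is a multiple infinite sum, and extracting $\widehat\beta_{\vec k}$ requires interchanging that sum with the torus integration. This is legitimate by absolute convergence, which follows from the super-exponential decay of $I_n(\beta)$ in $|n|$ — the same smoothness input already exploited in \Cref{3DXYProp1}. Once the zero-sum support of the Fourier coefficients is secured, the parity observation and the relabelling $k\mapsto 2k$ are entirely routine.
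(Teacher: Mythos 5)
Your argument is correct and follows essentially the same route as the paper's own proof: both start from the fractional representation of \Cref{3DXYProp1}, establish via the Jacobi--Anger/Bessel expansion that the Fourier coefficients of $e^{-S}$ and of the two observables are supported on multi-indices summing to zero (the paper's set $\mathcal{K}$, your ``charge conservation''), and then invoke the parity identity $n^2\equiv n\pmod 2$ to conclude that every exponent $\tfrac{1}{2s}\sum_j k_j^2$ and $\tfrac{1}{2s}\sum_j(m_j+k_j)^2$ is an integer multiple of $1/s$. No gaps; your remark on justifying the interchange of the Bessel sums with the torus integration matches the smoothness input the paper relies on.
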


\begin{proof}
It is obvious that both observables and the action are $C^{\infty}$ functions. Thus the derivations in the proof of \Cref{3DXYProp1} work in the 3D XY model. 
By observing that the action $S$ and the number density $n$ on a discrete lattice takes the form of a difference in two neighboring field variables $\phi_x - \phi_{x+\hat{\nu}}$ as compared to individual field variables, we thus rework some of the steps in \eqref{3DXY3}. For the 3D XY model, with the action given in \eqref{eq:3dxy-action}, we can rewrite the exponentiation of the negative of it as 
\begin{equation}\label{3DXY20}
\begin{aligned}
e^{-S(\{\phi\})} &= \prod_{x\in X} \prod_{\nu=0}^{2} e^{\beta \cos(\phi_x - \phi_{x+\hat\nu} - \ii \mu \delta_{\nu,0}) } \\
&= \prod_{x\in X} \prod_{\nu=0}^{2} \sum_{m \in \mathbb{Z}} [\ii^m e^{m\mu \delta_{\nu,0}} J_m(-\ii \beta)] e^{\ii m(\phi_x - \phi_{x+\hat\nu})}
=  \sum_{\vec{k} \in \mathcal{K}} \beta_{\vec k} e^{\ii\sum_{x\in X}k_x \phi_x},
\end{aligned}
\end{equation}
where we have again used the Jacobi-Anger expansion \eqref{U(1)9} and
the resultant Fourier series with its coefficients represented by $\beta_{\vec k}$ is analogous to that in the second line of equation \eqref{3DXY3}.
Here the range of the summation $\mathcal{K}$ is given as a proper subset of $\mathbb{Z}^N$ defined by
\begin{displaymath}
\mathcal{K} = \{ \vec{k} \in \mathbb{Z}^N: \sum_{x\in X} k_x = 0 \}.
\end{displaymath}
The appearance of $\mathcal{K}$ can be explained as follows. Since the inner sum represents different Fourier frequency modes for $\phi_x - \phi_{x+\hat{\nu}}$, with each satisfying the fact that $\sum_{x \in X}k_x = 0$, and the fact that the product of exponentials is given by the exponential of the sum of the individual arguments, such a property is preserved in the resulting Fourier series, and thus have the given property for $\mathcal{K}$.
Here, we note that the observables $n$ or $S$ follow a similar structure. This thus implies that, analogous to \eqref{3DXY14} and \eqref{3DXY15}, we have
\begin{equation}\label{3DXY21}
\langle O \rangle_s = \frac{1}{Z_s} \sum_{\vec{m} \in \mathcal{K}}\sum_{\vec{k} \in \mathcal{K}}\widehat{O}_{\vec m} \beta_{\vec k} \exp \left(-\frac{1}{2s} \sum_{x \in X} (m_x+k_x)^2 \right), \qquad
Z_s = \sum_{\vec{k} \in \mathcal{K}}\beta_{\vec k}  \exp \left( -\frac{1}{2s} \sum_{x \in X} k_x^2 \right).
\end{equation}

For every $\vec{k} \in \mathbb{Z}^N$, the sum $\sum_{x\in X} k_x^2$ is odd/even if and only if $\sum_{x \in X} k_x$ is odd/even since $k_x^2$ and $k_x$ always share the same parity.
Then for $\vec{k} \in \mathcal{K}$, we know that $\sum_{x \in X} k_x^2$ is even so that in \eqref{3DXY21}, each exponential term in the series expansion of $Z_s$ has the form $\exp(-l/s)$ for some $l \in \mathbb{Z}$, where the factor $2$ drops off by reduction of the fraction. The same can be done to the expansion of $\langle O \rangle_s$ since $\vec{m} + \vec{k} \in \mathcal{K}$, which leaves with us the expression in \eqref{3DXY19}. 
\end{proof}
\begin{remark}
We note that the concept of counting the parity of $k_x$ and $m_x$ is inspired by the representation of the partition function into bonds as proposed by the Worm Algorithm in \cite{prokofev2010worm} and \cite{banerjee2010worm}.
\end{remark}

With reference to the above proposition, the appropriate regression model is thus given by 

\begin{equation}\label{3DXY23}
\langle O \rangle_s = \frac{\sum_{k=0}^M a_k e^{-\frac{k}{s}}}{1 + \sum_{k=1}^M b_k e^{-\frac{k}{s}}}
\end{equation}
for a given integer $M$.

We apply the above regression model to two examples on a $8\times 8\times 8$ lattice with parameters $\beta=0.2,\mu=\sqrt{0.1}$ and $\beta=0.7,\mu=\sqrt{0.1}$ respectively, for the two observables of interest, the number density scaled with the chemical potential $\mu n$, and the action density $S$. Similar to the $U(1)$ one-link model, we will compare the numerical results obtained with that from standard methods from current literature. The results for the $2R$ method are summarized in \Cref{fig:3dxy-reg-extra} and Table \ref{tab:3DXYNumResults}.

\begin{table}[htbp]
\centering
\caption{Estimates for $\re{\mu \langle n \rangle}$ and $\re{\langle S\rangle }$ using various methods from the current literature and from our $2R$ method as in \Cref{fig:3dxy-reg-extra}.}
\label{tab:3DXYNumResults}
\begin{tabular}{|c|c|c|c|c|}
\hline 
    &\multicolumn{2}{c|}{ $\beta = 0.2$, $\mu = \sqrt{0.1}$} &    \multicolumn{2}{c|}{ $\beta = 0.7$, $\mu = \sqrt{0.1}$} \\
\hline
   & $\re{\mu\langle n \rangle}$ & $\re{\langle S \rangle}$ & $\re{\mu\langle n \rangle}$ & $\re{\langle S \rangle}$\\  
\hline
    Original complex Langevin method from \cite{scherzer2020controlling} & $0.001840$  & $-0.07929$ & $0.04926$ & $-1.5268$\\
    Corrected complex Langevin Method from \cite{scherzer2020controlling} & $0.0003858$ & $-0.06716$ & $0.04905$ & $-1.5240$\\
    Worldline Method from \cite{scherzer2020controlling} & $1.5495 \times 10^{-7}$ &  $-0.06230$ & $0.04898$ & $ -1.5240$ \\
    Best for $2R$ Method & $0.001641$ & $-0.08791$ & $0.05234$ & $-1.6052$ \\
\hline
    &\multicolumn{4}{c|}{$2R$ Method with different values of $M$} \\
\hline
    $M = 3$ & $0.001641$ & $-0.10397$ & $0.05307$ & $-1.6170$\\ 
    $M = 4$ & $0.001797$ & $-0.1046$ & $0.05303$ & $-1.6191$\\
    $M = 5$ & $0.001849$ & $-0.1069$ & $0.05277$ & $-1.6052$\\
    $M = 6$ & $0.002455$ & $-0.1097$ & $0.05315$ & $-1.6151$\\
    $M = 7$ & $0.002793$ & $-0.08791$ & $0.05234$ & $-1.6126$\\
\hline
\end{tabular}
\end{table}

Here, we summarize some of the key points from Table \ref{tab:3DXYNumResults} as follows. First, we note that the range of $s$ used might not be consistent with what is obtained in Proposition \ref{3DXYProp5}; $s_0 > 2 \eta_0 \beta  = 2(7.8)(0.2) = 3.12$ for $\beta = 0.2$ and $\mu = \sqrt{0.1}$ while we have included points as close as $s = 0.6$ in our regression. Next, we note that the results obtained from our $2R$ method are similar to those obtained from the original complex Langevin method. Although they might not do as well as compared to the corrected complex Langevin method from \cite{scherzer2020controlling} and the worldline method, the results obtained are still not too far off from these methods. Furthermore, we note that the results obtained using different values of $M$ for the $2R$ method are generally stable, while according to our simulations, the original complex Langevin method for $s = 0$ suffers from instabilities, and the result of the original complex Langevin method from \cite{scherzer2020controlling} may require adaptive time-stepping. Nonetheless, one should note that by picking $s = 0.6$, a value which is way off from our guaranteed region of $s > s_0$, there might be an unquantifiable bias that possibly grows as we pick values of $s$ closer to $0$. Such a possibility is inferred for the case of the $U(1)$ one-link model, in which the difference between the true value and the complex Langevin method grows significantly as $s$ falls below a certain threshold $s_1$ ($\approx 0.4 < s_0$) and gets larger as it approaches $0$. Despite the inability to obtain a result of better accuracy as compared to the worldline and the corrected complex Langevin method, the relatively simple structure and the improved generalizability of the $2R$ method might still serve as a method that we can use to corroborate with alternative methods in the current literature.

\begin{figure}
    \centering
    \includegraphics[width=0.45\textwidth]{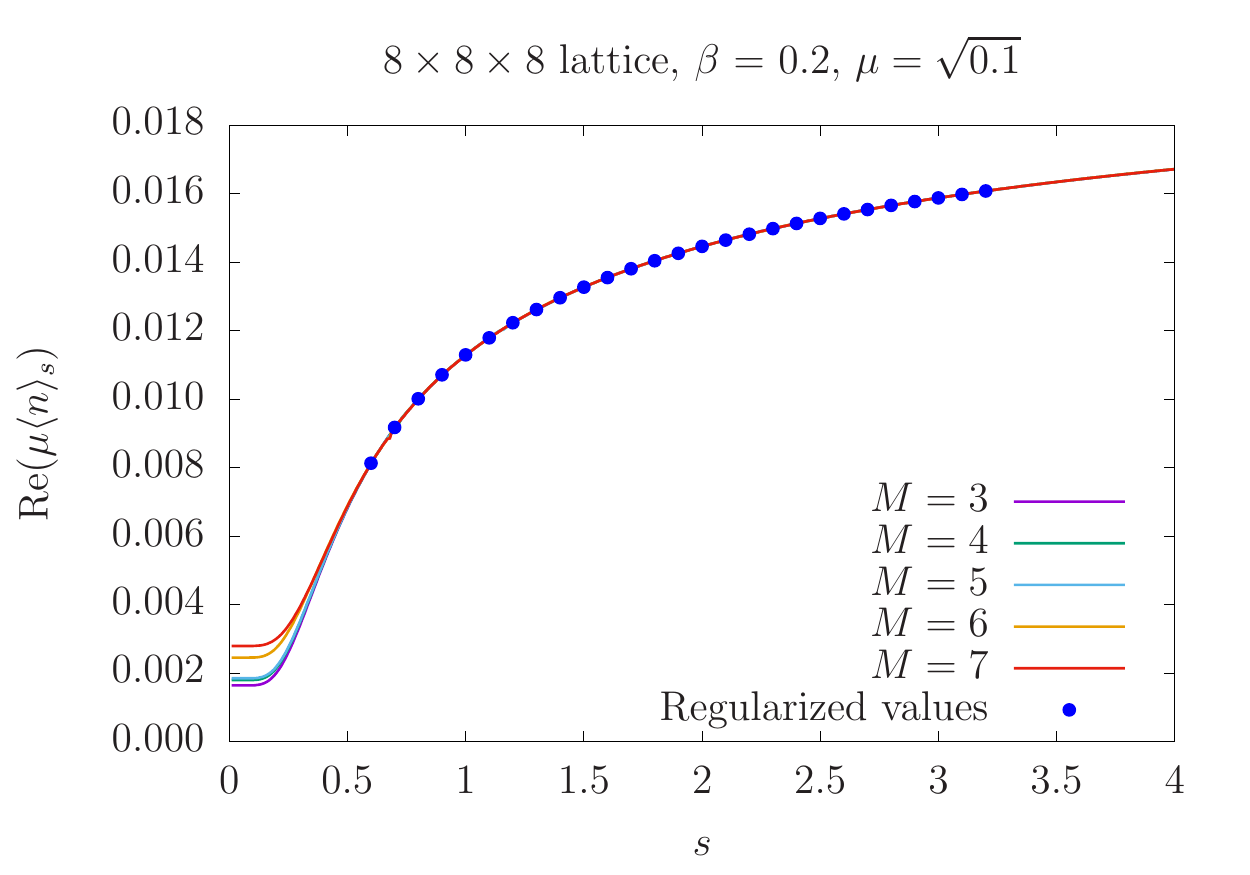}
    \includegraphics[width=0.45\textwidth]{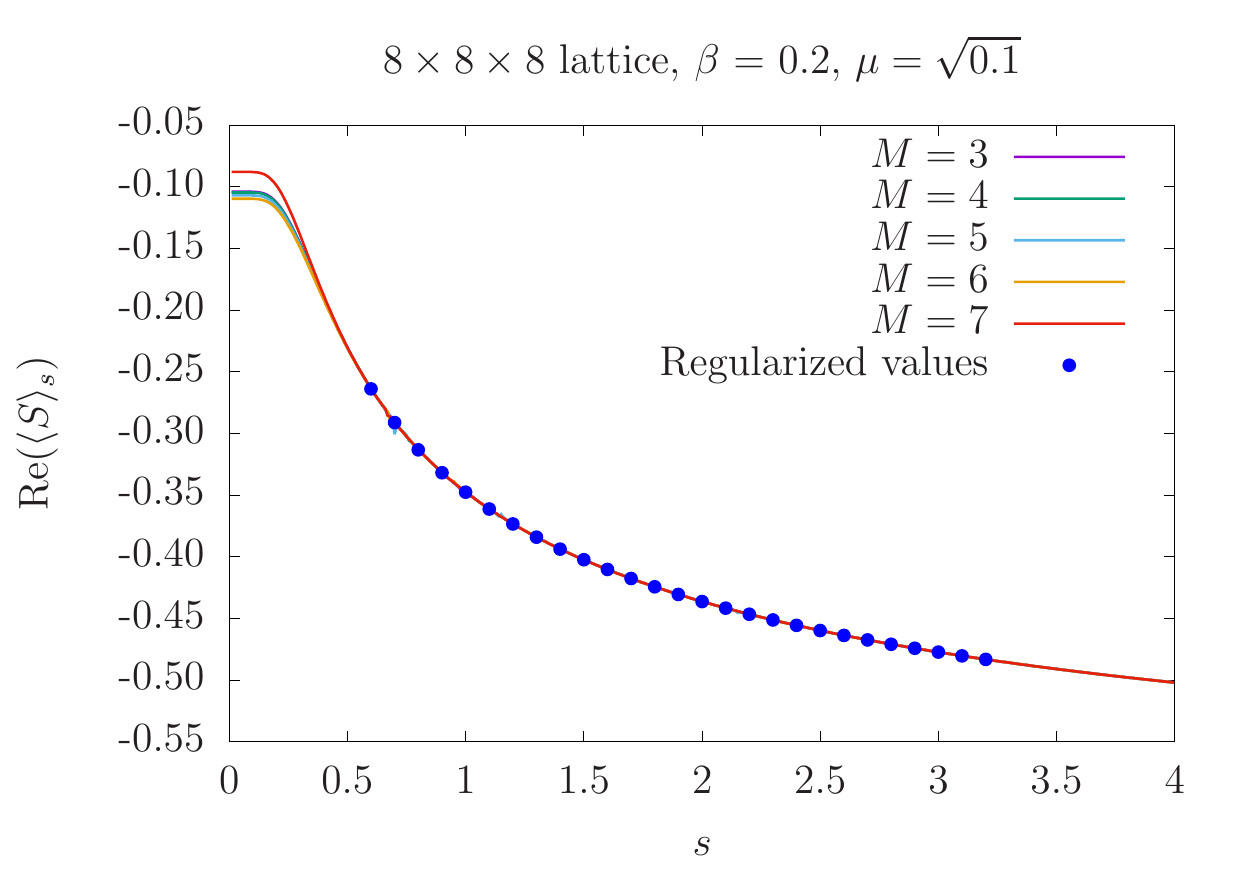}    
    \includegraphics[width=0.45\textwidth]{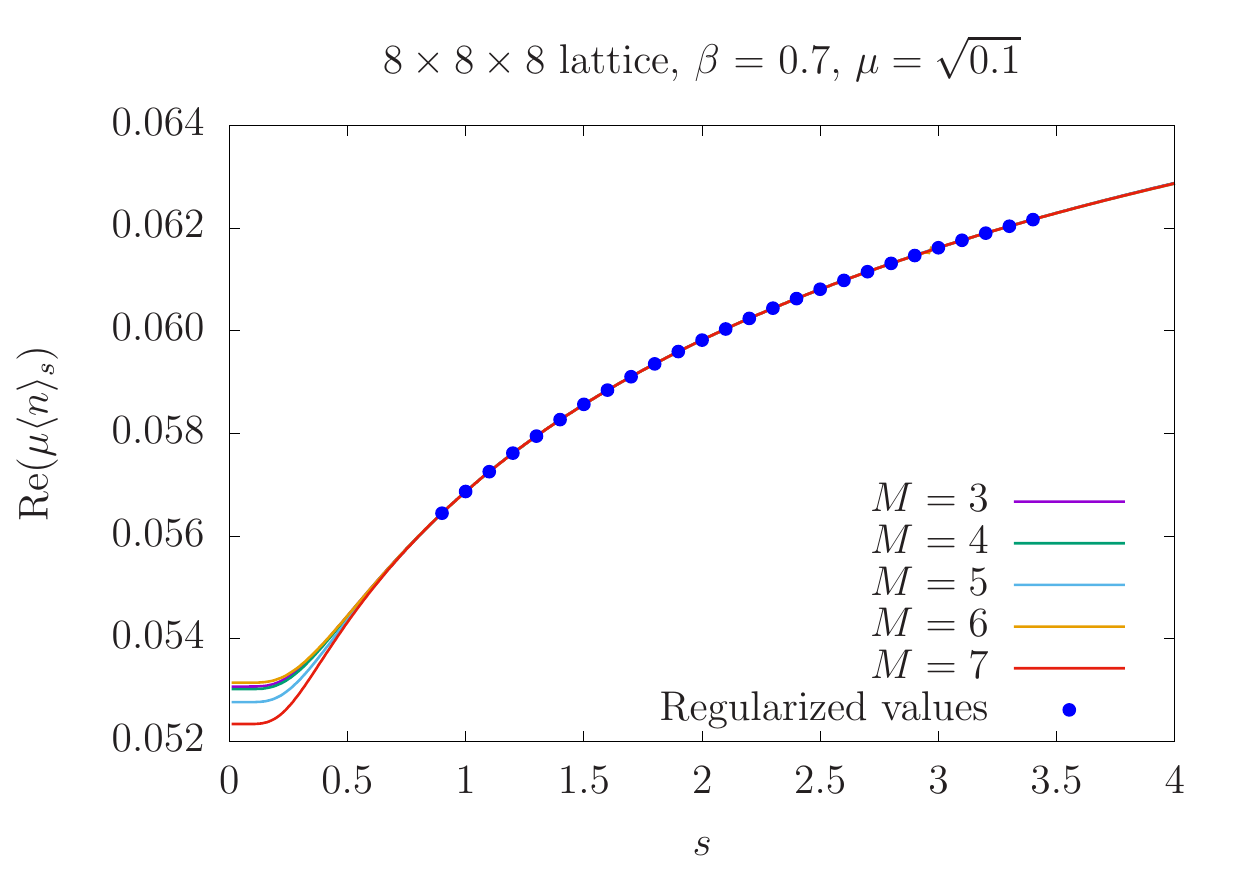}
    \includegraphics[width=0.45\textwidth]{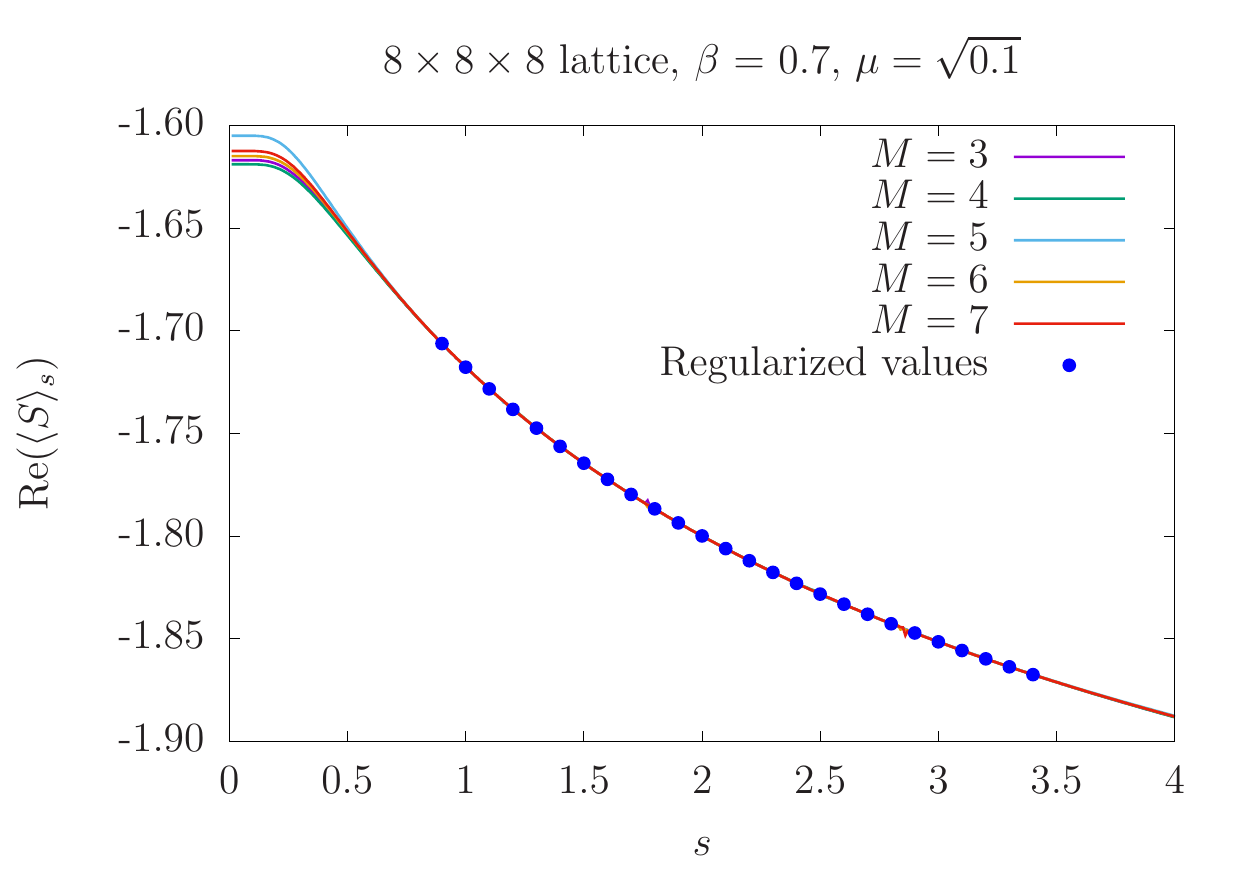}  
    \caption{Number density (left) and action density (right) for $\beta=0.2$ (top) and $0.7$ (bottom) of 3D XY model. $M$ represents the number of expansion terms used for extrapolation \eqref{3DXY19}.  \label{fig:3dxy-reg-extra}}
\end{figure}

Next, we shall explain the rationale of choosing the $2R$ method over the $3R$ method despite the latter having better success with the $U(1)$ one-link model. Recall that for our $3R$ method, we will have to choose a reference $s_0$. From \Cref{fig:3dxy-rew}, we can observe that the estimates for $\re{\mu \langle n \rangle}$ without regression is better for $s_0 = 0.6$ at $0.005666$ as compared to that in $s_0 = 3.2$ at $0.01498$. Here, better is defined as how close our results are to the results generated from the worldline method, at $\sim 0$ for $\re{ \mu \langle n\rangle}$ at $\beta = 0.2$, $\mu = \sqrt{0.1}$ and an estimate without regression is obtained by quoting the value at $s = 0.1$ directly for an estimate as regression including this point will likely not predict the value at $s = 0$ to be too far off from it. The reason for this difference might be as follows. We note that there is a trade-off between errors arising from regression and errors arising from having $s_0$ that is not sufficiently large enough to guarantee possible correct convergence. In this case here, $s_0 = 3.2$ is too far off from our point of interest at $s = 0$, and thus may result in a large error resulting from regression. This error might be larger than the error arising from inaccurate simulations with $s_0 = 0.6$ and thus accounts for such a phenomenon. Nonetheless, the values obtained for both choices of $s_0$ fail to surpass that obtained from the $2R$ method with even the worst value at $M = 7$ at $0.002793$.

\begin{figure}
    \centering
    \includegraphics[width=0.45\textwidth]{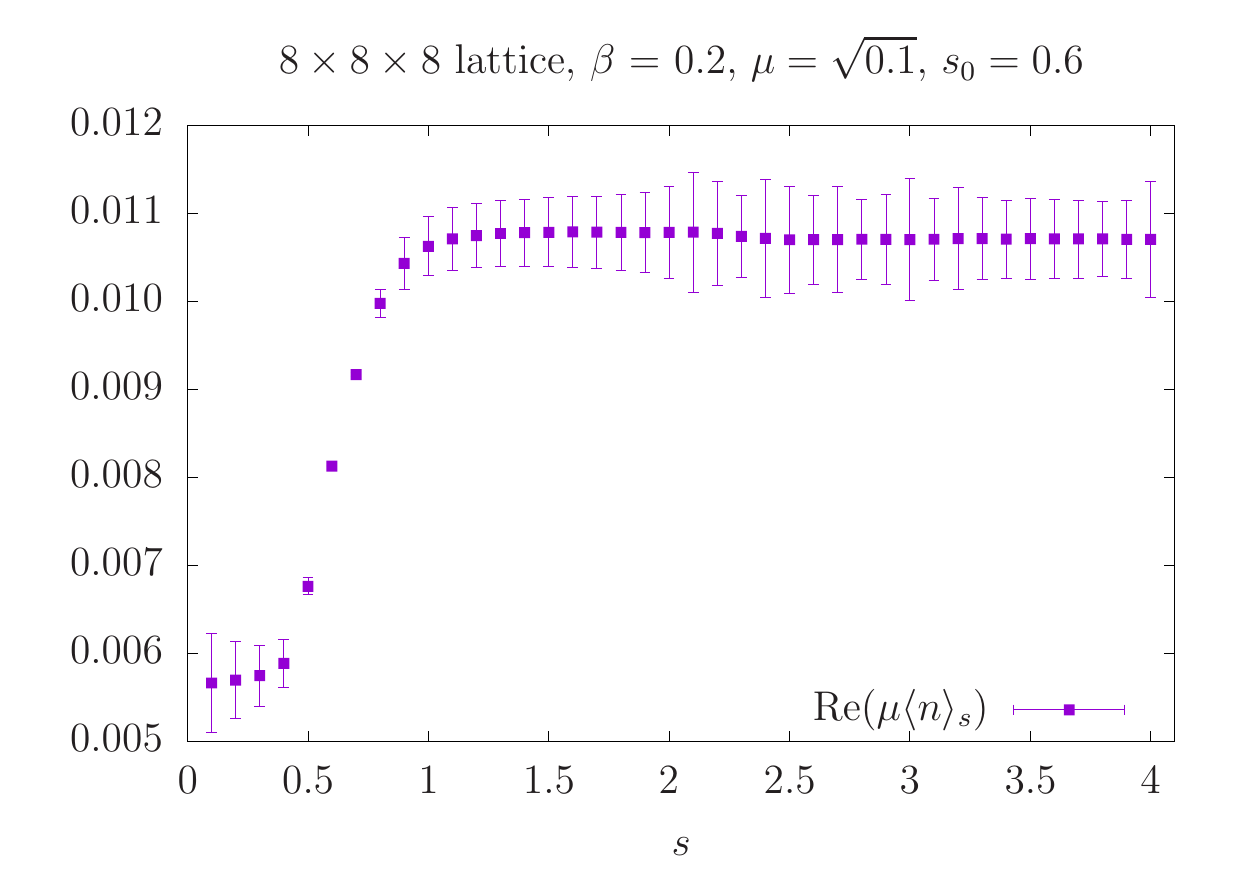}
    \includegraphics[width=0.45\textwidth]{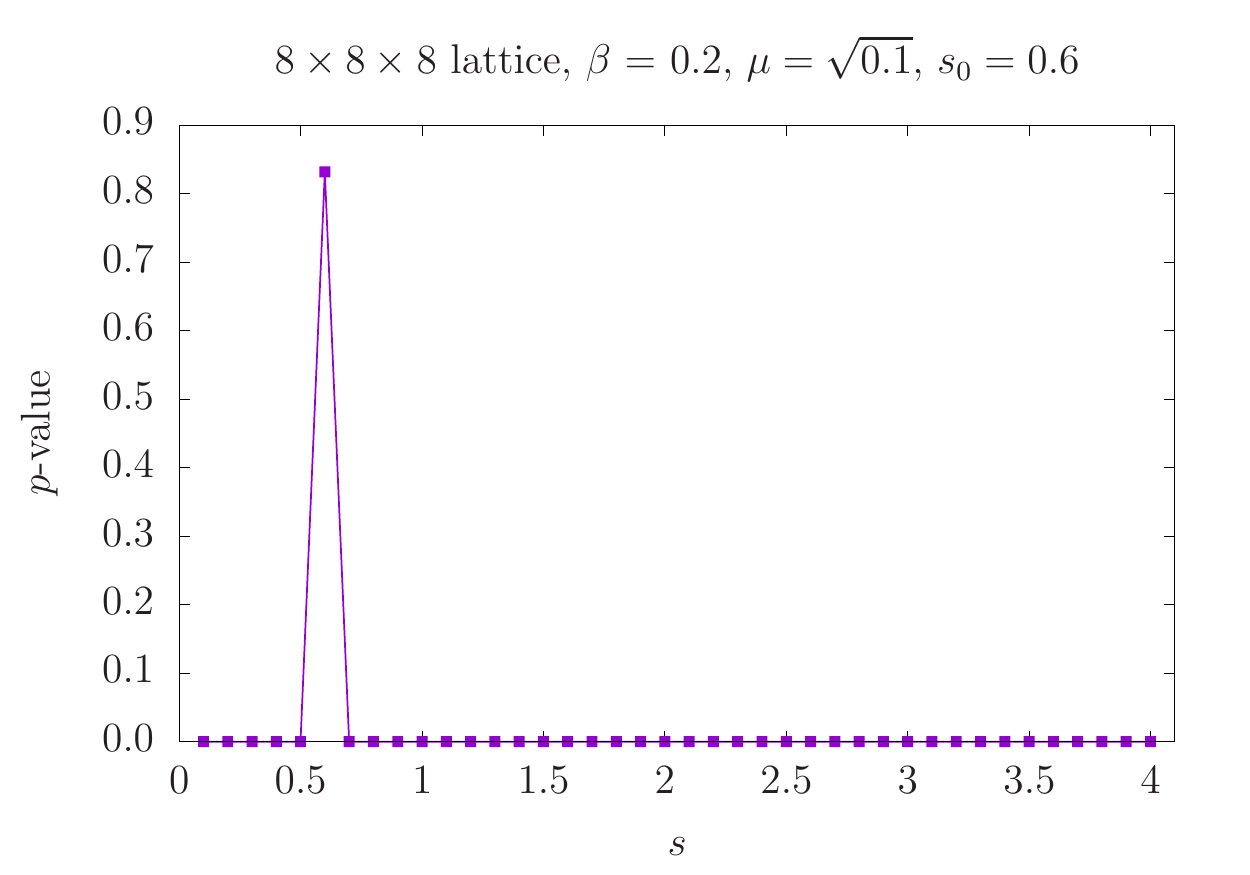}
    \includegraphics[width=0.45\textwidth]{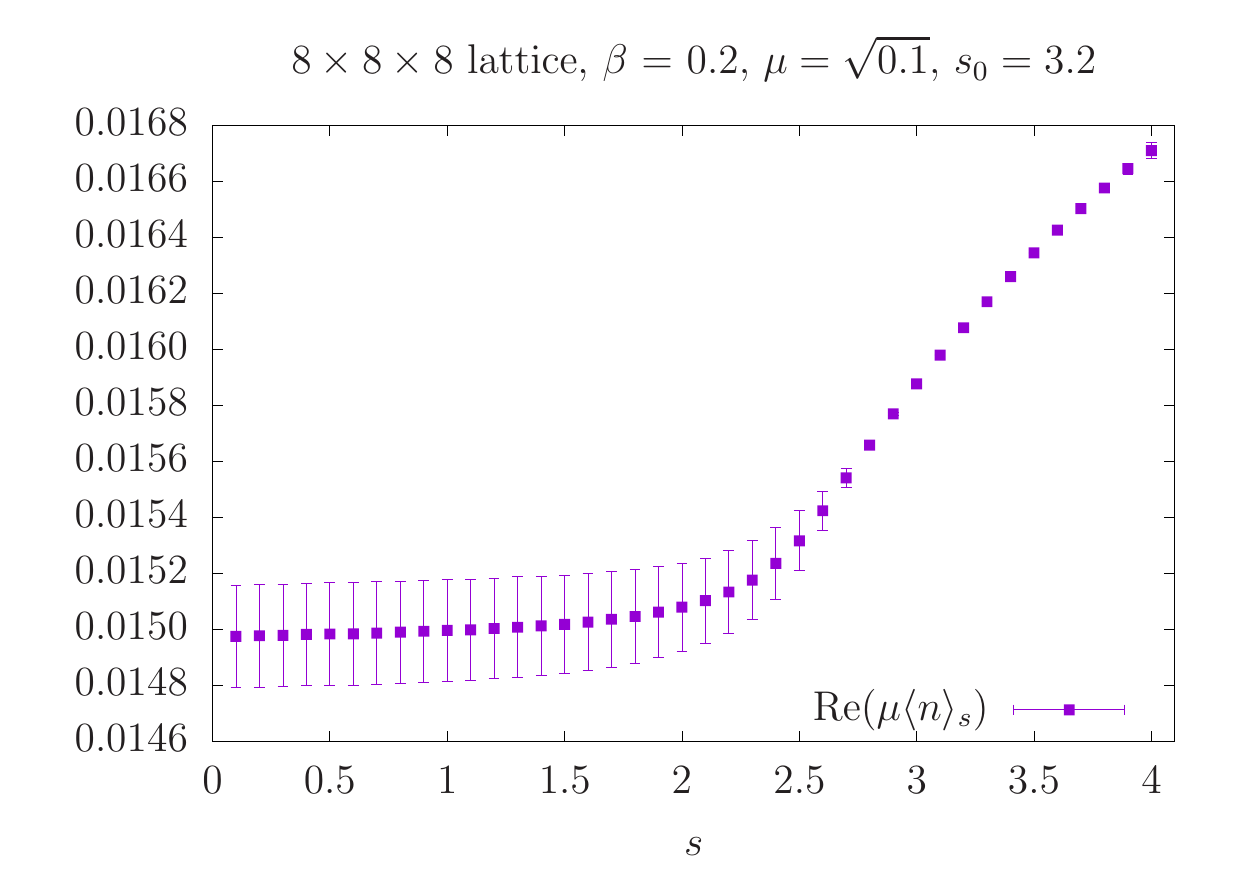}
    \includegraphics[width=0.45\textwidth]{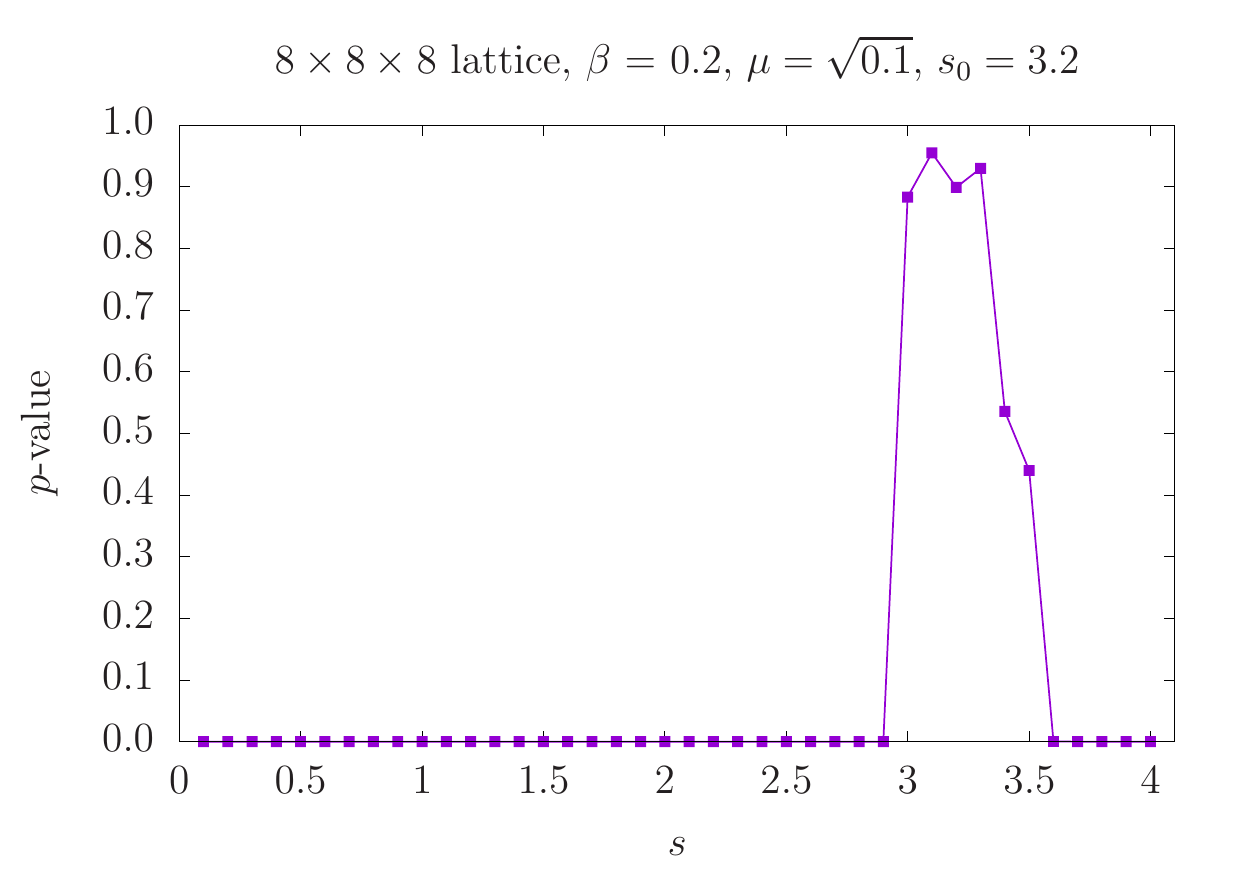}
    \caption{Reweighted results for $\beta=0.2$ with parameter $s_0 = 0.6$ (top) and $3.2$ (bottom).\label{fig:3dxy-rew}}
\end{figure}

On top of that, we note that the benefits of regression might be limited for the 3D XY model. As seen from the p-values on the right diagrams in \Cref{fig:3dxy-rew}, the phenomenon of infinite variance appears relatively quick for both cases, with p-value dropping to a value close to $0$ as soon as $s$ hits $0.5$ for $s_0 = 0.6$ and a similar phenomenon at ar $s = 2.9$ for $s_0 = 3.2$. Thus, restricting our regression points for which the p-value is at least $0.05$ would heavily restrict the number of points that can be used, and therefore reduces the prediction ability of the model at $s = 0$. This is also why we did not perform regression for this model, as the number of points might not be sufficient to determine the regression coefficients.

In both the $2R$ and the $3R$ method, a key ingredient would be the regression method in the final step. Thus, in view of improving results obtained from regression, we list down a few plausible explorations. One includes improving the regression method itself as it was done using ordinary regression techniques on highly non-linear models as such that in \eqref{3DXY23}, where the dependent variable only appears after a ratio of sums of the exponential of the inverse of its independent variable. Another possible exploration would be to reduce the distance of $s_0$ from $0$ by evaluating the expectation of a modified observable. Yet another possible exploration would be to combine the use of the corrected CL method with our $2R$ or $3R$ method. The interested reader is invited to try out some of these explorations in view of improving the accuracy of the proposed method.

\subsection{Polyakov chain model} 
In this subsection, we discuss the results for the one-dimensional Polyakov chain model \cite{cai2020how}, whose action is given by
\begin{equation}\label{eq:su3-polyakov}
S(\{U\})=-\operatorname{tr}\left(\beta_{1} U_{1} \cdots U_{N}+\beta_{2} U_{N}^{-1} \cdots U_{1}^{-1}\right), \quad \beta_1,\beta_2 \in \mathbb{R}.
\end{equation}
The observable of interest is $O_l(\{U\}) = \tr ([U_1\cdots U_N]^l)$ for $l \in \mathbb{Z}_+$. Due to the gauge invariance, this model can actually be reduced to the one-link model ($N = 1$) by gauge fixing \cite{seiler2013gauge}. Thus, to test the performance of the 2R method, we choose to simulate the original problem without fixing the gauge.

Note that this model works for both $\operatorname{U}(1)$ and $\SU{n}$ theories.
For the $\operatorname{U}(1)$ theory, the trace operator reduces to the identity operator.
Let $U_k = \exp(\ii \theta_k)$, $k = 1,\cdots,N$. Then, the expectation of the regularized observable can be represented by
\begin{equation} \label{eq:Ol_s}
\langle O_l \rangle_s = \frac{1}{Z_s} \int_{\mathbb{R}^N} e^{\ii l (\theta_1 + \cdots + \theta_N)} \exp \left(\beta_1 e^{\ii(\theta_1+\cdots+\theta_N)}+\beta_2 e^{-\ii(\theta_1+\cdots+\theta_N)} -\frac{s}{2} (\theta_1^2 + \cdots + \theta_N^2) \right) \mathrm{d}\theta_1 \cdots \mathrm{d}\theta_N,
\end{equation}
where
\begin{displaymath}
Z_s = \int_{\mathbb{R}^N} \exp \left(\beta_1 e^{\ii(\theta_1+\cdots+\theta_N)}+\beta_2 e^{-\ii(\theta_1+\cdots+\theta_N)} - \frac{s}{2} (\theta_1^2 + \cdots + \theta_N^2) \right) \mathrm{d}\theta_1 \cdots \mathrm{d}\theta_N.
\end{displaymath}
By the series expansion of the exponential function, we get that
\begin{displaymath}
Z_s = \int_{\mathbb{R}^N} \sum_{j=-\infty}^{+\infty} \alpha_j e^{\ii j (\theta_1 + \cdots + \theta_N)} \exp\left(-\frac{s}{2} (\theta_1^2 + \cdots + \theta_N^2) \right) \mathrm{d}\theta_1 \cdots \mathrm{d}\theta_N = \left( \frac{2\pi}{s} \right)^{N/2} \sum_{j=-\infty}^{+\infty} \alpha_j \exp\left( -\frac{j^2 N}{2s} \right),
\end{displaymath}
where
\begin{displaymath}
\alpha_j = \sum_{k=\max(0,j)}^{+\infty} \frac{\beta_1^k \beta_2^{k-j}}{k! (k-j)!}, \qquad j \in \mathbb{Z}.
\end{displaymath}
Similarly, the regularized observable $\langle O_l\rangle_s$  given in \eqref{eq:Ol_s} can be expanded as
\begin{equation}
\langle O_l \rangle_s = \frac{1}{Z_s} \left( \frac{2\pi}{s} \right)^{N/2} \sum_{j=-\infty}^{+\infty} \alpha_j \exp\left( -\frac{(j+l)^2 N}{2s} \right).
\end{equation}
Inspired by the analysis above, one can choose the expression
\begin{equation} \label{eq:polyakov-U1}
\langle O_l \rangle_s = \frac{\sum_{k=0}^M a_k e^{-\frac{N k^2}{2s}}}{\sum_{k=0}^M b_k e^{-\frac{N k^2}{2s}}}
\end{equation}
in the regression. For $\SU{n}$ theories, a similar expression will be used, while $s$ will be replaced with $2s$ due to the reason stated in \Cref{sec:sun-reg}.

For the general $\SU{n}$ theory, the Lie derivative of \eqref{eq:su3-polyakov} can be derived as
\begin{equation*}
  D_{a,k}S(\{U\}) = -\ii\beta_1\operatorname{tr}(U_1 \cdots U_{k-1} \lambda_aU_k\cdots U_N) + \ii\beta_2
  \operatorname{tr}(U_N^{-1}\cdots U_k^{-1} \lambda_a U_{k-1}^{-1}\cdots U_1^{-1}).
\end{equation*}
In our simulation, we would like to focus on the $\SU{3}$ model, which has also been studied in \cite{seiler2013gauge, cai2020how}. Following \cite{seiler2013gauge}, we choose $\beta_1$ and $\beta_2$ to be $2.27$ and $2.04$, respectively.
In this case, the exact values of $\langle O_l \rangle$ for $l = 1,2,3$ obtained were
\begin{displaymath}
\langle O_1 \rangle = 2.0957, \quad
\langle O_2 \rangle = 0.3761, \quad
\langle O_3 \rangle = -0.5269,
\end{displaymath}
which have been calculated in \cite{cai2020how}. For all the numerical tests, we chose the fixed time step $\Delta t = 5 \times 10^{-4}$ and simulate the complex Langevin dynamics up to $T = 2$. Then, we drew one sample for every 10 time steps until 20 million samples were collected. These samples were used to estimate the expectations of the observables.

Following \cite{dong2020alternating}, we use the quantity
\begin{equation} \label{eq:DF}
  \Delta F = \frac{1}{N}\sum_{i=1}^N \tr (U_i U_i^{\dagger} - I)
\end{equation}
to measure the extent of excursion away from $[\SU{3}]^N$.
Here $I$ stands for the $3 \times 3$ identity matrix.
Results for some values of $s$ and $N$ are given in \Cref{fig:polyakov-deltaF}. For instance, when $N = 16$, the complex Langevin dynamics quickly diverges if no regularization is applied. Even when $s = 4$, we can still observe a few spikes of the curve at the magnitude of $10^{-4}$, which may indicate convergent but biased results. For $s = 8$ and $16$, the deviation from $\SU{3}$ is well suppressed, so that the regularized observables computed from the samples are likely to be reliable. However, as $N$ increases, the regularization for $s = 8$ may no longer be sufficient. The middle diagram of \Cref{fig:polyakov-deltaF} shows that for $s = 8$, the complex Langevin dynamics fails to converge when $N = 64$ and $128$. Even with $N = 32$,  the few spikes on the curve of $\Delta F$ may imply possibly biased results. A reasonable modification appears to be setting $s$ to be proportional to $N$, as displayed in the right diagram of \Cref{fig:polyakov-deltaF}. This agrees with the analysis for \eqref{eq:polyakov-U1}, in which $s$ also scales with $N$.

\begin{figure}[!ht]
    \centering
    \includegraphics[width=0.32\textwidth]{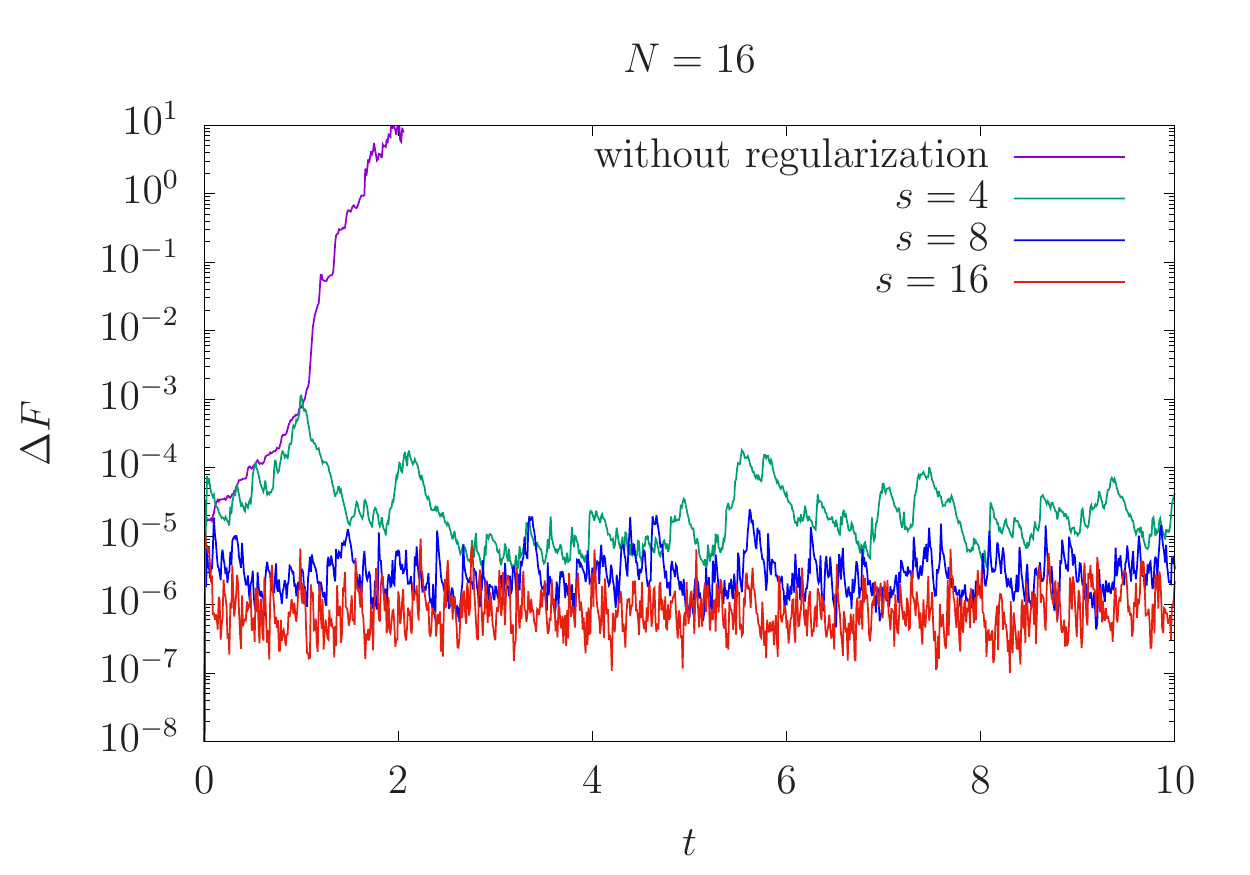}
    \includegraphics[width=0.32\textwidth]{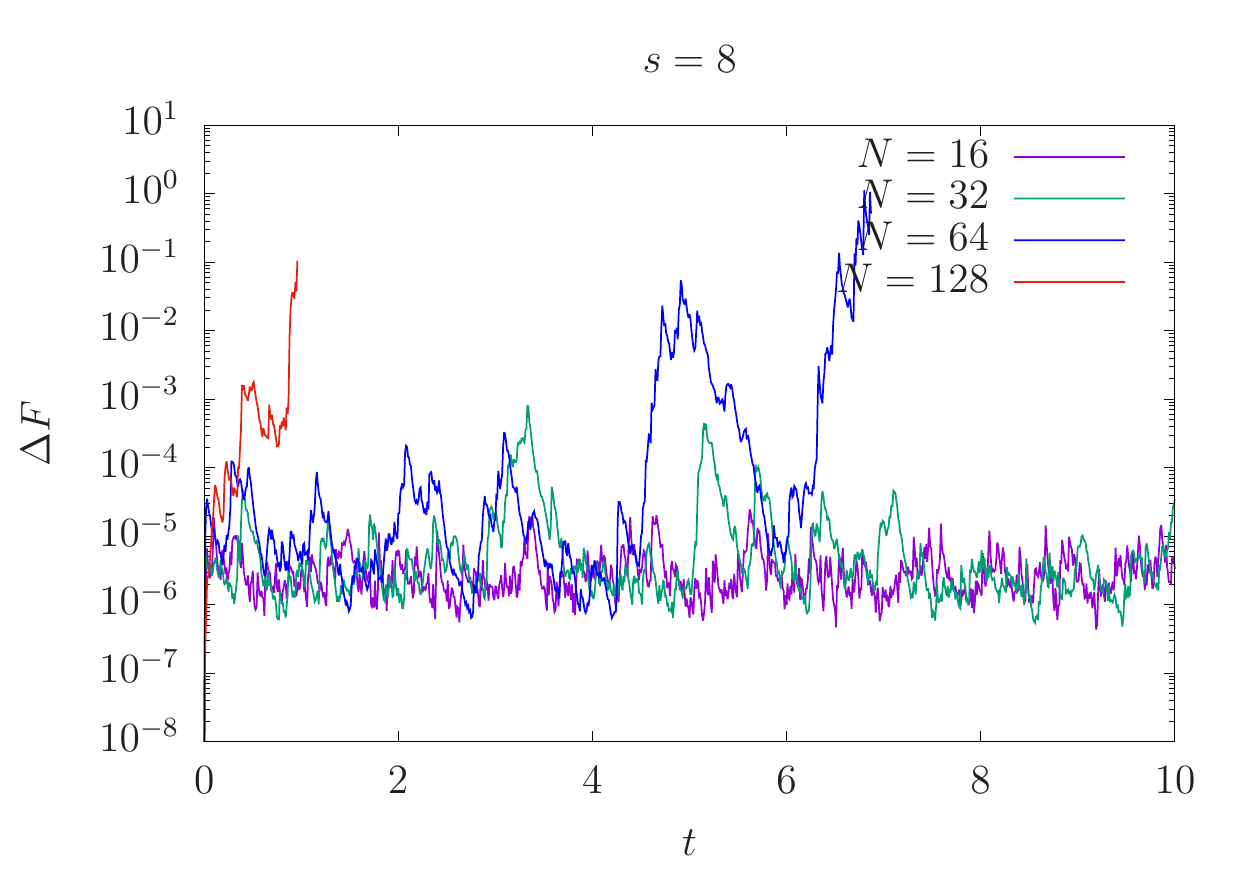}
    \includegraphics[width=0.32\textwidth]{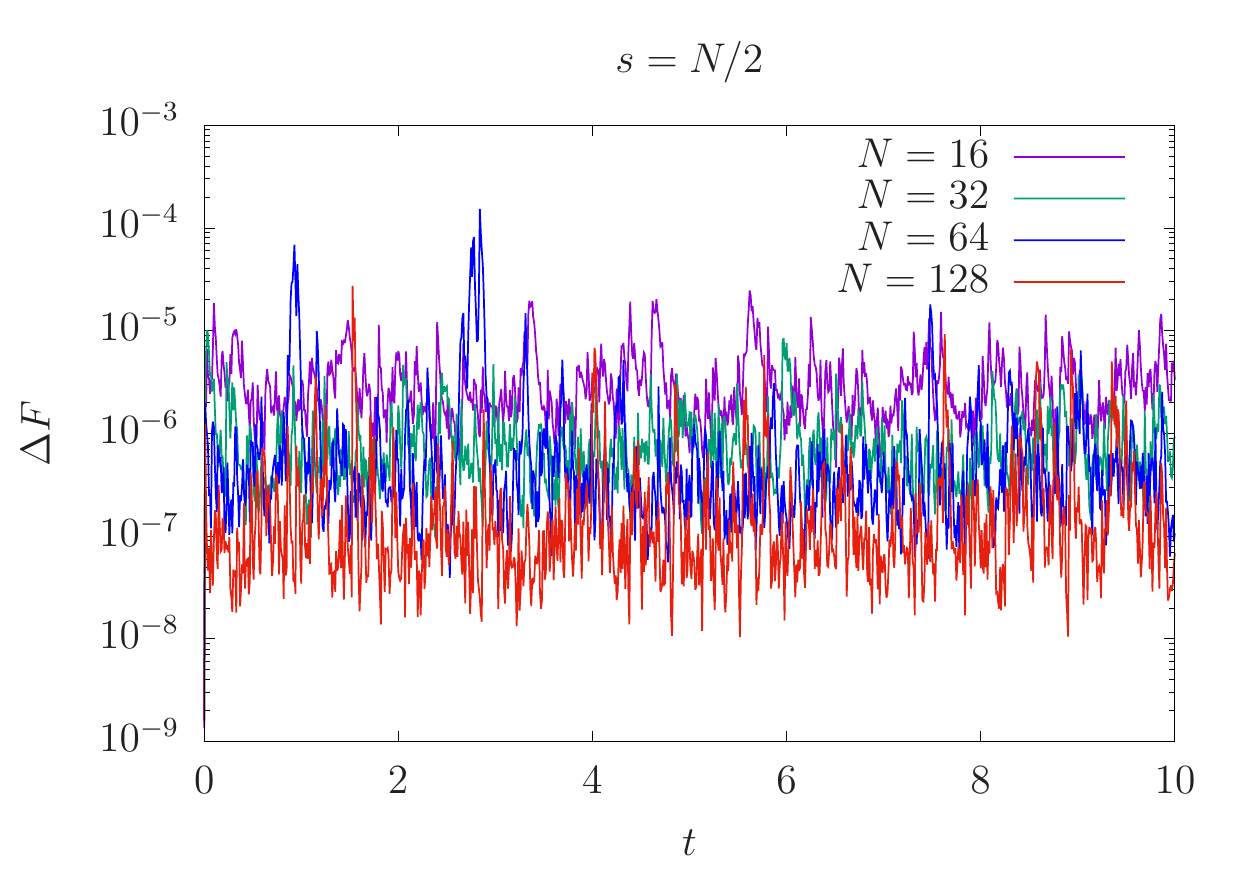}
    \caption{Restriction of regularization method $\Delta F$. Left: different $s$ for $N=16$. Middle: Fixed $s=8$ for $N=16,32,64,128$. Right: $s=N/2$.}
    \label{fig:polyakov-deltaF}
\end{figure}

We now focus on the case $N = 16$. For the $2R$ method, we display the results for $s$ ranging from $4$ to $34$ in \Cref{fig:polyakov-reg}. The dashed horizontal line denotes the reference solution. As previously observed, smaller values of $s$ will lead to unstable complex Langevin dynamics. In these examples, the estimated values of $\langle O_l \rangle_s$ for $s = 4$ and $l = 1,2,3$ are very close to the exact solution. These indicate the existence of an example where the regularization can work well without further corrections. However, the reliability of this method is hard to judge if the exact solution is unknown. 

\begin{figure}[!ht]
    \centering
    \includegraphics[width=0.32\textwidth]{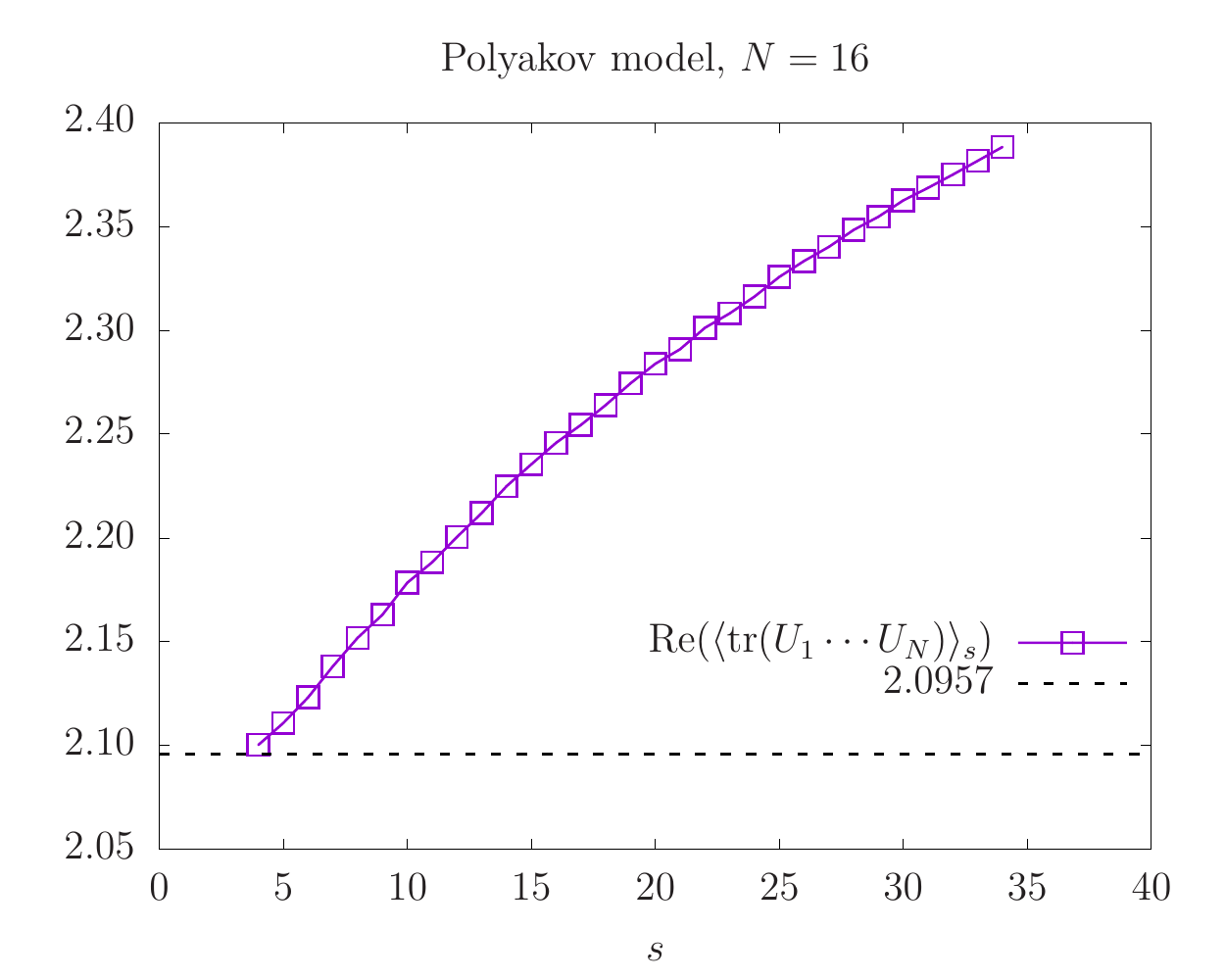}
    \includegraphics[width=0.32\textwidth]{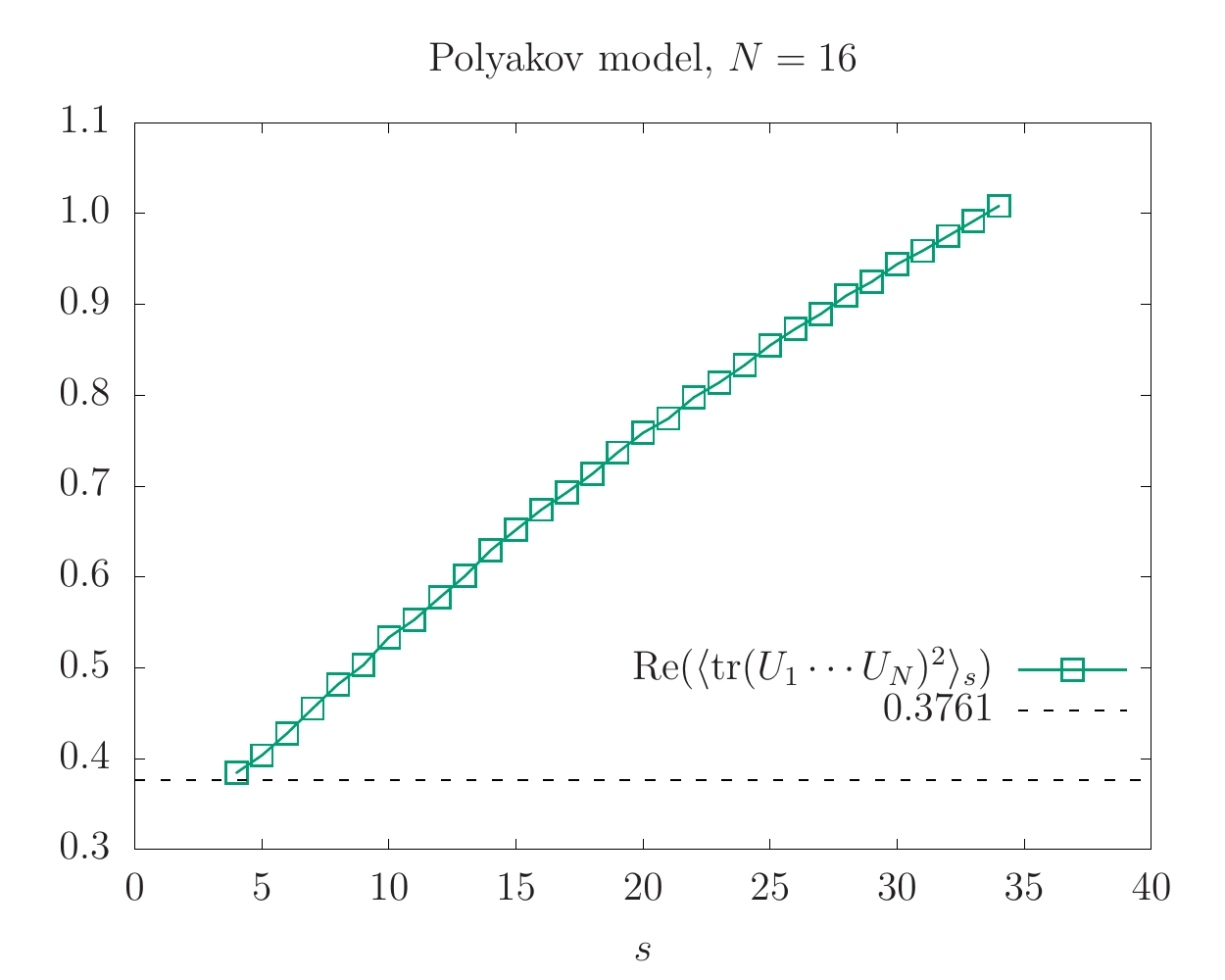}
    \includegraphics[width=0.32\textwidth]{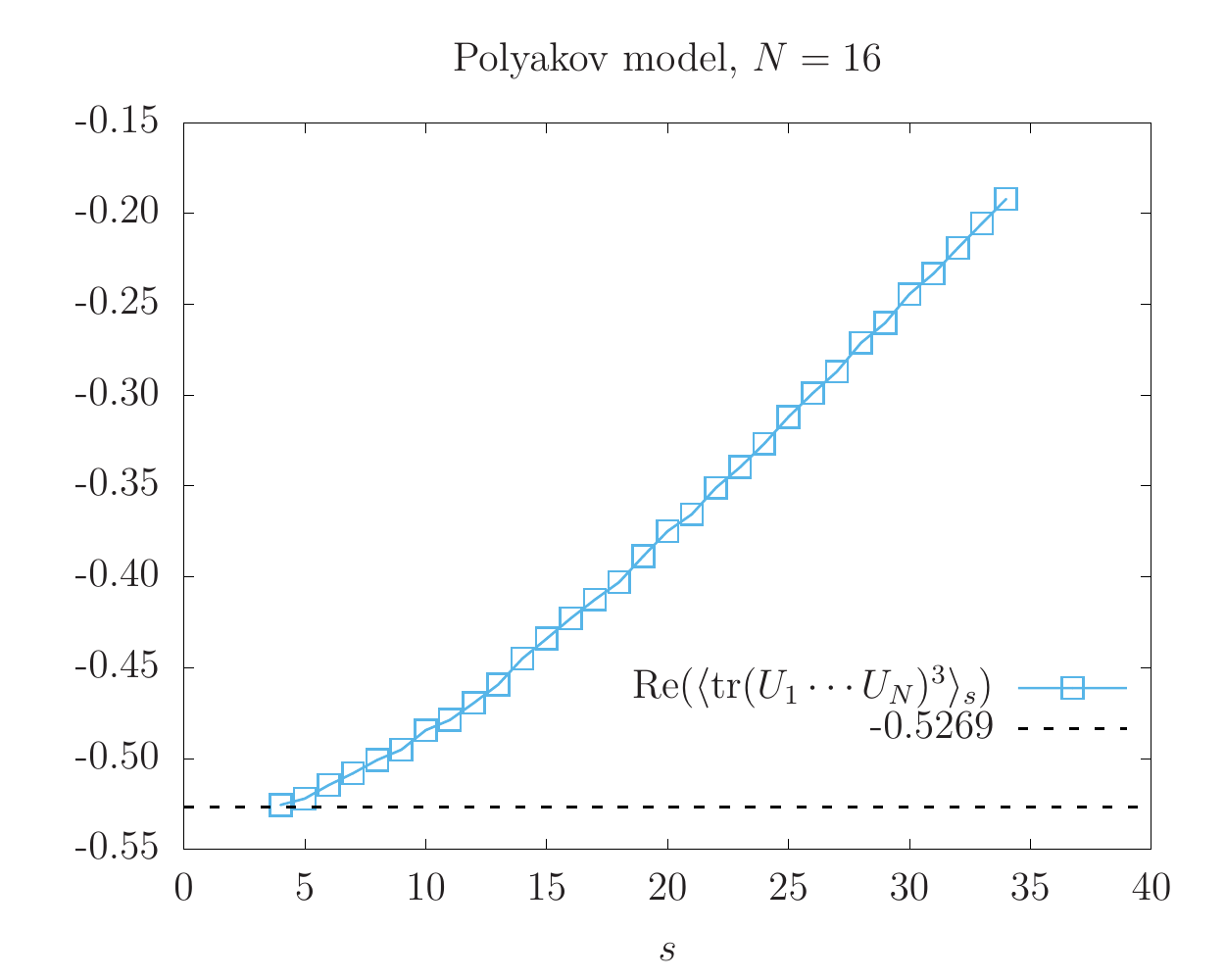}
    \caption{Regularized results for the problem \eqref{eq:su3-polyakov}.}
    \label{fig:polyakov-reg}
\end{figure}

We then consider the $3R$ method and display the results in \Cref{fig:ployakov-rew}. Here we select $s_0 = 8$ and $s_0 = 16$ for which the results are likely to be accurate according to the evolution of $\Delta F$. In all the cases, we observe that when $s$ decreases from $s_0$ to $0$, the numerical results first move toward and then deviate from the exact solutions, which behave similarly to the $\operatorname{U}(1)$ one-link model shown in \Cref{fig:U(1)RewDivFig}. This again confirms that for the high-dimensional integrals, reweighting might fail to provide desired solutions.

\begin{figure}[!ht]
    \centering
    \includegraphics[width=0.32\textwidth]{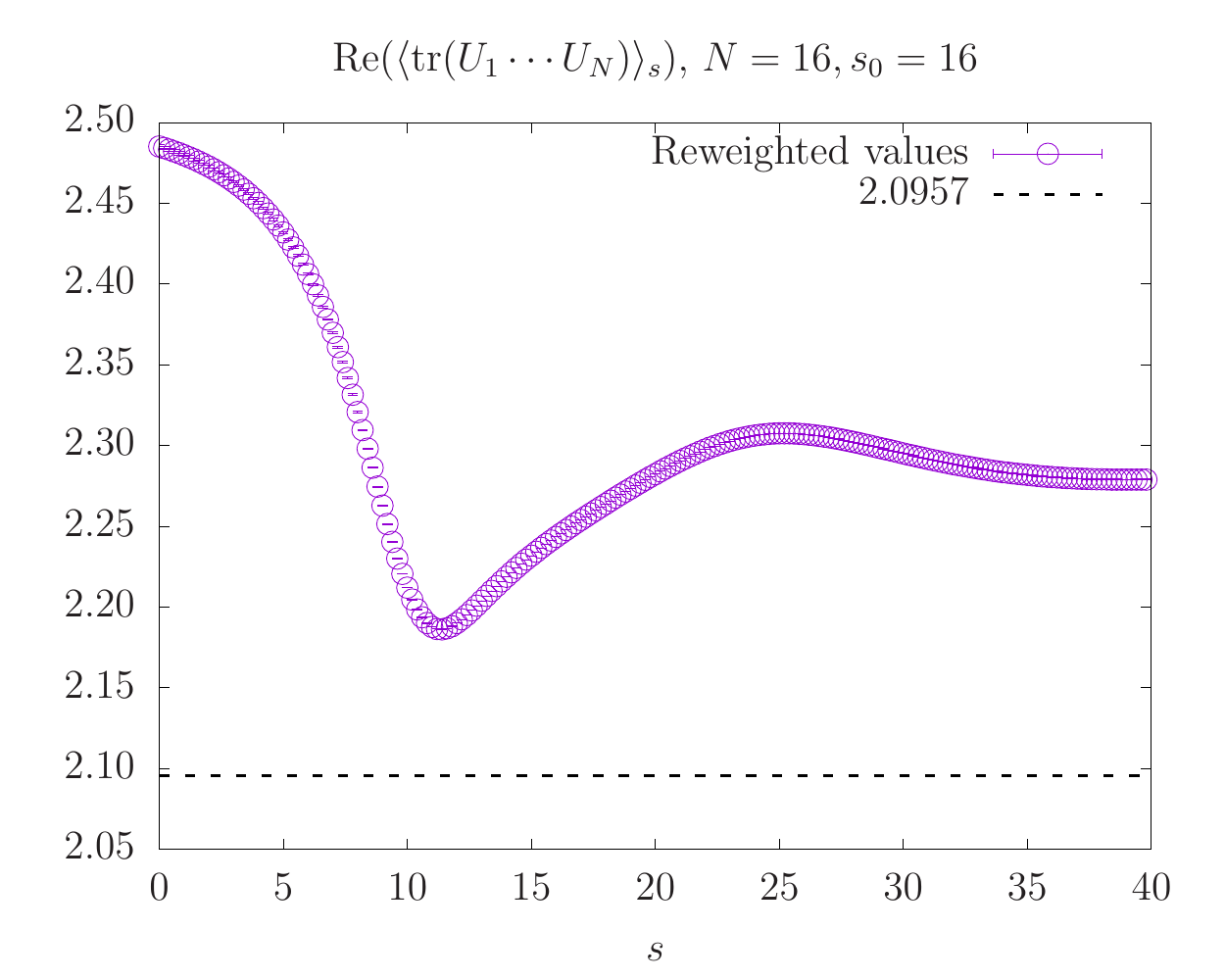}
    \includegraphics[width=0.32\textwidth]{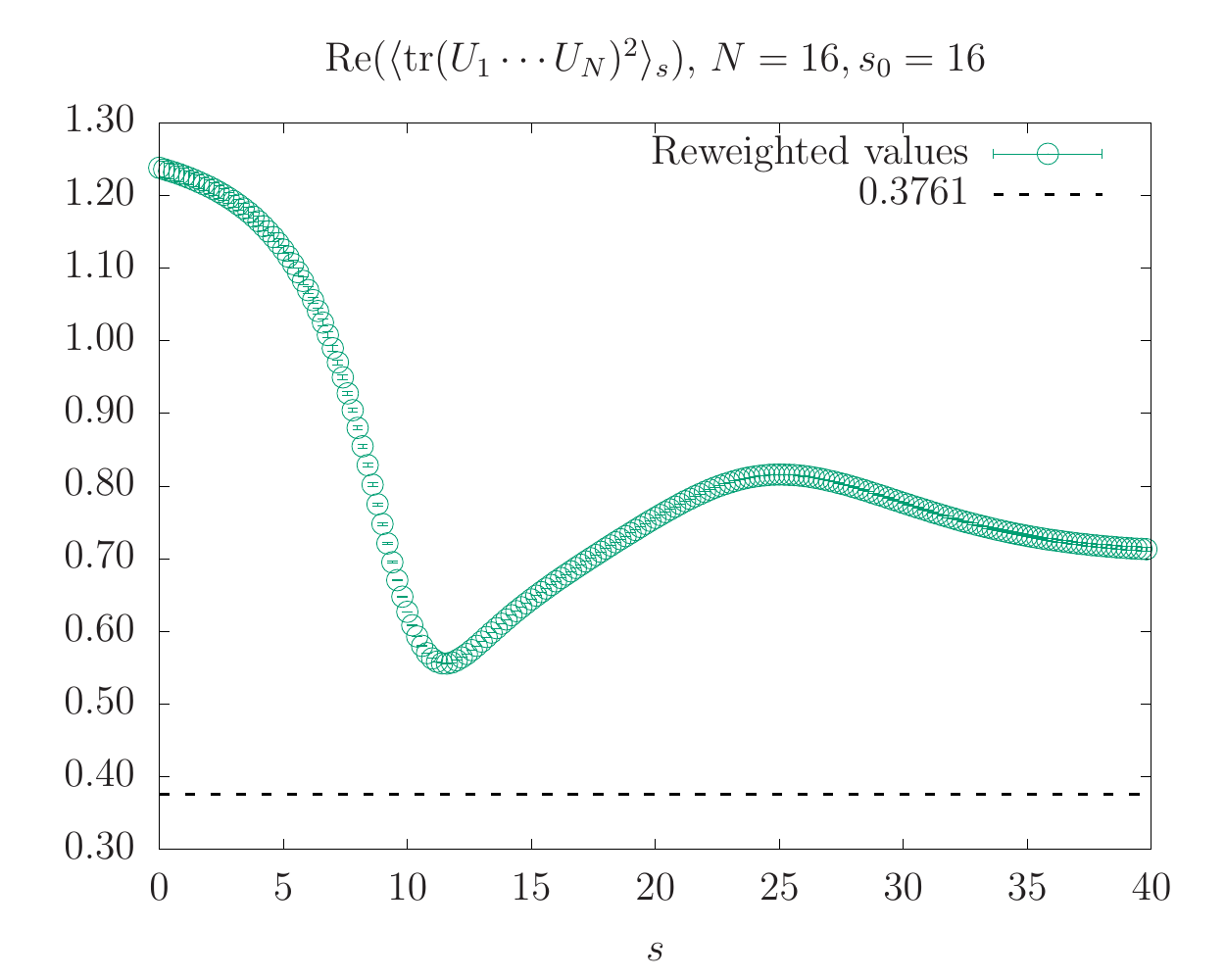}
    \includegraphics[width=0.32\textwidth]{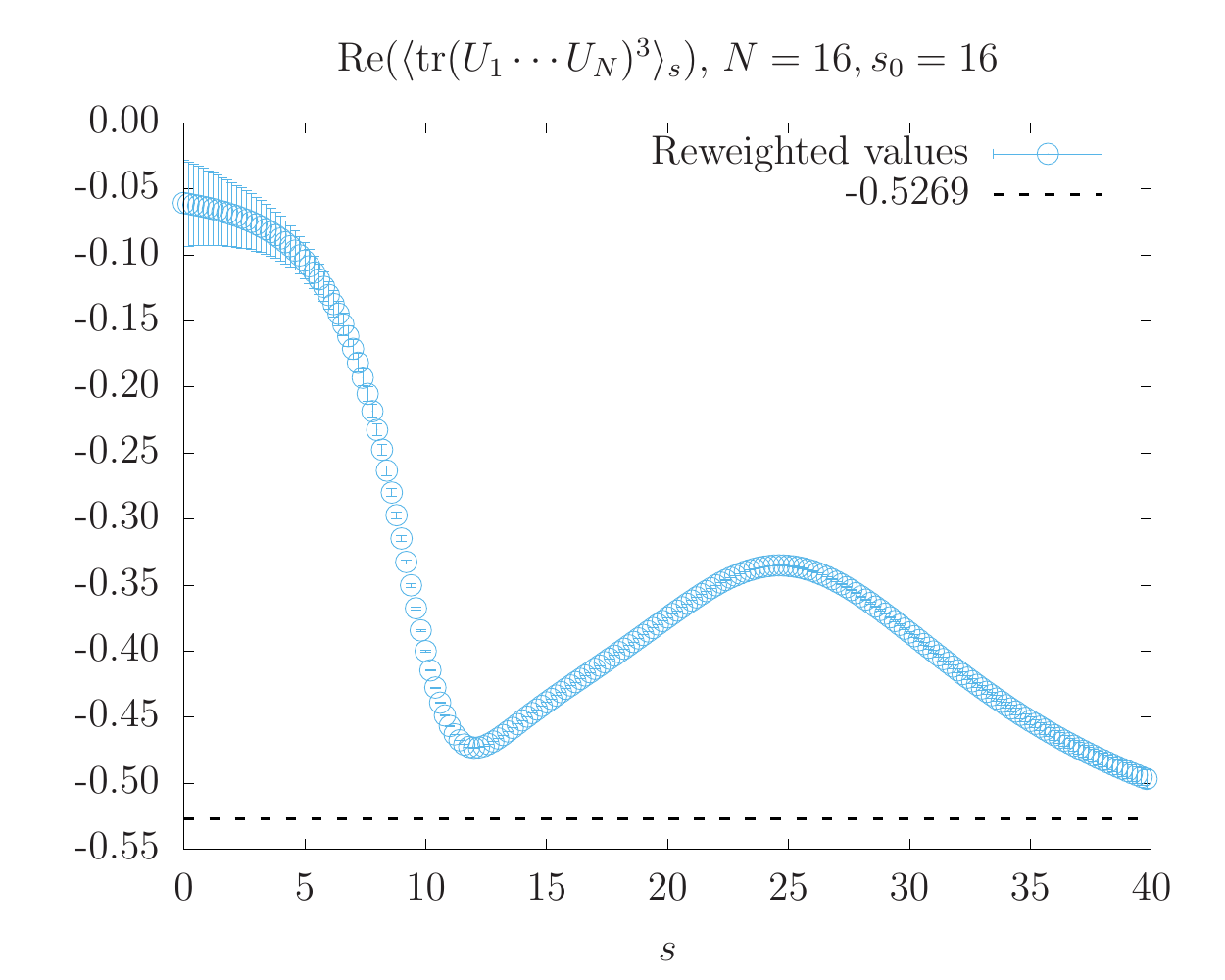}
    \includegraphics[width=0.32\textwidth]{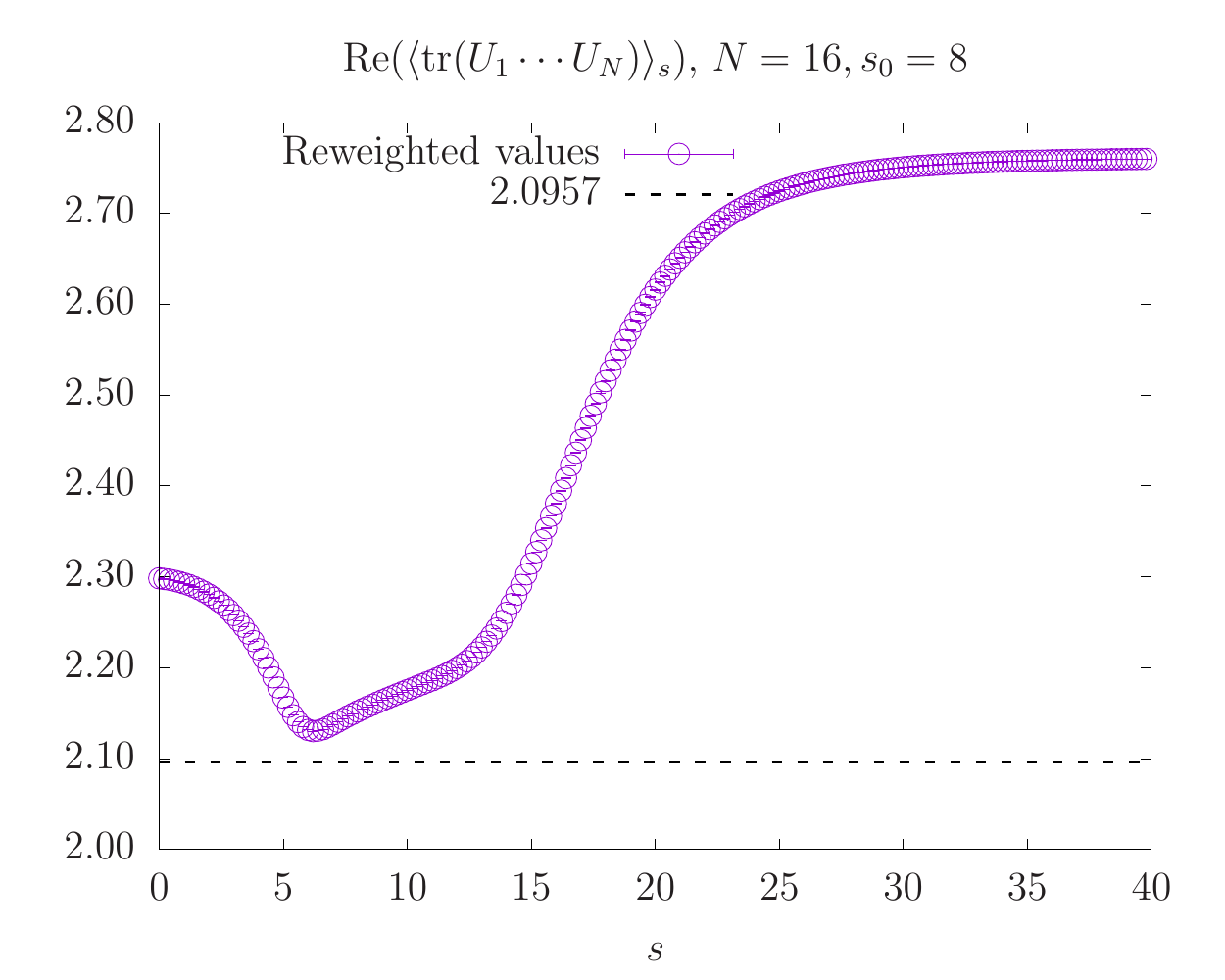}
    \includegraphics[width=0.32\textwidth]{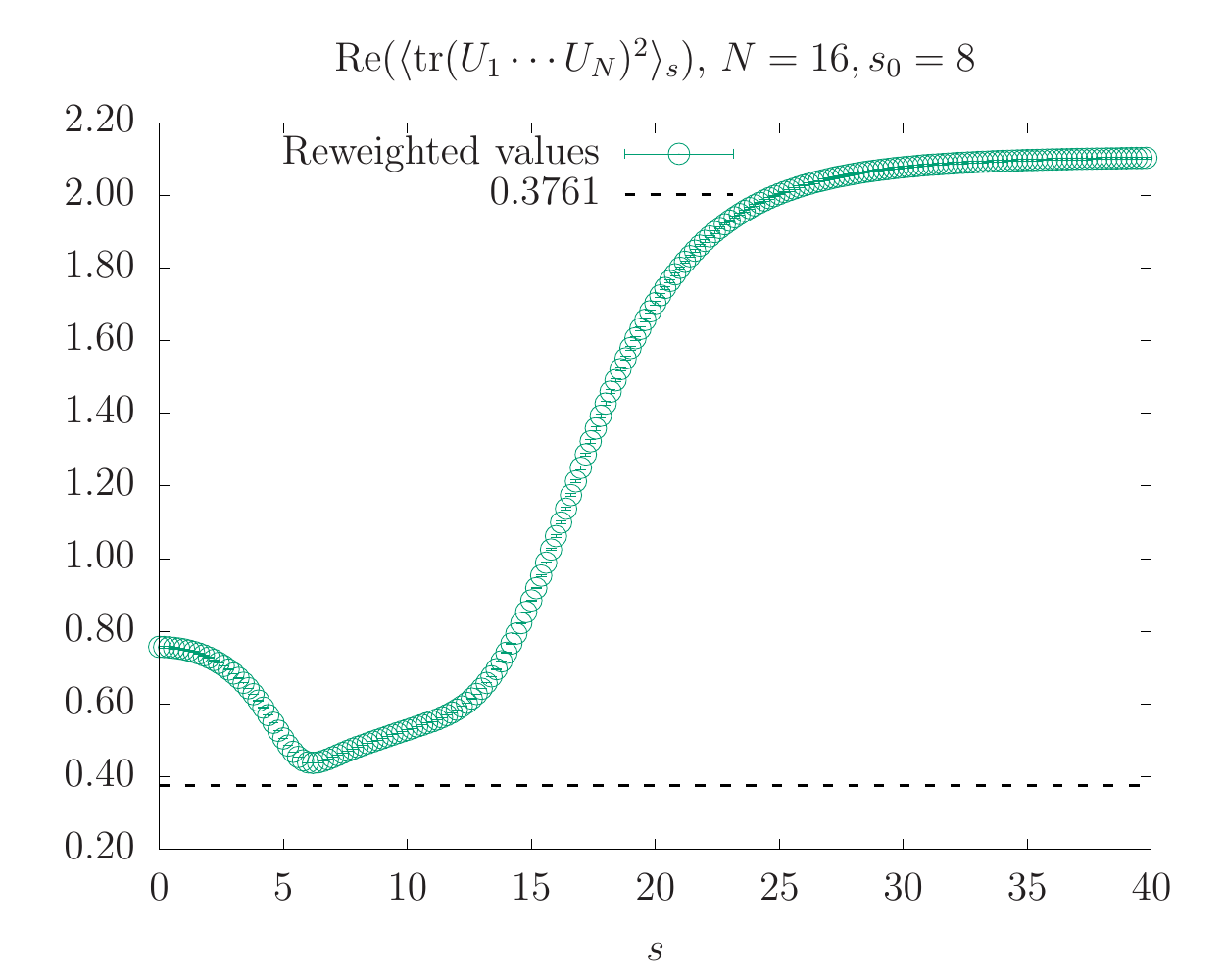}
    \includegraphics[width=0.32\textwidth]{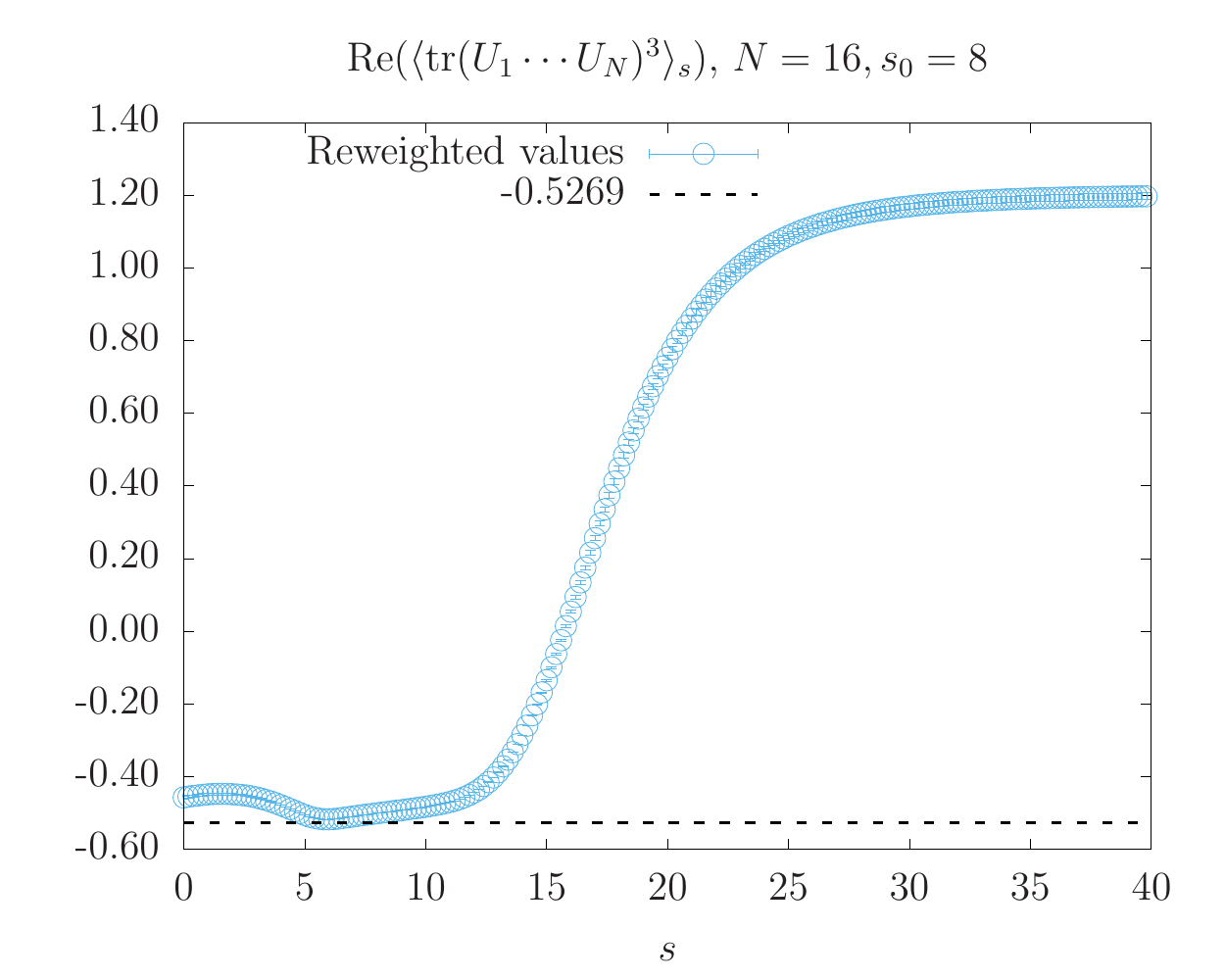}
    \caption{Reweighted regularization of Polyakov model. $s_0=16$ (top) and $s_0=8$ (bottom)}
    \label{fig:ployakov-rew}
\end{figure}

In light of the observations above, we thus consider the use of the $2R$ method to extrapolate the observables. As discussed previously, the proposed regression model is given by
\begin{equation}
\langle O_l \rangle_s = \frac{\sum_{k=0}^M a_k e^{-\frac{N k^2}{4s}}}{1 + \sum_{k=1}^M b_k e^{-\frac{N k^2}{4s}}}.
\end{equation}
We will use data points with $s \in [10, 34]$ in the extrapolation. Note that some of the points closer to $s = 0$ in \Cref{fig:ployakov-rew} were discarded to avoid including points with significant biases. The extrapolations obtained using $M = 3,4,5,6,7$ were plotted in \Cref{fig:polyakov-2R}. For all three observables, the results obtained were generally acceptable, though the values at $s = 0$ were slightly underestimated.
\begin{figure}[!ht]
    \centering
    \includegraphics[width=0.32\textwidth]{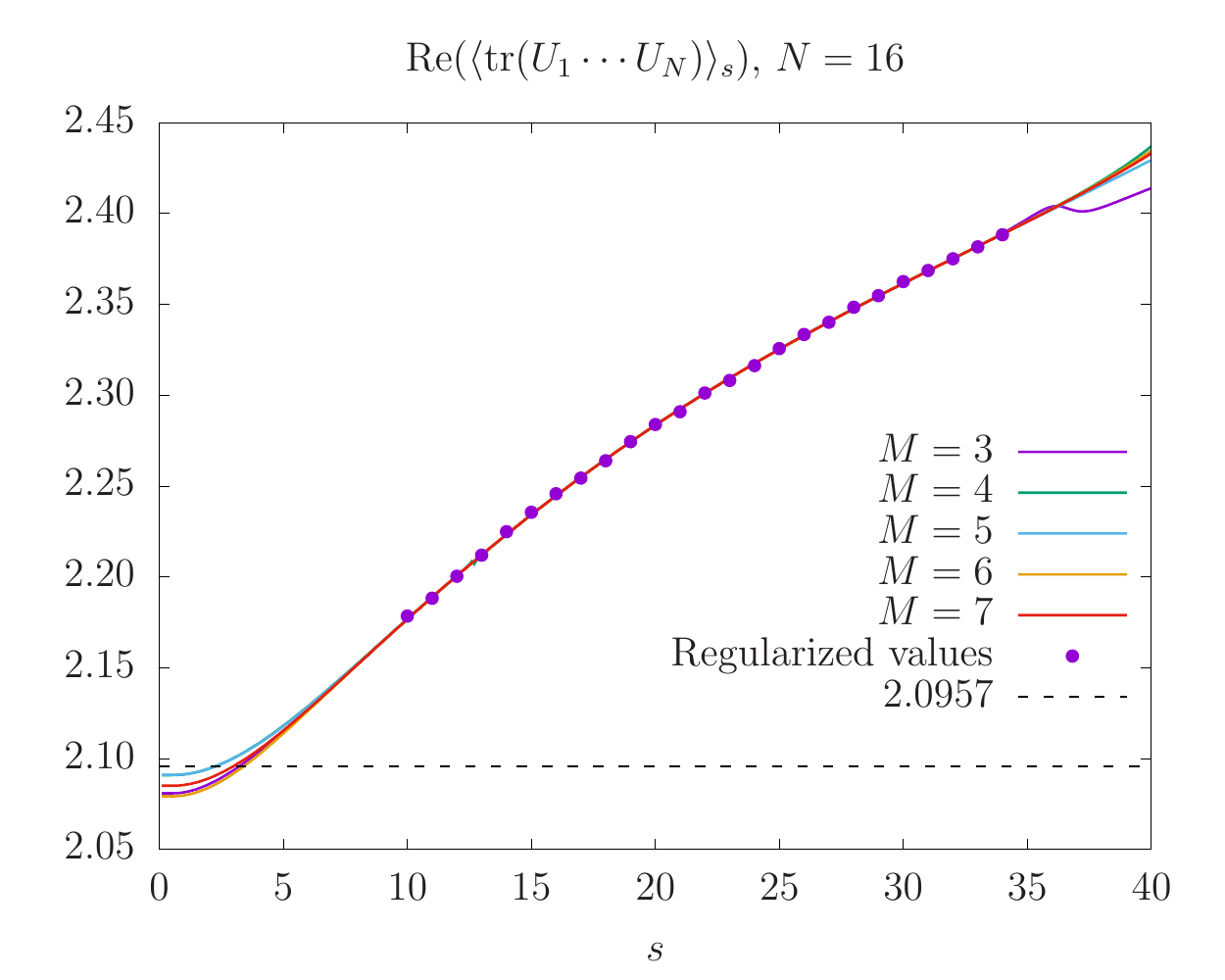}
    \includegraphics[width=0.32\textwidth]{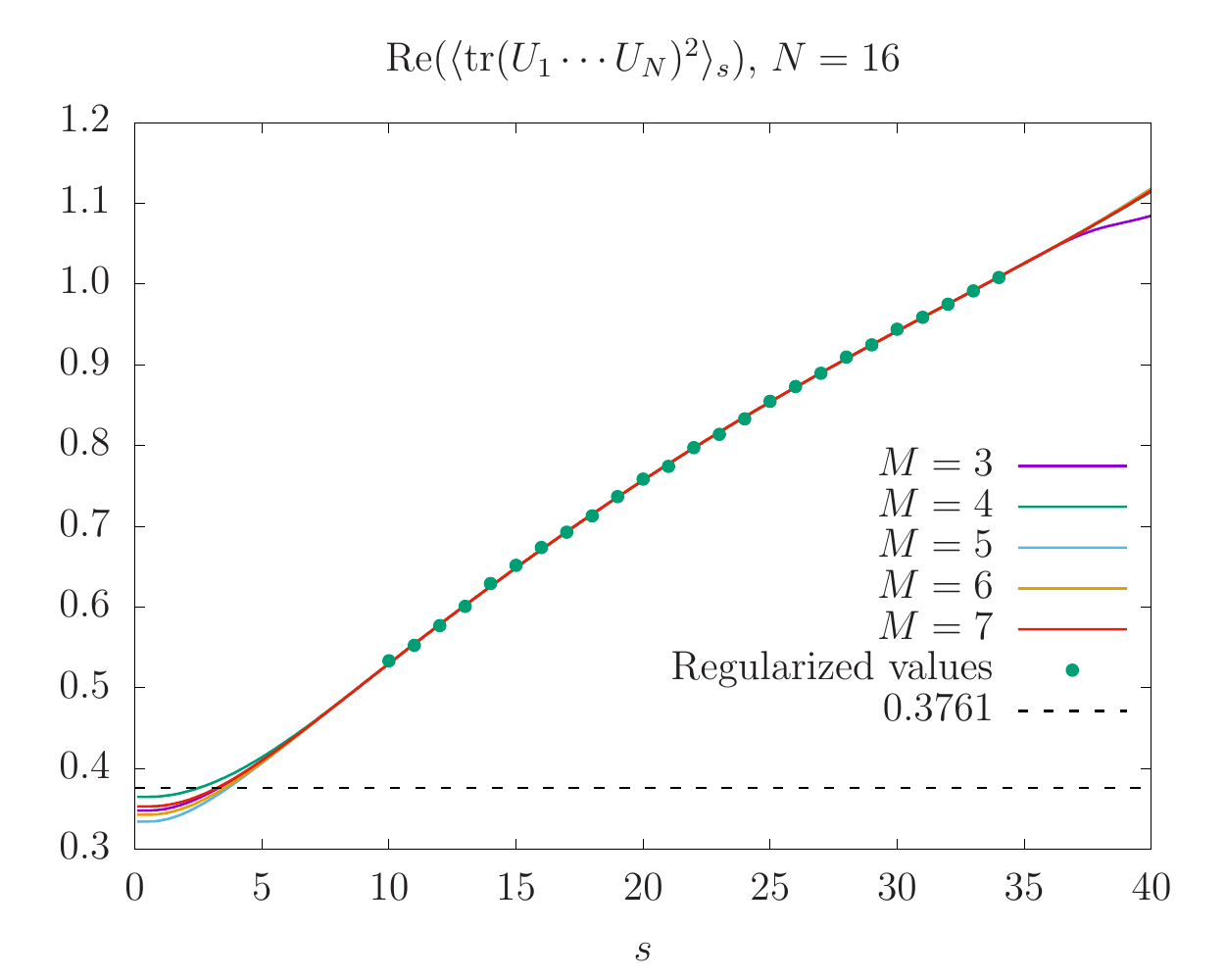}
    \includegraphics[width=0.32\textwidth]{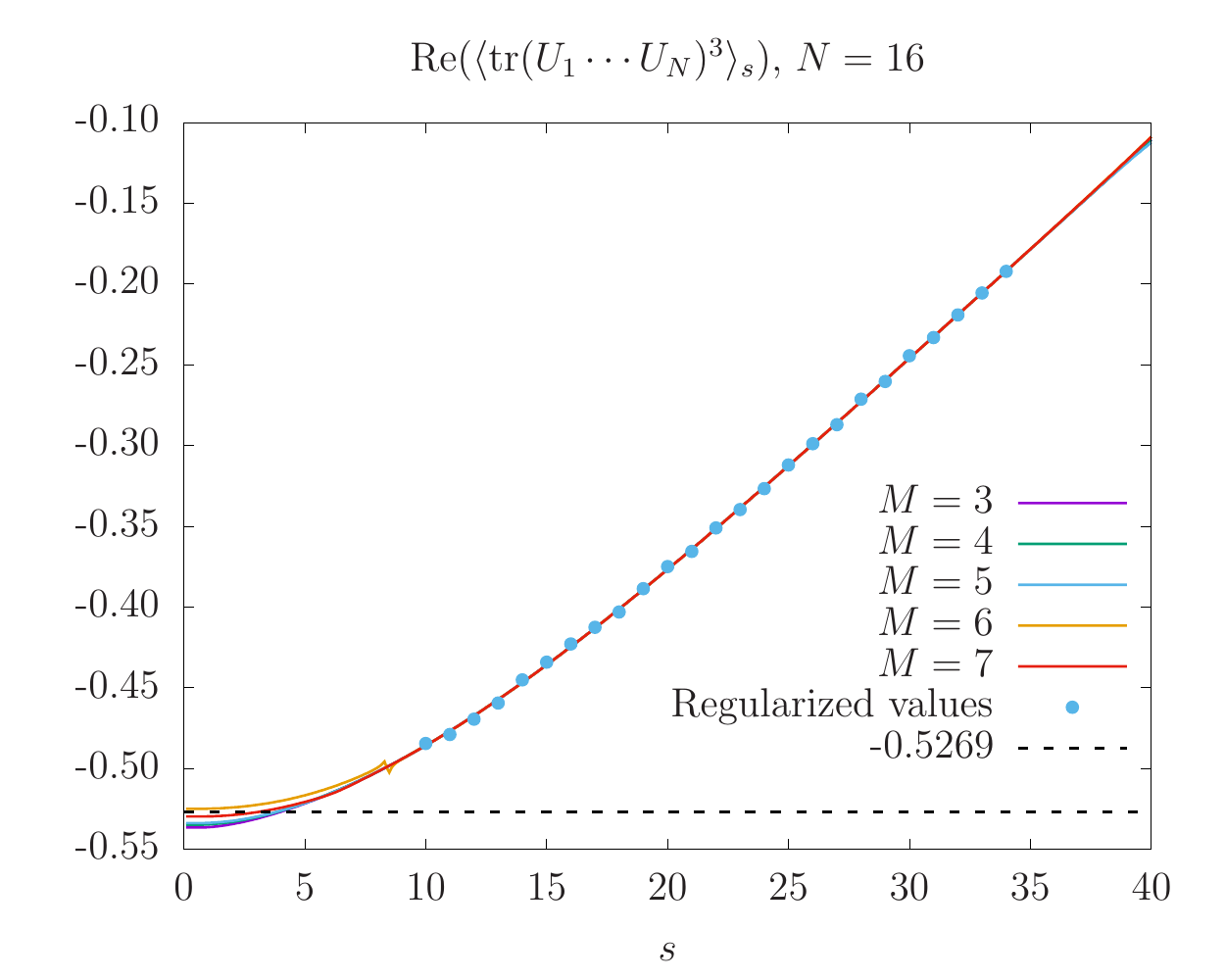}
    \caption{Results of Polyakov model using the $2R$ method}
    \label{fig:polyakov-2R}
\end{figure}

\subsection{Heavy Dense QCD} 
In this section, we consider a more realistic numerical example originating from the heavy dense QCD model at finite potential, studied in several works \cite{aarts2008stochastic, aarts2016qcd, dong2020alternating}.
The field is discretized on a four-dimensional lattice indexed by $x := (x_0, \mathbf{x}) \in X$, where $\mathbf{x} = (x_1, x_2, x_3)$ (see also \eqref{eq:X}).
We consider a vector field $\{U\}$ defined on the lattice, and each $U_{x,\mu}$ is a member of the $\SU{3}$ group.
The action of the heavy dense QCD model is given by
\begin{equation*}
S(\{U\}) = - \ln \det M_{\mu}(\{U\}) + S_B(\{U\}),
\end{equation*}
where $S_B(\{U\})$ is defined by
\begin{equation*}
S_B(\{U\}) = -\beta \sum_{x \in X} \sum_{\nu_1<\nu_2} \left( \frac{1}{6} \left[\tr (U_{x,\nu_1} U_{x+\hat{\nu}_1, \nu_2} U_{x+\hat{\nu}_2, \nu_1}^{-1}  U_{x,\nu_2}^{-1}) + \tr (U_{x,\nu_2} U_{x+\hat{\nu}_2, \nu_1} U_{x+\hat{\nu}_1, \nu_2}^{-1} U_{x,\nu_1}^{-1})\right] - 1 \right),
\end{equation*}
and $\det M_{\mu}$ is the fermionic determinant with $\mu$ being the chemical potential, whose definition is
\begin{equation*}
\det M_{\mu}(\{U\}) = \prod_{\mathbf{x}} \det(I + C \mathcal{P}_\mathbf{x}(\{U\}))^2 \det(I + C' [\mathcal{P}_\mathbf{x}(\{U\})]^{-1})^2
\end{equation*}
where $C = [2 \kappa \exp(\mu)]^{l_0}$, $C' = [2 \kappa \exp(-\mu)]^{l_0}$ with $\kappa$ being the hopping parameter, and
\begin{equation*}
\mathcal{P}_\mathbf{x}(\{U\}) = \prod_{t=1}^{l_0} U_{(t,\mathbf{x}),0}.
\end{equation*}
We refer the readers to \cite{aarts2008stochastic} for the Lie derivatives of this action. 

In our simulation, we worked with two sets of parameters, namely $\mu=2.0, \beta=5.0$ and $\mu=1.4,\beta=5.9$. The value of $\kappa$ was set to be $0.12$. The observable of interest is given by:
\begin{equation*}
P(\{U\}) = \frac{1}{3 l_1 l_2 l_3} \sum_{\mathbf{x}} \tr \mathcal{P}_\mathbf{x}.
\end{equation*}
The parameters of the lattice used in our numerical tests is given by $l_0 = 6$, $l_1 = l_2 = l_3= 8$, and the time step is fixed to be $\Delta t = 2\times 10^{-5}$.

For the heavy dense QCD model, we only considered the $2R$ method. To determine the range of $s$ to be adopted in the regression model, we again study the evolution of the deviation from $\SU{n}$. This is defined in a similar way as \eqref{eq:DF}, given by
\begin{displaymath}
\Delta F = \frac{1}{4l_0 l_1 l_2 l_3} \sum_{\mu=0}^3 \sum_{x \in X} \tr(U_{x,\mu} U_{x,\mu}^{\dagger} - I).
\end{displaymath}
Once again, we observe that larger values of $s$ result in smaller deviations. In both cases, using $s = 1$ reduce the deviation to the magnitude of $10^{-5}$, for which we expect that the results may contain sufficiently small biases and can be used in the regression model.

\begin{figure}[!ht]
\centering
\includegraphics[width=0.45\textwidth]{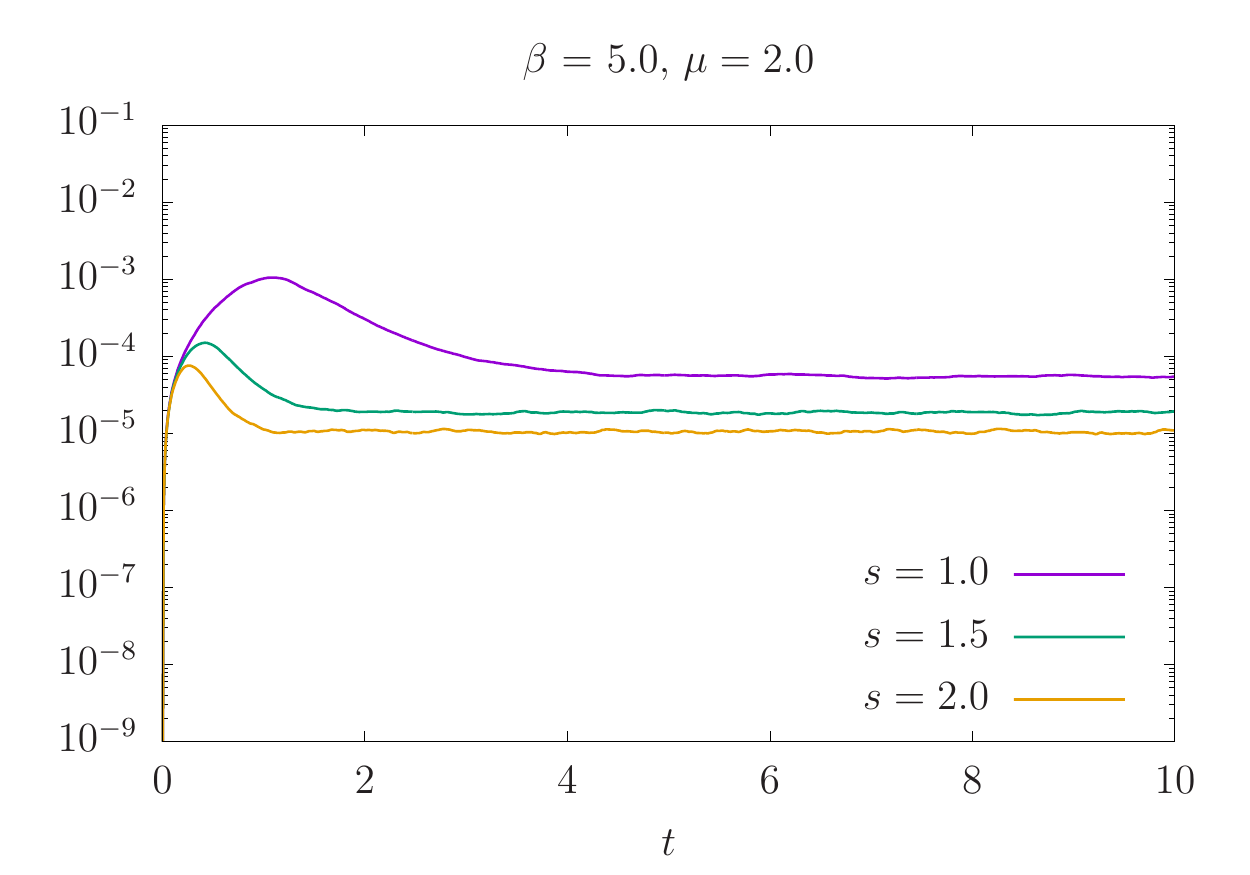}\quad
\includegraphics[width=0.45\textwidth]{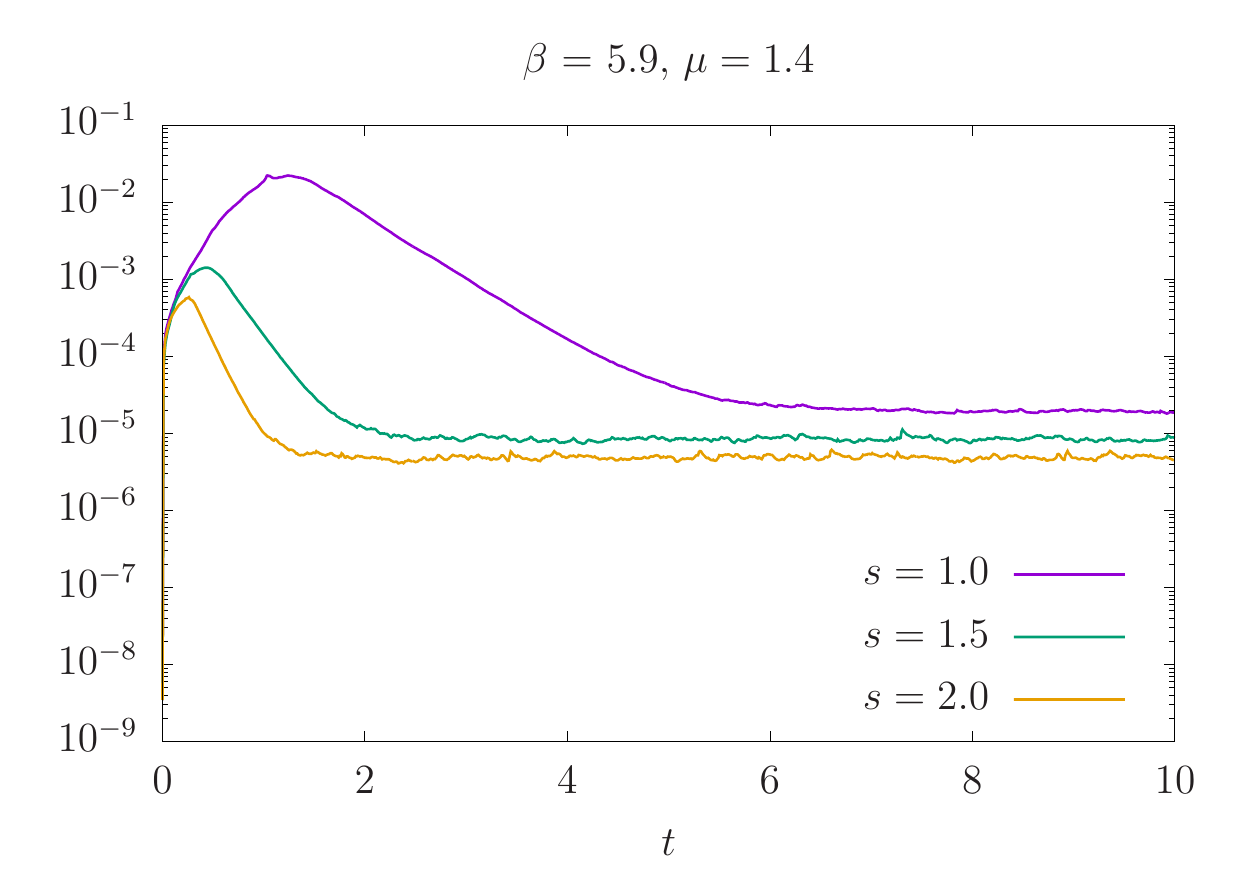}
\caption{The evolution of $\Delta F$. Left: $\beta=5.0$, $\mu = 2.0$; Right:  $\beta=5.9$, $\mu = 1.4$. \label{fig:regdf}}
\end{figure}

To estimate the expectation, we used 6.4 million samples in our tests.
By computing the regularized observable $\langle P \rangle_s$ for $s$ ranging from $1$ to $3.5$, we perform extrapolation based on the expression \eqref{eq:sun-regression} with $b_0 = 1$.
The results are provided in \Cref{fig:su3-reg-extra}. In both cases, the regression results using $M = 6$ and $M = 7$ give similar estimates, whose values at $s = 0$ can be considered as approximations of $\langle P \rangle$.
Note that for $\beta = 5.0$, the complex Langevin method is generally considered to be not applicable in the standard literature. Thus, the result of our approximation obtained by the $2R$ method remains to be validated.
For the case with $\beta = 5.9$, our estimate agrees with that in \cite{seiler2013gauge}, in which gauge cooling is applied to stabilize the dynamics.

\begin{figure}[htbp]
\centering
\includegraphics[width=0.45\textwidth]{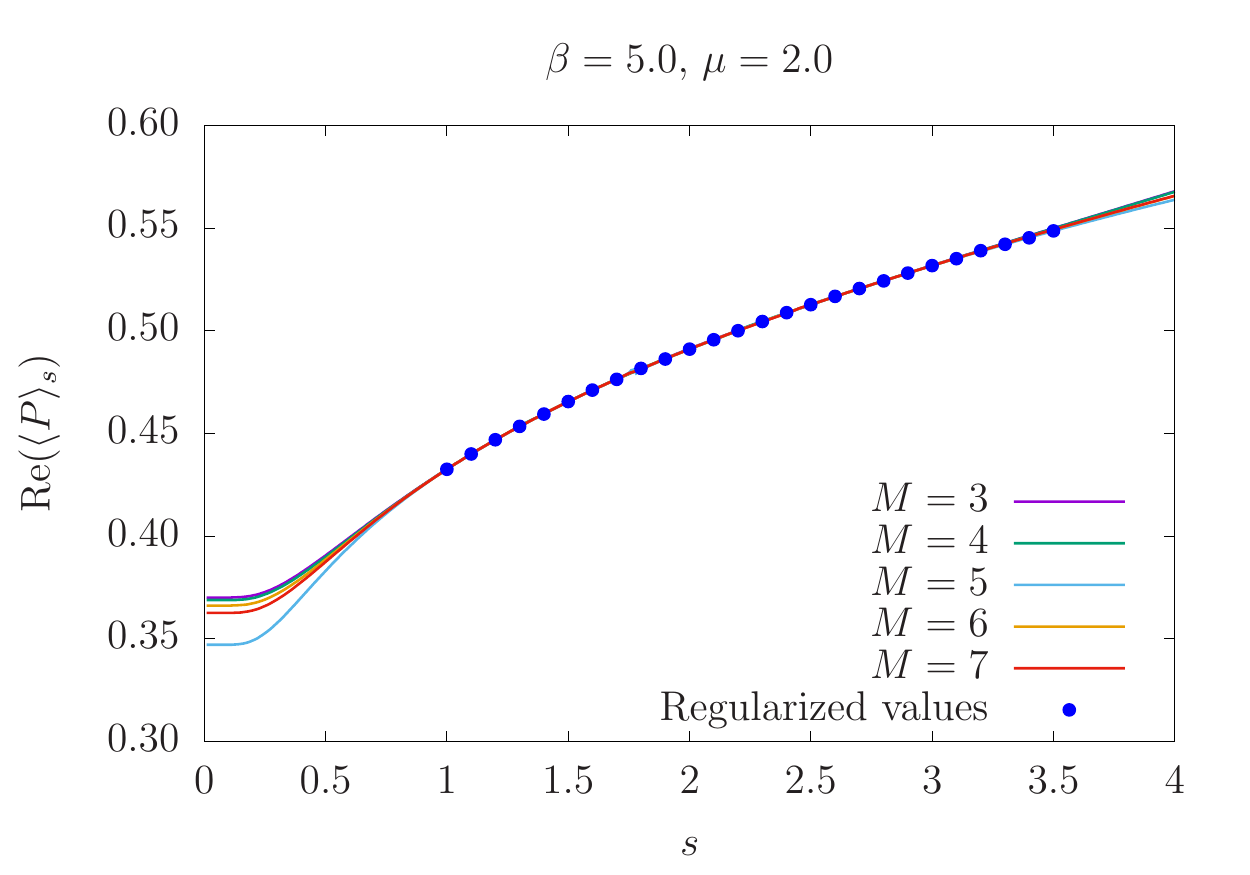}\quad
\includegraphics[width=0.45\textwidth]{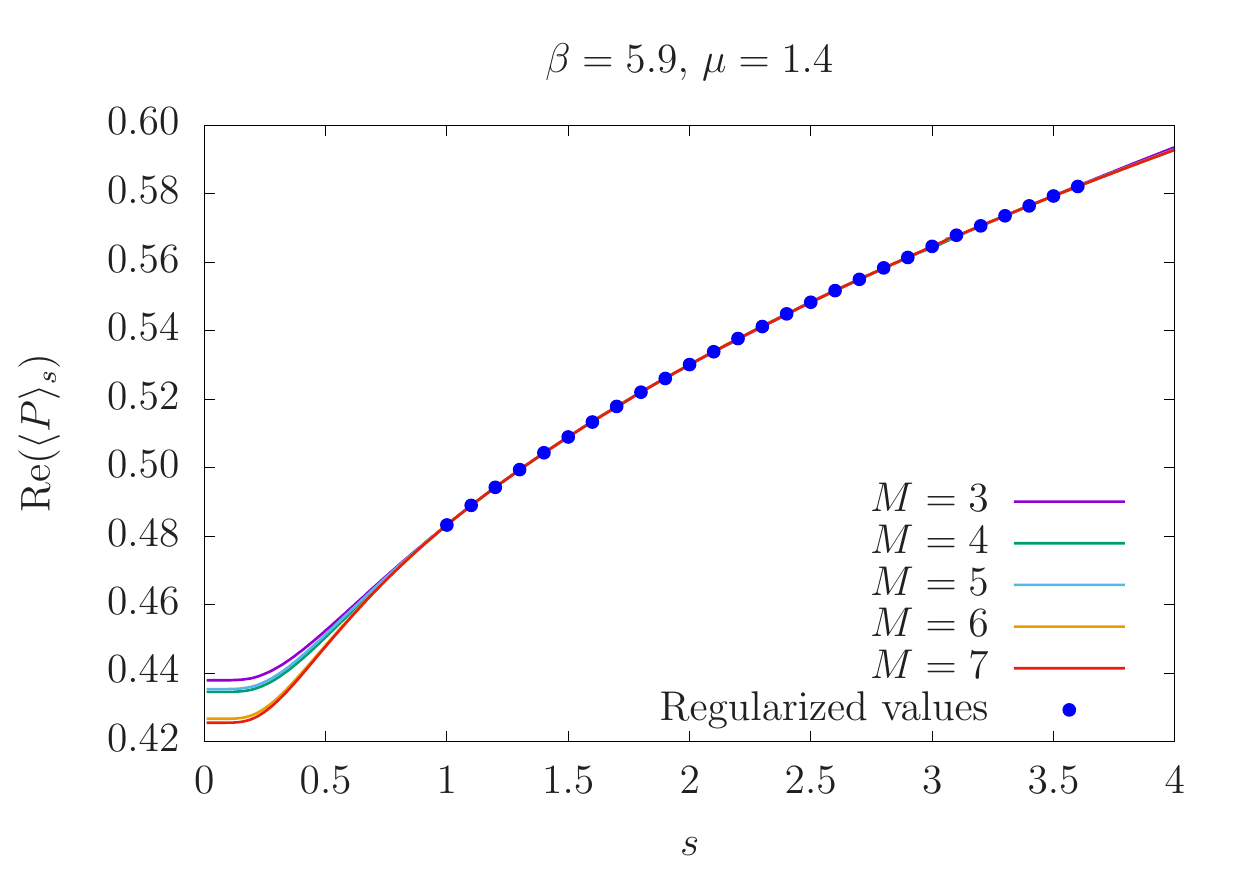}
\caption{Numerical results for the heavy dense QCD model. Left: $\beta=5.0$, $\mu = 2.0$; Right:  $\beta=5.9$, $\mu = 1.4$. \label{fig:su3-reg-extra}}
\end{figure}

\section{Conclusion}\label{sec:con}
We have performed an in-depth study of the regularization of the complex Langevin method.
It is demonstrated that the regularization can produce significant bias in some cases, and we have proposed a few extensions to the regularized complex Langevin method:
\begin{enumerate}
\item The $3R$ method, which performs regression based on the results of the reweighting method proposed in \cite{jacques2017reweight};
\item The $2R$ method, which performs regression based on the regularized results with a number of different parameters.
\end{enumerate}
The computational cost of the $2R$ method is higher than the other two approaches, since multiple complex Langevin dynamics have to be simulated for different regularization constants.
The reweighting method and its $3R$ extension works well in the one-link toy model.
However, it is observed that in the high-dimensional case, the results of the reweighting were reliable only for a very small range of parameters.
The best results are obtained from the $2R$ method, which has successfully simulated one example in lattice QCD for which the original complex Langevin method was known to be inapplicable. We expect that this approach can also be applied to the actions with poles, which will be studied in our future works.

\bibliographystyle{amsplain}
\bibliography{CL}

\providecommand{\bysame}{\leavevmode\hbox to3em{\hrulefill}\thinspace}
\providecommand{\MR}{\relax\ifhmode\unskip\space\fi MR }
% \MRhref is called by the amsart/book/proc definition of \MR.
\providecommand{\MRhref}[2]{%
  \href{http://www.ams.org/mathscinet-getitem?mr=#1}{#2}
}
\providecommand{\href}[2]{#2}
\begin{thebibliography}{10}

\bibitem{aarts2016qcd}
Gert Aarts, Felipe Attanasio, Benjamin J{\"a}ger, and D{\'e}nes Sexty,
  \emph{{The QCD phase diagram in the limit of heavy quarks using complex
  Langevin dynamics}}, Journal of High Energy Physics \textbf{2016} (2016),
  no.~9, 87.

\bibitem{aarts2013localised}
Gert Aarts, Pietro Giudice, and Erhard Seiler, \emph{Localised distributions
  and criteria for correctness in complex {L}angevin dynamics}, Annals of
  Physics \textbf{337} (2013), 238--260.

\bibitem{aarts2010convergence}
Gert Aarts and Frank~A James, \emph{{On the convergence of complex Langevin
  dynamics: the three-dimensional XY model at finite chemical potential}},
  Journal of High Energy Physics \textbf{2010} (2010), no.~8, 20.

\bibitem{aarts2010adaptive}
Gert Aarts, Frank~A James, Erhard Seiler, and Ion-Olimpiu Stamatescu,
  \emph{{Adaptive stepsize and instabilities in complex Langevin dynamics}},
  Physics Letters B \textbf{687} (2010), no.~2-3, 154--159.

\bibitem{aarts2014simulating}
Gert Aarts, Erhard Seiler, D\'enes Sexty, and Ion-Olimpiu Stamatescu,
  \emph{Simulating {QCD} at nonzero baryon density to all orders in the hopping
  parameter expansion}, Phys. Rev. D \textbf{90} (2014), no.~11.

\bibitem{aarts2010complex}
Gert Aarts, Erhard Seiler, and Ion-Olimpiu Stamatescu, \emph{Complex {L}angevin
  method: When can it be trusted?}, Physical Review D \textbf{81} (2010),
  no.~5.

\bibitem{aarts2008stochastic}
Gert Aarts and Ion-Olimpiu Stamatescu, \emph{Stochastic quantization at finite
  chemical potential}, Journal of High Energy Physics \textbf{2008} (2008),
  no.~09, 018.

\bibitem{alexandru2016monte}
Andrei Alexandru, G{\"o}k{\c{c}}e Ba{\c{s}}ar, Paulo~F Bedaque, Sohan Vartak,
  and Neill~C Warrington, \emph{{Monte Carlo study of real time dynamics on the
  lattice}}, Physical review letters \textbf{117} (2016), no.~8, 081602.

\bibitem{ambjorn1986complex}
J.~Ambj{\o}rn, M.~Flensburg, and C.~Peterson, \emph{The complex {L}angevin
  equation and {M}onte {C}arlo simulations of actions with static charges},
  Nuclear Physics B \textbf{275} (1986), no.~3, 375--397.

\bibitem{anagnostopoulos2018complex}
Konstantinos~N. Anagnostopoulos, Takehiro Azuma, Yuta Ito, Jun Nishimura, and
  Stratos~Kovalkov Papadoudis, \emph{Complex {L}angevin analysis of the
  spontaneous symmetry breaking in dimensionally reduced super {Y}ang-{M}ills
  models}, J. High Energ. Phys. \textbf{2018} (2018).

\bibitem{InfVarBootstrap}
K.~B. Athreya, \emph{Bootstrap of the mean in the infinite variance case}, The
  Annals of Statistics \textbf{15} (1987), no.~2, 724--731.

\bibitem{attanasio2017}
F.~Attanasio and B.~Jaeger, \emph{Testing dynamic stabilization in complex
  {L}angevin simulations}, Proceedings of 34th annual International Symposium
  on Lattice Field Theory {\textemdash} PoS(LATTICE2016), vol. 256, 2017,
  p.~53.

\bibitem{Attanasio2019dynamical}
F.~Attanasio and B.~J{\"a}ger, \emph{Dynamical stabilisation of complex
  {L}angevin simulations of {QCD}}, Eur. Phys. J. C \textbf{79} (2019), 16.

\bibitem{attanasio2020complex}
Felipe Attanasio, Benjamin J{\"a}ger, and Felix~PG Ziegler, \emph{{Complex
  Langevin simulations and the QCD phase diagram: recent developments}}, The
  European Physical Journal A \textbf{56} (2020), no.~10, 1--7.

\bibitem{banerjee2010worm}
Debasish Banerjee and Shailesh Chandrasekharan, \emph{Finite size effects in
  the presence of a chemical potential: A study in the classical nonlinear
  $o(2)$ sigma model}, Phys. Rev. D \textbf{81} (2010), 125007.

\bibitem{berger2020complex}
Casey~E Berger, Lukas Rammelm{\"u}ller, Andrew~C Loheac, Florian Ehmann, Jens
  Braun, and Joaqu{\'\i}n~E Drut, \emph{{Complex Langevin and other approaches
  to the sign problem in quantum many-body physics}}, Physics Reports (2020).

\bibitem{berges2007lattice}
J~Berges, Sz~Borsanyi, D~Sexty, and I-O Stamatescu, \emph{Lattice simulations
  of real-time quantum fields}, Physical Review D \textbf{75} (2007), no.~4,
  045007.

\bibitem{berges2008real}
J.~Berges and D.~Sexty, \emph{Real-time gauge theory simulations from
  stochastic quantization with optimized updating}, Nuclear Physics B
  \textbf{799} (2008), no.~3, 306--329.

\bibitem{jacques2017reweight}
Jacques Bloch, \emph{Reweighting complex langevin trajectories}, Phys. Rev. D
  \textbf{95} (2017), 054509.

\bibitem{bloch2017reweighted}
Jacques Bloch, Johannes Meisinger, and Sebastian Schmalzbauer,
  \emph{{Reweighted complex {L}angevin and its application to two-dimensional
  {QCD}}}, Proceedings of 34th annual International Symposium on Lattice Field
  Theory {\textemdash} PoS(LATTICE2016), vol. 256, 2017, p.~046.

\bibitem{cai2020how}
Zhenning Cai, Yana Di, and Xiaoyu Dong, \emph{How does gauge cooling stabilize
  complex {L}angevin?}, Communications in Computational Physics \textbf{27}
  (2020), no.~5, 1344--1377.

\bibitem{cai2021validity}
Zhenning Cai, Xiaoyu Dong, and Yang Kuang, \emph{On the validity of complex
  langevin method for path integral computations}, SIAM Journal on Scientific
  Computing \textbf{43} (2021), no.~1, A685--A719.

\bibitem{dong2020alternating}
Xiaoyu Dong, Zhenning Cai, and Yana Di, \emph{Alternating descent method for
  gauge cooling of complex {L}angevin simulations}, Physical Review D
  \textbf{102} (2020), no.~5, 054518.

\bibitem{gausterer1998complex}
H.~Gausterer, \emph{Complex {L}angevin: A numerical method?}, Nucl. Phys. A
  \textbf{642} (1998), c239--c250.

\bibitem{higuchi1987symmetric}
Atsushi Higuchi, \emph{{Symmetric tensor spherical harmonics on the N-sphere
  and their application to the de Sitter group SO (N, 1)}}, Journal of
  mathematical physics \textbf{28} (1987), no.~7, 1553--1566.

\bibitem{hirasawa2020complex}
M.~Hirasawa, A.~Matsumoto, J.~Nishimura, and A.~Yosprakob, \emph{Complex
  {L}angevin analysis of 2{D} {$U(1)$} gauge theory on a torus with a
  {$\theta$} term}, Journal of High Energy Physics \textbf{2020} (2020), 23.

\bibitem{klauder1983langevin}
J.~Klauder, \emph{A {L}angevin approach to fermion and quantum spin correlation
  functions}, J. Phys. A: Math. Gen. \textbf{16} (1983), L317--319.

\bibitem{klauder1983stochastic}
John~R Klauder, \emph{Stochastic quantization}, Recent developments in
  high-energy physics, Springer, 1983, pp.~251--281.

\bibitem{Kogut2019applying}
J.~B. Kogut and D.~K. Sinclair, \emph{Applying complex {L}angevin simulations
  to lattice {QCD} at finite density}, Phys. Rev. D \textbf{100} (2019), no.~5.

\bibitem{loh1990sign}
EY~Loh~Jr, JE~Gubernatis, RT~Scalettar, SR~White, DJ~Scalapino, and RL~Sugar,
  \emph{Sign problem in the numerical simulation of many-electron systems},
  Phys. Rev. B \textbf{41} (1990), no.~13, 9301--9307.

\bibitem{loheac2017third}
Andrew~C Loheac and Joaqu{\'\i}n~E Drut, \emph{{Third-order perturbative
  lattice and complex Langevin analyses of the finite-temperature equation of
  state of nonrelativistic fermions in one dimension}}, Physical Review D
  \textbf{95} (2017), no.~9, 094502.

\bibitem{muroya2003lattice}
Shin Muroya, Atsushi Nakamura, Chiho Nonaka, and Tetsuya Takaishi,
  \emph{{Lattice QCD at finite density: an introductory review}}, Progress of
  theoretical physics \textbf{110} (2003), no.~4, 615--668.

\bibitem{nagata2016argument}
Keitaro Nagata, Jun Nishimura, and Shinji Shimasaki, \emph{Argument for
  justification of the complex {L}angevin method and the condition for correct
  convergence}, Physical Review D \textbf{94} (2016), no.~11.

\bibitem{nagata2018complex}
\bysame, \emph{{Complex Langevin calculations in finite density QCD at large
  $\mu$/T with the deformation technique}}, Physical Review D \textbf{98}
  (2018), no.~11, 114513.

\bibitem{nishimura2017combining}
J.~Nishimura and S.~Shimasaki, \emph{Combining the complex {L}angevin method
  and the generalized {L}efschetz-thimble method}, Journal of High Energy
  Physics \textbf{2017} (2017), 23.

\bibitem{nishimura2019complex}
J.~Nishimura and A.~Tsuchiya, \emph{Complex {L}angevin analysis of the
  space-time structure in the {L}orentzian type {IIB} matrix model}, J. High
  Energ. Phys. \textbf{2019} (2019).

\bibitem{parisi1981perturbation}
Georgio Parisi, Yong~Shi Wu, et~al., \emph{Perturbation theory without gauge
  fixing}, Sci. Sin \textbf{24} (1981), no.~4, 483--496.

\bibitem{parisi1983complex}
Giorgio Parisi, \emph{On complex probabilities}, Physics Letters B \textbf{131}
  (1983), no.~4-6, 393--395.

\bibitem{Polyakov1978thermal}
A.M. Polyakov, \emph{Thermal properties of gauge fields and quark liberation},
  Physics Letters B \textbf{72} (1978), no.~4, 477--480.

\bibitem{prokofev2010worm}
N.~Prokof'ev and B.~Svistunov, \emph{Worm algorithm for problems of quantum and
  classical statistics}, 2010.

\bibitem{rossmann2006lie}
Wulf Rossmann, \emph{Lie groups: an introduction through linear groups},
  vol.~5, Oxford University Press on Demand, 2006.

\bibitem{rudin}
Walter Rudin, \emph{Principles of mathematical analysis}, 3rd ed., McGraw-Hill,
  1976.

\bibitem{salcedo2016does}
L.~L. Salcedo, \emph{Does the complex {L}angevin method give unbiased
  results?}, Physics Review D \textbf{94} (2016), no.~11.

\bibitem{scherzer2020controlling}
M~Scherzer, E~Seiler, D~Sexty, and I-O Stamatescu, \emph{{Controlling complex
  Langevin simulations of lattice models by boundary term analysis}}, Physical
  Review D \textbf{101} (2020), no.~1, 014501.

\bibitem{scherzer2019clandboundary}
Manuel Scherzer, Erhard Seiler, D\'enes Sexty, and Ion-Olimpiu Stamatescu,
  \emph{Complex {L}angevin and boundary terms}, Phys. Rev. D \textbf{99}
  (2019), 014512.

\bibitem{seiler2020complex}
Erhard Seiler, \emph{Complex langevin: Boundary terms at poles}, Phys. Rev. D
  \textbf{102} (2020), 094507.

\bibitem{seiler2013gauge}
Erhard Seiler, D{\'e}nes Sexty, and Ion-Olimpiu Stamatescu, \emph{{Gauge
  cooling in complex Langevin for lattice QCD with heavy quarks}}, Physics
  Letters B \textbf{723} (2013), no.~1-3, 213--216.

\bibitem{white1989numerical}
Steven~R White, Douglas~J Scalapino, Robert~L Sugar, EY~Loh, James~E
  Gubernatis, and Richard~T Scalettar, \emph{{Numerical study of the
  two-dimensional Hubbard model}}, Physical Review B \textbf{40} (1989), no.~1,
  506.

\bibitem{wynen2021machine}
Jan-Lukas Wynen, Evan Berkowitz, Stefan Krieg, Thomas Luu, and Johann Ostmeyer,
  \emph{{Machine learning to alleviate Hubbard-model sign problems}}, Physical
  Review B \textbf{103} (2021), no.~12, 125153.

\end{thebibliography}
\end{document}